\documentclass{article}[fullpage,a4paper]

\usepackage{amsfonts,amsmath,amsthm,amssymb,braket,enumerate,mathtools,xcolor,float}
\usepackage[margin=2cm]{geometry}
\usepackage[backref]{hyperref}
\usepackage{comment}
\usepackage{todonotes}
\usepackage{authblk,nicefrac}

\usepackage{tikz,ifdraft,wrapfig}
\usetikzlibrary{quantikz2}
\usetikzlibrary{arrows,positioning,matrix,backgrounds,calc,fit}
\tikzcdset{
every matrix/.style={ampersand replacement=\&}
}

\usepackage{thmtools}
\usepackage{thm-restate}

\makeatletter
\renewcommand*\env@matrix[1][*\c@MaxMatrixCols c]{%
  \hskip -\arraycolsep
  \let\@ifnextchar\new@ifnextchar
  \array{#1}}
\makeatother

\newcommand{\C}{\mathbb{C}}
\newcommand{\I}{\mathbb{I}}
\newcommand{\N}{\mathbb{N}}

\newcommand{\R}{\mathbb{R}}
\newcommand{\T}{\mathbb{T}}
\newcommand{\Z}{\mathbb{Z}}
\newcommand{\crr}{\mathcal{R}}
\newcommand{\cC}{\mathcal{C}}
\newcommand{\cD}{\mathcal{D}}

\newcommand{\cL}{\mathcal{L}}
\newcommand{\cM}{\mathcal{M}}
\newcommand{\cN}{\mathcal{N}}

\newcommand{\cU}{\mathcal{U}}

\newcommand{\supp}{\textnormal{supp}}
\newcommand{\semc}{\mathcal{SC}}
\newcommand{\cDb}{\overline{\mathcal{D}}}
\newcommand{\gX}{\langle X\rangle}

\newcommand{\Sp}{\textnormal{Sp}}

\newcommand{\kz}{\ket{z}}

\newcommand{\Tr}{\textnormal{Tr}}

\newtheorem{question}{Question}
\newtheorem{theorem}{Theorem}
\newtheorem{defn}{Definition}
\newtheorem{cor}{Corollary}

\newtheorem{lemma}{Lemma}
\newtheorem{conj}{Conjecture}
\theoremstyle{remark}

\author{Nadish de Silva}
\author{Oscar Lautsch}

\affil{Department of Mathematics, Simon Fraser University, Burnaby, B.C., Canada}

\title{\vspace{-1.3cm}The Clifford hierarchy for one qubit or qudit}

\begin{document}

\date{} 
\maketitle
\vspace{0.3cm}
\begin{abstract}
The Clifford hierarchy is a nested sequence of sets of quantum gates that can be fault-tolerantly performed using gate teleportation within standard quantum error correction schemes.  The groups of Pauli and Clifford gates constitute the first and second `levels', respectively.  Non-Clifford gates from the third level or higher, such as the $T$ gate, are necessary for achieving fault-tolerant universal quantum computation.

Since it was defined twenty-five years ago by Gottesman-Chuang, two questions have been studied by numerous researchers.  First, precisely which gates constitute the Clifford hierarchy?  Second, which subset of the hierarchy gates admit efficient gate teleportation protocols?

We completely solve both questions in the practically-relevant case of the Clifford hierarchy for gates of one qubit or one qudit of prime dimension.  We express every such hierarchy gate uniquely as a product of three simple gates, yielding also a formula for the size of every level.  These results are a consequence of our finding that all such hierarchy gates can be expressed in a certain form that guarantees efficient gate teleportation.  Our decomposition of Clifford gates as a unique product of three elementary Clifford gates is of broad applicability.
\end{abstract}

\vspace{0.3cm}\tableofcontents

\newpage

\section{Introduction}
The challenges of manipulating quantum data in the presence of environmental noise necessitated the development of quantum error-correcting codes \cite{shor1996fault}.  Quantum gates can be performed directly on encoded data; doing so in a way that suppresses errors to a manageable level is the subject of quantum fault tolerance.  

The most commonly studied and utilised family of quantum error-correcting codes are stabiliser codes \cite{gottthesis}.  Members of the group of \emph{Clifford gates} are special in that they can typically be easily applied fault-tolerantly to data protected by stabiliser codes.  Quantum universality, however, further requires the ability to perform non-Clifford gates.  Gottesman-Chuang introduced the now-standard technique of \emph{gate teleportation} to fault-tolerantly implement certain non-Clifford gates using only Clifford gates supplemented with ancillary \emph{magic state} resources \cite{gottesman1999}.  In the same work, they introduced the \emph{Clifford hierarchy} (Definition \ref{ch}), a nested sequence of sets of gates that can be implemented in this way.

A significant practical barrier to achieving quantum universality using this method is the need to prepare magic states for every desired application of a non-Clifford gate.  The need to reduce the burden of this substantial resource overhead cost led to the study of more efficient protocols.  Hierarchy gates that are diagonal in the computational basis can be implemented using magic states that are half the size of the magic states required in the standard gate teleportation protocol \cite{zhou2000methodology}.  For example, the $T$ gate, which is a diagonal third-level gate, can be implemented using the magic state $(\I \otimes T) \ket{\Psi_{\text{Bell}}}$ using the standard protocol or using $T\ket{+}$ with the efficient one.  The efficient gate teleportation protocol for diagonal gates was generalised to also implement `nearly diagonal' \emph{semi-Clifford gates}, i.e.\ those gates $G$ such that  $G = C_1 D C_2$ for $C_1,C_2$ being Clifford gates and $D$ a diagonal hierarchy gate \cite{zeng2008semi}.  This decomposition of semi-Clifford gates is highly nonunique.

The important role that the Clifford hierarchy and semi-Clifford gates play in fault-tolerant quantum computation motivates two key questions.

\begin{question}
    Precisely which gates belong to the $k^{\text{th}}$ level of the Clifford hierarchy? 
\end{question}

\begin{question}
    For which $(d, n, k)$ are all $k^{\text{th}}$-level gates of $n$ qubits or qudits (of dimension $d$) semi-Clifford? 
\end{question}

\subsection{Prior work}

Research into the structure of the Clifford hierarchy and semi-Clifford gates has remained active from their discovery twenty-five years ago to the present \cite{anderson2024groups,meandimin,cui2017diagonal,oldpaper,gottesman1999,  he2024permutation, pllaha2020weyl,rengaswamy2019unifying,zeng2008semi,zhou2000methodology}.

The most progress towards answering the first question was made by Cui-Gottesman-Krishna \cite{cui2017diagonal}, who gave an explicit description of the diagonal hierarchy gates in terms of certain polynomials over finite fields.  Their results apply to both the case of qubit gates and, more generally, to qudit gates of prime dimension.

Zeng-Chen-Chuang \cite{zeng2008semi} gave a proof, aided by exhaustive computations, that all  hierarchy gates of one or two qubits are semi-Clifford.  They also showed that there exist three-qubit gates in the $k^{\text{th}}$ level that are \emph{not} semi-Clifford whenever $k \geq 4$.  Beigi-Shor \cite{beigi2009c3} reported a construction of Gottesman-Mochon showing that there exist $n$-qubit gates in the third level that are not semi-Clifford whenever $n \geq 7$.


More recent work of the first author \cite{oldpaper} generalised the efficient gate teleportation protocol to higher prime dimensions and initiated the study of qudit semi-Clifford gates.  In this work, it was proved that all third-level gates of one qudit are semi-Clifford.  Further work of the first author and Chen utilised tools of algebraic geometry to establish that all third-level gates of two qudits are semi-Clifford \cite{meandimin}.  

The result of the present work is the first to allow both $d$ and $k$ to vary.

\subsection{Summary of main results}

In this work, we give a complete answer to Questions 1 and 2 in the case of one qubit or qudit of prime dimension.

\begin{itemize}
    \item We show that all one-qudit gates from any level of the Clifford hierarchy are semi-Clifford (Theorem \ref{thm: Proof of semi-Clifford conjecture}).
    \item To establish the above result, we show how to uniquely decompose a Clifford gate as a product of three elementary Clifford gates: $MDP$ (Lemma \ref{lemma: Normal Form for Clifford Gates}). Here $M$ is chosen from a finite set (see Definition \ref{defn: E and M}), $D$ is diagonal, and $P$ is a permutation gate.  This normal form for Clifford gates is of broader applicability.
    \item Combining the above two results yields a unique decomposition of any non-Clifford one-qubit or one-qudit gate from any level of the Clifford hierarchy as a product of three simple gates: $MDC$  (Theorem \ref{thm: Normal form for semi-Cliffords}).  Here, $M$ is as above, $D$ is a diagonal hierarchy gate, and $C$ is any Clifford gate; every $MDC$ is a hierarchy gate. 
    \item This yields a formula that counts the size of every level of the Clifford hierarchy (Corollary \ref{cor: size of the hierarchy}).
    
\end{itemize}

\section{Background}

Below, we review the mathematical definitions that we will require to state and prove our results. 

This work builds upon an algebraic framework for studying the Clifford hierarchy based on the discrete Stone-von Neumann theorem that was articulated in \cite{oldpaper}.  It contains a more detailed introduction to the basic concepts studied below; it explicitly treats the general multiqudit case.  Since, in this work, we are concerned with the one-qubit or one-qudit case, our presentation below is restricted accordingly for simplicity.  In this section, we assume that $d$ is an odd prime but all of the background material extends to qubits with minor modifications; we explicitly mention the qubit case modifications only in the instances where we will require them.

\subsection{Notational conventions}

Let $d \in \N$ be an odd prime.  Denote by $\omega = e^{i 2 \pi / d}$ the $d$-th primitive root of unity.  The \emph{computational basis} of $\C^d$ is the standard basis and is denoted by the kets $\kz$ for $z \in \Z_d$ where $\Z_d$ is the field of integers modulo $d$.  

The unitary group of $\C^d$ is denoted $\cU$.  When we refer to a unitary $U$ \emph{up to phase} we mean its equivalence class $[U]$ under the equivalence relation $U \sim V \iff U = e^{i\theta}V$ for some $\theta \in \R$.  

We denote by $\T$ the complex unit circle group.  For a function $f: \Z_d \to \C$ we denote by $D[f]$ the diagonal matrix with $f(z)$ as its $z^\text{th}$ diagonal entry: $D[f] \ket{z} = f(z) \ket{z}$.

\subsection{Pauli gates}

The Pauli gates form the basis of the stabiliser codes used for quantum error correction.  Quantum data are encoded as simultaneous eigenvectors of commuting sets of Pauli gates; projections onto eigenspaces of Pauli gates represent the elementary measurements of stabiliser theory.  

The \emph{basic Pauli gates} for a single qudit are $Z, X \in \cU$: $Z \ket{z} = \omega^{z} \ket{z}$ and $X \ket{z} = \ket{z +1}$ where the addition is taken modulo $d$.  These unitaries have order $d$.  They also satisfy the Weyl commutation relations: \begin{equation}Z X = \omega X Z.\end{equation}

\begin{defn}\label{paulidefn}The group of \emph{Pauli gates} on one qudit is the subgroup of $\cU$ generated by the basic Pauli gates: \begin{equation}\cC_1 = \{\omega^c Z^{p} X^{q}\ \; | \; c \in \Z_d, (p,q) \in \Z_d^{2} \}.\end{equation}
\end{defn}

\subsection{The Pauli basis}

Using the above commutation relations, it is straightforward to check that Pauli gates from different phase classes are orthogonal with respect to the Hilbert-Schmidt inner product.  For each choice of $p,q \in \Z_d$, one representative from each phase class is carefully chosen.

\begin{defn}\label{def: W(p,q)}
        The \emph{Pauli basis} for $M_d(\C)$ is a set of Pauli gates defined for each $p, q\in \Z_d$ as
    \begin{equation}W(p, q)=\omega^{-2^{-1}pq}Z^p X^q.\end{equation} 
\end{defn}


The pairs $(p,q) \in \Z_d^2$ admit a symplectic product calculated within $\Z_d$: 
\begin{equation}
\label{sympprod}
    [(p,q),(p',q')] = p \cdot q' - p' \cdot q.
\end{equation}  
The phases for the Pauli basis were chosen so that $W(p,q)^* = W(-p,-q)$ and to satisfy the commutation relation: 
\begin{equation}
\label{commpaulisymp}
    W(p,q) W(p',q') = \omega^{[(p,q),(p',q')]}   W(p',q') W(p,q).
\end{equation}

We will frequently decompose gates using the Pauli basis and collate the resulting coefficients.

\begin{defn}\label{def: Def of f_M}
    Let $M \in M_d(\C)$ be a matrix. We define the function $f_M:\Z_d^2\to \C$ by 
\begin{equation}f_M(p,q) = d^{-1} \, \Tr(W(-p,-q)M).\end{equation}
    \end{defn}

Cyclicity and linearity of trace straightforwardly yield a convolution formula for products of matrices.
     \begin{lemma}
    \label{lemma: Multiplying f_V functions}
    For any $U,V\in M_d(\C)$ and $p, q\in \Z_d$, we have 
    \begin{align}
        f_{UV}(p,q)&=\sum_{i,j\in \Z_d}\omega^{2^{-1}(qi-pj)}f_U(i,j)f_V(p-i, q-j)\\
        &=\sum_{i,j\in \Z_d}\omega^{-2^{-1}(qi-pj)}f_U(p-i,q-j)f_V(i, j).
    \end{align} 
\end{lemma}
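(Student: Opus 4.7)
The plan is to exploit the fact that the Pauli basis $\{W(p,q)\}_{(p,q)\in\Z_d^2}$ is an orthogonal basis of $M_d(\C)$ with respect to the Hilbert-Schmidt inner product, and that $f_M(p,q)$ is precisely the coefficient of $W(p,q)$ in the expansion $M = \sum_{p,q} f_M(p,q) W(p,q)$. So the convolution formula reduces to computing the product $W(i,j) W(i',j')$ in closed form and collating coefficients.

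First, I would verify the product rule
\begin{equation*}
    W(p,q)\, W(p',q') = \omega^{2^{-1}[(p,q),(p',q')]}\, W(p+p', q+q').
\end{equation*}
This follows by direct computation from $W(p,q) = \omega^{-2^{-1}pq} Z^p X^q$ together with the Weyl relation $X^q Z^{p'} = \omega^{-p'q} Z^{p'} X^q$; the carefully chosen phase in Definition \ref{def: W(p,q)} is designed so that the residual phase matches $2^{-1}$ times the symplectic form from \eqref{sympprod}. As a side consequence, using $W(-p,-q) = W(p,q)^*$ and the fact that $\Tr(Z^a X^b) = d\, \delta_{a,0}\delta_{b,0}$, I obtain the orthogonality $\Tr(W(-p,-q) W(p',q')) = d\, \delta_{p,p'}\delta_{q,q'}$, which both justifies calling $\{W(p,q)\}$ a basis and identifies $f_M(p,q)$ as the corresponding expansion coefficient.

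Next, I would expand $U = \sum_{i,j} f_U(i,j) W(i,j)$ and $V = \sum_{i',j'} f_V(i',j') W(i',j')$, multiply using the product rule, and change variables by setting $a = i+i'$, $b = j+j'$. Computing $[(i,j),(a-i,b-j)] = ib - ja$, this yields
\begin{equation*}
    UV = \sum_{a,b} \Bigl(\sum_{i,j} \omega^{2^{-1}(bi - aj)} f_U(i,j) f_V(a-i, b-j) \Bigr) W(a,b),
\end{equation*}
and reading off the coefficient of $W(p,q)$ gives the first identity. The second identity follows either by the symmetric argument (expanding first in $U$, then in $V$), or by the reindexing $(i,j) \mapsto (p-i, q-j)$, which negates the symplectic phase.

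There is no genuine obstacle here; the proof is essentially bookkeeping. The only mildly delicate step is keeping track of the phase $2^{-1}$ through the computation of $W(p,q)W(p',q')$, and ensuring that the sign convention in the symplectic product matches the one in \eqref{commpaulisymp} and Definition \ref{def: Def of f_M}, where $f_M$ is defined via $W(-p,-q)$ rather than $W(p,q)$.
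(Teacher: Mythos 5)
Your proposal is correct and essentially matches the paper's approach: the paper omits the proof, attributing it to ``cyclicity and linearity of trace,'' and your Pauli-basis expansion together with the product rule $W(p,q)W(p',q') = \omega^{2^{-1}[(p,q),(p',q')]}W(p+p',q+q')$ is the standard way to carry out that computation. The phase bookkeeping in your derivation of the product rule and the reindexing $(i,j)\mapsto(p-i,q-j)$ for the second identity are both correct.
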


We will be particularly concerned with the set of Pauli basis elements for which a given matrix has nonzero coefficients.  Pllaha et al. previously studied Clifford and third-level gates via their Pauli support in the qubit case \cite{pllaha2020weyl}.

\begin{defn}\label{def: pauli support}
    The \emph{Pauli support} of a matrix $M \in M_d(\C)$, denoted $\supp(f_M)$, is the set of points $(p,q)\in \Z_d^2$ for which $f_M(p,q)\neq 0$.
\end{defn}

    We can represent the Pauli support of $M\in M_d(\C)$ by creating a diagram for the \textit{phase plane} $\Z_d^2$ and highlighting the points where $f_M(p,q)$ is nonzero. An example with $d=5$ is shown below.
    \begin{figure}[H]
        \centering
        \includegraphics{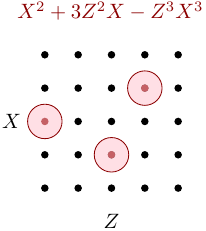}
        \caption[An Example of Pauli Supports]{The Pauli support of the linear combination $X^2+3Z^2X-Z^3X^3$. The powers of $Z$ range from $0$ to $d-1$ along the horizontal axis, while the powers of $X$ range along the vertical axis.}
        \label{fig:Pauli Support Example}
    \end{figure}

    \begin{lemma}
    A gate of order $d$ is a Pauli gate if and only if its Pauli support is of size $1$.
    \label{obs: Pauli condition on f_V functions}
\end{lemma}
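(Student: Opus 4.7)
The plan is to leverage the orthogonality of the Pauli basis $\{W(p,q)\}_{(p,q)\in \Z_d^2}$ under the Hilbert-Schmidt inner product, which tells us that every matrix $V \in M_d(\C)$ decomposes uniquely as $V = \sum_{(p,q)} f_V(p,q)\, W(p,q)$, so that $|\supp(f_V)|$ equals the number of nonzero coefficients in this expansion.

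For the forward direction, I would take an arbitrary Pauli gate $V = \omega^c Z^p X^q$ from Definition \ref{paulidefn} and rewrite it using Definition \ref{def: W(p,q)} as $V = \omega^{c + 2^{-1} pq} W(p,q)$. Orthogonality of the Pauli basis then immediately yields $f_V(p',q') = \omega^{c + 2^{-1}pq}\, \delta_{(p,q),(p',q')}$, so $|\supp(f_V)| = 1$.

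For the backward direction, suppose $|\supp(f_V)|=1$. Then the Pauli-basis expansion reduces to $V = \lambda W(p,q)$ for a unique nonzero $\lambda \in \C$ and a unique $(p,q)\in \Z_d^2$. Since $V$ and $W(p,q)$ are both unitary, $|\lambda|=1$. To conclude that $V$ lies in $\cC_1$ in the sense of Definition \ref{paulidefn}, I need $\lambda$ to be a $d$-th root of unity, and this is precisely where the order-$d$ hypothesis is used: I would first compute $W(p,q)^d = I$ by iterating the Weyl commutation relation $ZX = \omega XZ$ and noting that the accumulated phase $-pq\, d(d-1)/2$ is a multiple of $d$ for odd prime $d$. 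Then $V^d = I$ forces $\lambda^d = 1$, so $\lambda = \omega^c$ for some $c \in \Z_d$, and rewriting $\omega^c W(p,q) = \omega^{c - 2^{-1}pq} Z^p X^q$ exhibits $V$ as a Pauli gate.

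The main subtlety, rather than a genuine obstacle, is to ensure that the order-$d$ hypothesis is actually indispensable. Without it, one would only obtain $V = e^{i\theta} W(p,q)$ for some $\theta \in \R$, which is strictly weaker than membership in $\cC_1$ since Definition \ref{paulidefn} restricts the overall phase to the cyclic subgroup of $d$-th roots of unity. The odd-prime assumption enters only through the computation $W(p,q)^d = I$; in the qubit case one would need to verify this separately, in line with the paper's earlier remark about modifications needed for $d=2$.
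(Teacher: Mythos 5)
Your proof is correct. The paper states this lemma without proof, treating it as a routine observation, and your argument is the natural way to fill it in: the forward direction is immediate from writing a Pauli gate as a scalar times a single $W(p,q)$ and invoking orthogonality of the Pauli basis; the backward direction correctly uses unitarity to get $|\lambda|=1$ and then the order-$d$ hypothesis together with $W(p,q)^d = \I$ to force $\lambda$ into the cyclic group of $d$-th roots of unity, which is exactly what Definition \ref{paulidefn} requires. One tiny imprecision: the accumulated phase $-pq\,d(d-1)/2$ you cite is the exponent arising from $(Z^p X^q)^d$, while $W(p,q)^d$ also carries the additional factor $\omega^{-d\cdot 2^{-1}pq}$; but since the latter is trivially $1$ (any $d$-th root of unity raised to the $d$), your conclusion $W(p,q)^d=\I$ stands, and your observation that this step requires the odd-prime hypothesis (with a separate check in the qubit case) is well placed.
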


\subsection{Conjugate pairs}

If we conjugate $Z,X \in \cU$ by any unitary gate $G$, the gates $G_Z = GZG^*$ and $G_X = GXG^*$ also satisfy the Weyl commutation relations: ${G_Z}^d = {G_X}^d = \I$ and $G_ZG_X = \omega G_XG_Z$.  Remarkably, if $(U,V)$ is any pair of unitaries satisfying these relations, there exists a gate $G$, unique up to phase, such that $U = G_Z$ and $V = G_X$.  

\begin{defn}
An ordered pair of unitaries $(U,V) \in \cU^2$ is a \emph{conjugate pair} if
\begin{enumerate}\itemsep0em 
\item $U^d = \I$ and $V^d = \I$,
\item $UV = \omega VU$.
\end{enumerate}
\end{defn}

As mentioned, conjugate pairs are in bijective correspondence with one-qudit gates up to phase \cite[Lemma 3.4]{oldpaper}.

\begin{lemma}[Discrete Stone-von Neumann theorem, single-qudit version]\label{uniqueunitary}
Suppose $(U, V) \in \cU^2$ is a conjugate pair.  There is a unitary $G$, unique up to phase, such that $U = GZG^*$ and $V = GXG^*$.
\end{lemma}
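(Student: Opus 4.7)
The plan is to mimic the classical proof of the Stone--von Neumann theorem in finite dimensions: diagonalise $U$, use $V$ to transport eigenvectors between eigenspaces, and thereby manufacture an orthonormal basis on which $(U,V)$ act exactly as $(Z,X)$ do on the computational basis. The gate $G$ is then defined as the change-of-basis unitary.

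First I would analyse the spectrum of $U$. Because $U^d = \I$, its eigenvalues lie in $\{\omega^k : k \in \Z_d\}$. The commutation relation can be rewritten as $V^{-1} U V = \omega U$, so conjugation by $V$ carries the $\lambda$-eigenspace of $U$ to the $\omega\lambda$-eigenspace; since $V$ is a unitary, these eigenspaces are isomorphic, and hence all $d$ eigenspaces of $U$ have the same dimension. As the total dimension is the prime $d$, each eigenspace is one-dimensional and the spectrum is exactly $\{1,\omega,\ldots,\omega^{d-1}\}$. Next, I would fix a unit eigenvector $\ket{v_0}$ with $U\ket{v_0}=\ket{v_0}$ and set $\ket{v_k} = V^k\ket{v_0}$ for $k \in \Z_d$. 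A direct calculation using $UV=\omega VU$ shows $U\ket{v_k} = \omega^k \ket{v_k}$, so the $\ket{v_k}$ are eigenvectors of $U$ with distinct eigenvalues and are therefore orthonormal; the relation $V^d=\I$ makes the labelling cyclic so that $V\ket{v_k}=\ket{v_{k+1}}$ in $\Z_d$. Finally, defining $G$ by $G\ket{k}=\ket{v_k}$ gives a unitary satisfying $GZG^*=U$ and $GXG^*=V$, as can be read off from the actions on the basis vectors $\ket{v_k}$.

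For uniqueness, suppose $G$ and $G'$ both intertwine $(Z,X)$ with $(U,V)$. Then $H := G^{-1}G'$ satisfies $HZH^*=Z$ and $HXH^*=X$. Commuting with $Z$, whose eigenvalues $\omega^k$ are all distinct, forces $H$ to be diagonal in the computational basis. Commuting additionally with $X$, which cyclically permutes that basis, forces all diagonal entries to be equal, so $H$ is a scalar. Hence $G$ is unique up to a global phase.

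I do not expect a genuine obstacle: the only delicate point is verifying that each eigenspace of $U$ is one-dimensional, which relies crucially on the primality of $d$ via the equidistribution argument above. In the qubit case $d=2$ the reasoning is identical; the only modifications needed elsewhere in the framework concern the phase conventions chosen in Definition \ref{def: W(p,q)}, which do not enter this particular argument.
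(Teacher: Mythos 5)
Your argument is correct and is the standard finite-dimensional Stone--von Neumann proof: diagonalise $U$, use $V$ to cycle through eigenspaces, read off a basis on which $(U,V)$ act as $(Z,X)$, and then deduce uniqueness from the fact that any operator commuting with both $Z$ and $X$ is a scalar. The paper does not reprove this lemma; it cites \cite[Lemma~3.4]{oldpaper}, and your proof matches what that reference establishes.

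One small inaccuracy in your closing remarks: the primality of $d$ is not what makes the equidistribution argument work. Since $\omega$ is a \emph{primitive} $d$-th root of unity, multiplication by $\omega$ acts transitively on the full set $\{\omega^k : k \in \Z_d\}$ for any $d\ge 2$, so all $d$ eigenspaces of $U$ have equal dimension and therefore dimension $1$ regardless of whether $d$ is prime. Primality plays an essential role elsewhere in the paper (e.g.\ so that $\Z_d$ is a field and symplectic machinery over $\Z_d$ applies), but it is not needed for this particular lemma in the single-qudit case; your proof as written would go through unchanged for composite $d$ as well.
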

We call $(GZG^*, GXG^*)$ the pair \textit{corresponding} to the gate $G$.  Below, we study gates via their conjugate pairs.

\subsection{Clifford gates}

Clifford gates typically have a simple fault-tolerant implementation on stabiliser codes.  

\begin{defn}\label{cliffdefn}The group of \emph{Clifford gates} is the normaliser of the group of Pauli gates as a subgroup of $\cU$: \begin{equation}\cC_2 = \{C \in \cU \;|\; C \cC_1 C^* \subseteq \cC_1 \}.\end{equation}
\end{defn}


\begin{defn}A gate $G_1 \in \cU$ is \emph{Clifford-conjugate} to $G_2 \in \cU$ if there exists a Clifford gate $C \in \cC_2$ such that $G_1 = CG_2C^*$.
\end{defn}

\begin{defn}\label{defn: sim cliff conj}A conjugate pair $(U,V) \in \cU^2$ is \emph{jointly Clifford-conjugate} to a conjugate pair $(U',V') \in \cU^2$ if there exists a Clifford gate $C \in \cC_2$ such that $U' = CUC^*$ and $V' = CVC^*$.
\end{defn}

\noindent

We define the group of Clifford gates up to phase: $[\cC_2] = \cC_2 / \mathbb{T}$.  Clifford gates of one qubit or qudit, up to phase, are in correspondence with affine symplectic transformations of $\Z_d^{2}$ \cite{gross2006hudson, pllaha2021clifford}. 

The group of symplectic linear transformations $\Sp(1,\Z_d)$ coincides with $\textnormal{SL}(2,\Z_d)$.  The group $\Sp(1,\Z_d) \ltimes \Z_d^{2}$ of affine symplectic transformations of $\Z_d^{2}$ are pairings of $2 \times 2$ symplectic matrices over $\Z_d$ and translations in $\Z_d^{2}$ with the composition law: \begin{equation}(A,v) \circ (B,w) = (AB, Aw + v).\end{equation}  There is a projective representation $\rho: \Sp(1,\Z_d) \ltimes \Z_d^{2} \to [\cC_2]$ that is an isomorphism between the groups of affine symplectic transformations and Clifford gates up to phase.

We will require an explicit representation of the symplectic group given by Neuhauser \cite{neuhauser2002explicit}; we have slightly changed convention by associating a pair $(p,q) \in \Z_d$ with $Z^p X^q$ rather than $X^{-p} Z^q$.  First, note that $\Sp(1,\Z_d)$ is generated by the set of matrices of the form   
    \begin{align}
    \qquad 
        \begin{pmatrix}
            a^{-1} & 0\\
            0 & a
        \end{pmatrix},
        \qquad \text{and}
        \qquad
        \begin{pmatrix}
            1 & b\\
            0 & 1
        \end{pmatrix},
        \qquad \text{and}
        \qquad
        \begin{pmatrix}
            0 & 1\\
            -1 & 0
        \end{pmatrix},
    \end{align}
    where $a\in \Z_d^*$ and $b\in \Z_d$.

\begin{theorem}[Neuhauser's representation of $\Sp(1, \Z_d)$]
    \label{Thm: Neuhauser's representation}
       There is a projective representation $\mu:\Sp(1, \Z_d)\to[\cC_2]$, defined by sending generators of $\Sp(1, \Z_d)$ to the following gates, up to phase:
       \begin{itemize}
       \item For all $a\in \Z_d^*$, $\mu\left(\begin{psmallmatrix}
            a^{-1} & 0\\
            0 & a
        \end{psmallmatrix}\right)$ is the Clifford permutation gate that sends $\ket{z}$ to $\ket{az}$.
           \item  For all $b\in \Z_d$, $\mu\left(\begin{psmallmatrix}
            1 & b\\
            0 & 1
        \end{psmallmatrix}\right)$ is the diagonal gate $D[f]$, where in the qudit case $f:\Z_d\to \C$ is the map $f(z)~=~\omega^{2^{-1}bz^2}$.
        In the qubit case, we instead have $f(z)=i^{bz}$, as noted in \cite{pllaha2021clifford}.
        \item $\mu\left(\begin{psmallmatrix}
            0 & 1\\
            -1 & 0
        \end{psmallmatrix}\right)$ is the \emph{Hadamard gate} $H$ with $H_{ij} = d^{-2^{-1}}\omega^{ij}$.
       \end{itemize}
    \end{theorem}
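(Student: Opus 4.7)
The plan is to construct $\mu$ on all of $\Sp(1, \Z_d)$ in a single step using the discrete Stone--von Neumann theorem (Lemma \ref{uniqueunitary}), and then verify in three short conjugation calculations that the three listed generator gates realise the resulting symplectic action on the pair $(Z, X)$.

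For any $A \in \Sp(1, \Z_d)$, preservation of the symplectic form gives $[A e_1, A e_2] = [e_1, e_2] = 1$, so the commutation relation \eqref{commpaulisymp} yields
\begin{equation}
W(A e_1) W(A e_2) = \omega^{[A e_1, A e_2]} W(A e_2) W(A e_1) = \omega \, W(A e_2) W(A e_1).
\end{equation}
A direct induction (using $d$ odd, so $d \mid d(d-1)/2$) gives $W(p, q)^d = \I$ for all $(p, q)$. Hence $(W(A e_1), W(A e_2))$ is a conjugate pair, and Lemma \ref{uniqueunitary} furnishes a unique-up-to-phase unitary $\mu(A)$ satisfying $\mu(A) Z \mu(A)^\ast = W(A e_1)$ and $\mu(A) X \mu(A)^\ast = W(A e_2)$. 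Since $\mu(A)$ conjugates both generators of $\cC_1$ into $\cC_1$, it is Clifford (Definition \ref{cliffdefn}). The assignment $A \mapsto [\mu(A)]$ is a group homomorphism: for any $A, B$, both $\mu(A)\mu(B)$ and $\mu(AB)$ realise the conjugate pair $(W(A B e_1), W(A B e_2))$, so by the uniqueness clause of Lemma \ref{uniqueunitary} they agree up to phase.

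It remains to verify the stated values on the three generator types. For the permutation gate $P$ with $P \ket{z} = \ket{a z}$, a direct computation gives $P Z P^\ast = Z^{a^{-1}} = W(a^{-1}, 0)$ and $P X P^\ast = X^{a} = W(0, a)$, matching the columns of $\begin{psmallmatrix} a^{-1} & 0 \\ 0 & a \end{psmallmatrix}$. For the diagonal gate $D[f]$ with $f(z) = \omega^{2^{-1} b z^2}$, the identity $(z+1)^2 - z^2 = 2z + 1$ gives $D[f] X D[f]^\ast = \omega^{-2^{-1} b} Z^b X = W(b, 1)$, while $D[f]$ trivially commutes with $Z$, matching $\begin{psmallmatrix} 1 & b \\ 0 & 1 \end{psmallmatrix}$. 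For the Hadamard gate, the standard Fourier-type identities $H Z H^\ast = X^{-1} = W(0, -1)$ and $H X H^\ast = Z = W(1, 0)$ match $\begin{psmallmatrix} 0 & 1 \\ -1 & 0 \end{psmallmatrix}$.

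The main obstacle is phase bookkeeping: each of the three identities reduces to a one-line computation, but the cancellations rely essentially on the chosen Pauli basis $W(p, q) = \omega^{-2^{-1} p q} Z^p X^q$, so that the conjugate lies exactly on the Pauli basis rather than merely inside the correct phase class. In the qubit case $2^{-1}$ is undefined modulo $2$, so the diagonal generator must be replaced by $D[f]$ with $f(z) = i^{b z}$, and phases must be tracked in the group $\Z_4$ of fourth roots of unity, as indicated in \cite{pllaha2021clifford}. Once the three generator-level verifications are complete, the homomorphism property of $\mu$ on all of $\Sp(1, \Z_d)$ is automatic from the Stone--von Neumann construction, so no separate verification of presentation relations among the generating matrices is needed.
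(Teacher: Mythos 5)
Your proof is correct in outline, and it takes a genuinely different, more conceptual route than the one the paper relies on. The paper does not reprove this theorem --- it is cited from Neuhauser \cite{neuhauser2002explicit}, whose construction defines unitaries on the generating matrices and verifies the group-presentation relations of $\Sp(1,\Z_d)$ by direct computation. You instead construct $\mu$ on the whole group at once via the discrete Stone--von Neumann theorem (Lemma \ref{uniqueunitary}), so projective multiplicativity is automatic from the uniqueness clause and no generator-relation bookkeeping is required. This fits naturally with the paper's conjugate-pair framework and buys a much shorter homomorphism argument.

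One step is glossed over and should be made explicit. The homomorphism argument asserts that $\mu(A)\mu(B)$ realises the conjugate pair $(W(ABe_1), W(ABe_2))$; unwinding this requires $\mu(A)\,W(Be_i)\,\mu(A)^\ast = W(ABe_i)$, whereas the Stone--von Neumann construction directly gives only $\mu(A) Z \mu(A)^\ast = W(Ae_1)$ and $\mu(A) X \mu(A)^\ast = W(Ae_2)$. The missing ingredient is the covariance identity $\mu(A)W(p,q)\mu(A)^\ast = W(A(p,q))$ for \emph{all} $(p,q)\in\Z_d^2$, which is precisely \eqref{eqn: conjugating a Pauli by a Clifford}. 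It follows by writing $\mu(A)W(p,q)\mu(A)^\ast = \omega^{-2^{-1}pq}\,W(Ae_1)^p\,W(Ae_2)^q$ and invoking the composition law $W(u)W(v) = \omega^{2^{-1}[u,v]}W(u+v)$ together with $[Ae_1,Ae_2]=1$; the two $\omega^{\pm 2^{-1}pq}$ phases then cancel exactly because $\det A = 1$. Once this is spelled out the rest of your argument, including the three generator verifications, is complete. The qubit variant is only flagged rather than verified, but that is consistent with the paper's own handling of the $d=2$ modifications.
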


We can use $\mu$ to define the isomorphism $\rho: \Sp(1,\Z_d) \ltimes \Z_d^{2} \to [\cC_2]$ by:

\begin{equation}\rho((S, (p,q))) = [W(p,q)\mu(S)].\end{equation}

\begin{defn}
    A \emph{symplectic Clifford gate} $C \in \cC_2$ is one such that there exists $S \in \Sp(1, Z_d)$ such that $\mu(S) = [C]$.
\end{defn}

If $C$ is a symplectic Clifford gate and $S\in \Sp(1, \Z_d)$ is the symplectic transformation corresponding to $[C]$ under Neuhauser's representation, then for all $p,q\in \Z_d$,
    \begin{align}\label{eqn: conjugating a Pauli by a Clifford}
        CW(p,q)C^*=W(S(p,q)).
    \end{align}

The group of Clifford gates is a maximal nondense subgroup of the unitary group, and thus, approximately performing arbitrary computations requires the ability to fault-tolerantly perform a non-Clifford gate.

\subsection{The Clifford hierarchy}

The standard choices for non-Clifford gates to supplement the group of Clifford gates in order to achieve universal quantum computation come from the \emph{Clifford hierarchy} \cite{gottesman1999}.  This is a recursively-defined and nested sequence of subsets of $\cU$.  The groups of Pauli and Clifford gates form the first and second \emph{levels} respectively: $\cC_1 \subset \cC_2 \subset \cC_3 \subset ...$.  

Non-Clifford gates $G$ that are in the third level or higher can be implemented fault-tolerantly on encoded data indirectly via a gate teleportation protocol.  Such a protocol takes as input an arbitrary data state $\ket{\psi}$ and a resource magic state $(\I \otimes G) \ket{\Psi_{\text{Bell}}}$ and as output produces, using only gates from lower levels of the hierarchy and standard measurements, $G \ket{\psi}$: the data state with the desired gate applied to it.

\begin{defn}[Gottesman-Chuang, 1999]\label{ch}The \emph{Clifford hierarchy} is an inductively defined sequence of sets of gates.  For $k \geq 2$, the \emph{$k\textsuperscript{th}$ level of the Clifford hierarchy} is the set:  \begin{equation}\cC_k = \{G \in \cU \;|\; G \cC_1 G^* \subseteq \cC_{k-1}\}.\end{equation}
\end{defn}

While the first two levels form groups, higher levels do not.  The sets $\cC_k$ are closed under left or right multiplication by Clifford gates: for $k \geq 2$, $\cC_2 \, \cC_k \, \cC_2 = \cC_k$ \cite{zeng2008semi}. 

The following definition is used to characterise those conjugate pairs that correspond to gates of the Clifford hierarchy \cite[Theorem 3.11]{oldpaper}.

\begin{defn}\label{def: k-closed}
A conjugate pair $(U,V) \in \cU^2$ is  \emph{$k$-closed} if it generates a group of $k\textsuperscript{th}$-level gates.  Equivalently: \begin{equation}\{U^p V^q \;|\; (p,q) \in \Z_d\} \subseteq \cC_k.\end{equation} 
\end{defn}

\begin{theorem}\label{Ckfromtuples}
Gates of the $k+1\textsuperscript{th}$ level of the Clifford hierarchy, up to phase, are in bijective correspondence (via Lemma \ref{uniqueunitary}) with $k$-closed conjugate pairs.
\end{theorem}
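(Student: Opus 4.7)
The plan is to unpack the recursive definition of the Clifford hierarchy and observe that it matches the $k$-closedness condition exactly when passed through the Stone--von Neumann correspondence of Lemma \ref{uniqueunitary}. The bijection itself --- between conjugate pairs and unitaries up to phase --- is already handed to us by that lemma; the substantive content of the theorem is therefore the assertion that, under this bijection, the set of $k$-closed pairs matches $[\cC_{k+1}]$.

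Concretely, I would start from a conjugate pair $(U,V)$ and the corresponding $G$ from Lemma \ref{uniqueunitary}, with $U = GZG^*$ and $V = GXG^*$. Since the basic Paulis satisfy $Z^d = X^d = \I$ and $ZX = \omega XZ$, conjugation by $G$ gives
\[
G \cC_1 G^* = \{\omega^c U^p V^q \;|\; c \in \Z_d, (p,q) \in \Z_d^2\}.
\]
If $(U,V)$ is $k$-closed, then every $U^p V^q$ lies in $\cC_k$; since $\cC_k \supseteq \cC_1$ contains the scalars $\omega^c \I$ and is closed under left multiplication by Clifford gates, the entire set $G \cC_1 G^*$ lies in $\cC_k$, so $G \in \cC_{k+1}$ by Definition \ref{ch}. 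Conversely, if $G \in \cC_{k+1}$, then $U^p V^q = G Z^p X^q G^* \in G \cC_1 G^* \subseteq \cC_k$ for every $(p,q)$, which is exactly $k$-closedness.

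To finish, I would observe that the assignment $[G] \mapsto (GZG^*, GXG^*)$ is well-defined on phase equivalence classes (the phase of $G$ cancels), that the forward direction above shows every $k$-closed pair is realised by some $[G] \in [\cC_{k+1}]$, and that both the injectivity of the assignment and the fact that its image automatically satisfies the conjugate pair axioms are the content of Lemma \ref{uniqueunitary}. No step presents a real obstacle: the computation $G\cC_1 G^* = \{\omega^c U^p V^q\}$ is routine, and the only mildly nontrivial ingredient is that $\cC_k$ is closed under multiplication by $d$-th roots of unity, which follows immediately from $\omega^c \I \in \cC_1 \subseteq \cC_k$ together with the closure property $\cC_2 \, \cC_k = \cC_k$.
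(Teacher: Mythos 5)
Your argument is correct. It is worth noting that the paper itself does not prove this theorem: it is cited as Theorem 3.11 of the earlier reference \cite{oldpaper}. Your proof is the natural one: conjugation gives $G Z^p X^q G^* = U^p V^q$, so $G\cC_1 G^* = \{\omega^c U^p V^q\}$, and $G \in \cC_{k+1}$ is literally the condition $G\cC_1 G^* \subseteq \cC_k$, which (since $\omega^c \I \in \cC_1 \subseteq \cC_2$ and $\cC_2\cC_k = \cC_k$) is equivalent to $\{U^p V^q\} \subseteq \cC_k$, i.e.\ $k$-closedness. Lemma \ref{uniqueunitary} then supplies the bijection $[G] \leftrightarrow (GZG^*,GXG^*)$ on phase classes, and the equivalence just established restricts it to a bijection between $[\cC_{k+1}]$ and $k$-closed pairs. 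All the closure facts you invoke ($\omega^c\I\in\cC_1$, $\cC_2\,\cC_k\,\cC_2=\cC_k$, well-definedness up to phase) are stated in the paper, so there is no gap. This matches what the cited source does in spirit; there is no alternative route here because the statement really is just an unwinding of Definition \ref{ch} through the Stone--von Neumann correspondence.
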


\subsection{Diagonal hierarchy gates}

Of particular interest are the \emph{diagonal} Clifford hierarchy gates.  Every level restricted in this way forms a group.

\begin{defn}\label{diagkdefn}The group of \emph{phaseless diagonal $k\textsuperscript{th}$-level gates} is: \begin{equation}\cD_k = \{D \in \cC_k \;|\; D \text{ is a diagonal gate and } D\ket{0} = \ket{0}\}.\end{equation}

We will denote the group of \emph{diagonal $k\textsuperscript{th}$-level gates} by $\cDb_k = \T \, \cD_k$.

\end{defn}

The condition that $D$ fixes $\ket{0}$ ensures that $\cD_k$ contains precisely one representative from every phase class.  Cui-Gottesman-Krishna \cite{cui2017diagonal} characterised the diagonal hierarchy gates exactly for any number of qubits or qudits. We give a simplified restatement of their result restricted to the one-qudit case.  This appeared, along with a concise proof of its correctness, in earlier work of the first author \cite{oldpaper}.

First, note that any integer $k \geq 1$ can be uniquely expressed as $k = (m-1)(d-1) + a$ with $a \in \{1, ..., d-1\}$.  Given $m \geq 1$, denote the $d^m$-th primitive root of unity: $\omega_m = e^{i \,  2 \pi / d^m}$.  Denote by $\crr_k$ the set of \emph{rank-$k$ polynomials}: \begin{align}
\crr_k &= \Set{ \begin{array}{l} \phi: \Z_{d^m} \to \Z_{d^m} \smallskip\\  
\phi(z) = \sum_{j=1}^{d-1} \phi_j z^j \end{array} \ | \begin{array}{l}
    \phi \text{ has degree at most } d-1, \; \phi(0) = 0\smallskip\\
    \phi_{a+1}, ..., \phi_{d-1} \equiv 0 \;\;\mathrm{ (mod }\;d) \\
  \end{array}}.
\end{align}

\begin{theorem}\label{CGK}    
$\cD_k = \{D[{\omega_m}^\phi] \;|\; \phi \in \crr_k \}$ for all $k \geq 1$.
\end{theorem}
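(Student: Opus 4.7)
The plan is to prove Theorem \ref{CGK} by induction on $k \geq 1$, combining the recursive definition of $\cC_k$ with a combinatorial analysis of how a discrete-derivative operation acts on the polynomial sets $\crr_k$. The base case $k=1$ is direct: $m = a = 1$ forces $\crr_1$ to consist of the linear polynomials $\phi(z) = \phi_1 z$ with $\phi_1 \in \Z_d$ (the higher-degree constraints $\phi_j \equiv 0 \pmod d$ collapse in $\Z_d$), and these match the phaseless diagonal Paulis $Z^{\phi_1}$ in $\cD_1$ exactly.

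For the inductive step, assume the theorem at level $k-1$. Any diagonal $D = D[f]$ commutes with $Z$, so $DZD^* = Z \in \cC_{k-1}$ is automatic, and Definition \ref{ch} reduces $D \in \cC_k$ to the single condition $DXD^* \in \cC_{k-1}$. A direct computation gives $DXD^* = D[h]\,X$ with $h(z) = \overline{f(z-1)}f(z)$; writing $f = \omega_m^\phi$, this becomes $h = \omega_m^{\nabla\phi}$ for the backward difference $\nabla\phi(z) = \phi(z) - \phi(z-1)$. Since $\cC_{k-1}$ is invariant under right-multiplication by the Clifford gate $X$, the problem reduces to checking whether the diagonal gate $D[\omega_m^{\nabla\phi}]$ lies in $\cDb_{k-1}$, i.e.\ matches some $D[\omega_{m'}^{\phi'}]$ up to phase for $\phi' \in \crr_{k-1}$, where $m'$ is the level index determined by $k-1$.

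The induction then hinges on the combinatorial claim that the map $\phi \mapsto \psi$ (the discrete derivative, with its constant term absorbed into the overall phase and rescaled to the correct root of unity) restricts to a bijection $\crr_k \to \crr_{k-1}$. Using $\nabla z^j = j z^{j-1} + (\text{lower order in } z)$, the coefficient $\psi_i$ equals $(i+1)\phi_{i+1}$ plus contributions from $\phi_\ell$ with $\ell > i+1$; since $\binom{j}{j-1} = j \in \Z_d^*$ for $1 \leq j \leq d-1$, this upper-triangular relation is invertible modulo every $d^m$, and inverting it reconstructs $\phi$ from $\psi$ with $\phi_1 \in \Z_{d^m}$ as the unique free parameter.

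The main obstacle will be the boundary case $a = 1$, where $k-1 = (m-2)(d-1) + (d-1)$ forces the index $m'$ to drop to $m-1$. Here the divisibility conditions $\phi_2, \ldots, \phi_{d-1} \in d\Z_{d^m}$ must be shown to propagate through the triangular formula so that every $\psi_i$ with $i \geq 1$ is a multiple of $d$ modulo $d^m$; one then rewrites $\omega_m^{\psi(z)} = \omega_{m-1}^{\psi(z)/d}$, and $\psi/d$, viewed as a polynomial over $\Z_{d^{m-1}}$ of degree at most $d-1$ with no further divisibility restrictions, is exactly a rank-$(k-1)$ polynomial (since $a' = d-1$). In the complementary case $a \geq 2$, the index $m$ is preserved and the divisibility constraints on $\phi_{a+1}, \ldots, \phi_{d-1}$ translate cleanly via the same triangular computation to those required on $\psi_a, \ldots, \psi_{d-2}$, giving $\psi \in \crr_{k-1}$ with $a' = a-1$. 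Triangular inversion yields the reverse inclusion in both cases, completing the induction.
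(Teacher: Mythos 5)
The paper does not prove Theorem \ref{CGK}: it is stated as a known result of Cui--Gottesman--Krishna, with the remark that a concise proof appears in the first author's earlier work \cite{oldpaper}. So there is no in-paper argument to compare against, and your sketch must be judged on its own. Your overall plan---reduce $D\in\cC_k$ to $DXD^*\in\cC_{k-1}$ using that $\cDb_{k-1}\gX$ is a group, then track how the discrete difference acts on polynomial coefficients---is the natural inductive route and is, in spirit, the one used in the cited work.

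That said, the sketch glosses over points at which a direct execution would fail. The difference $h(z)=\overline{f(z-1)}f(z)$ is taken cyclically on $\Z_d$, so $h(0)=\overline{f(d-1)}$, whereas the polynomial $\nabla\phi$ evaluated at $0$ gives $\phi(0)-\phi(-1)$; since $\phi(d-1)\not\equiv\phi(-1)\pmod{d^m}$ in general, $h\neq\omega_m^{\nabla\phi}$ as functions on $\Z_d$, and the mismatch at $z=0$ cannot simply be ``absorbed into the overall phase''---writing $h(z)=h(0)\,\omega_{m'}^{\psi(z)}$ with $\psi(0)=0$ forces a Lagrange-interpolation correction of full degree $d-1$, whose coefficients must then be checked against the divisibility constraints defining $\crr_{k-1}$. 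Second, $\phi\mapsto\psi$ cannot be a bijection $\crr_k\to\crr_{k-1}$: those sets have sizes $d^k$ and $d^{k-1}$, and you simultaneously declare $\phi_1$ to be a free parameter; the actual bookkeeping involves the phase $h(0)$ and the constraint $\det(D[h])=1$. Third, for the containment $\cD_k\subseteq\{D[\omega_m^\phi]\}$ you must reconstruct $f(z)=h(0)^z\,\omega_{m'}^{\Psi(z)}$ with $\Psi(z)=\sum_{j\le z}\psi(j)$ and show this is $\omega_m^{\phi(z)}$ for some $\phi\in\crr_k$; this needs (a) that $h(0)$ is actually a $d^m$-th (not merely $d^{m+1}$-th) root of unity, which uses $\sum_{j=0}^{d-1}j^i\equiv 0\pmod d$ for $1\le i\le d-2$, and (b) a degree reduction, since the discrete antiderivative $\Psi$ a priori has degree $d$. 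None of these appear in the sketch, and the $a=1$ boundary case compounds all three. The architecture is sound, but the cyclic-boundary, degree, and phase bookkeeping are genuinely part of the proof, not routine verifications.
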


One of our main results below (Theorem \ref{thm: Normal form for semi-Cliffords}) extends this classification from $\cD_k$ to all of $\cC_k$.

\subsection{Semi-Clifford gates}

Zhou-Leung-Chuang \cite{zhou2000methodology} introduced a simplified gate teleportation protocol, based on Bennett-Gottesman's one-qubit teleportation, capable of fault-tolerantly implementing certain qubit Clifford hierarchy gates using half the ancillary resources required in the original Gottesman-Chuang protocol.  This class of gates includes the diagonal Clifford hierarchy gates.  Zeng-Chen-Chuang \cite{zeng2008semi} introduced the notion of semi-Clifford gates which are `nearly diagonal' in the sense of being within Clifford corrections of diagonal gates:

\begin{defn}\label{sc}A gate $G \in \cC_k$ is \emph{semi-Clifford} if $G = C_1 D C_2$ where $C_1,C_2 \in \cC_2$ and $D$ is diagonal.\end{defn}

\begin{defn}For $k \geq 1$ the $k\textsuperscript{th}$-level semi-Clifford gates are: \begin{equation}\semc_k = \{G \in \cC_k \;|\; G = C_1 D C_2 \text{ where } C_1,C_2 \in \cC_2, \, D \in \cD_k\}.\end{equation}
\end{defn}

The following gate teleportation protocol for implementing the qudit semi-Clifford gate $G = C_1 D C_2$ using the magic state $\ket{M} = D\ket{+}$ was introduced in \cite[\S5(a)]{oldpaper}.  It ensures that the notion of semi-Clifford gate is still relevant in the qudit setting.

    \begin{figure}[H]
        \centering
        \begin{center}
\begin{quantikz}
\ket{0}\gategroup[wires=1,steps=3, style={dashed}]{{MAGIC STATE: $\ket{M} = D\ket{+}$}}    \& \gate{H} \& \gate{D} \& \ctrl{1} \& \gate{C_1}   \& \gate{C_1 D X^* D^* {C_1}^*} \& \qw \& \; G\ket{\psi} \\
\ket{\psi} \& \gate{C_2} \& \gate{H^2} \& \targ{} \& \qw    \& \meter{}  \vcw{-1}
\end{quantikz}
\end{center}
        \caption[Efficient gate teleportation]{The compact gate teleportation protocol for semi-Clifford gates.}
        \label{fig:Efficient gate teleportation}
    \end{figure}
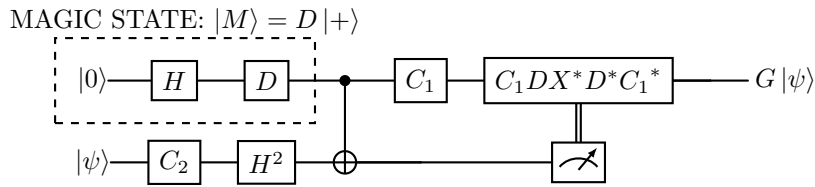

The following two lemmas are a direct consequence of Definition~\ref{sc}.

\begin{lemma}\label{cliffmultsemi}If $G \in \semc_k$ is semi-Clifford then, for any Clifford $C \in \cC_2$,  $GC$ and $CG$ are also semi-Clifford.
\end{lemma}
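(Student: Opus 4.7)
The plan is to simply unpack the definition of semi-Clifford and use that the Clifford gates form a group.

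Starting from $G \in \semc_k$, I can write $G = C_1 D C_2$ with $C_1, C_2 \in \cC_2$ and $D \in \cD_k$ by Definition \ref{sc}. For any Clifford $C \in \cC_2$, the products $C_2 C$ and $C C_1$ are again Clifford gates because $\cC_2$ is a group. Hence
\begin{equation}
GC = C_1 \, D \, (C_2 C) \qquad \text{and} \qquad CG = (C C_1) \, D \, C_2
\end{equation}
are each of the required form $C_1' D C_2'$ with $C_1', C_2' \in \cC_2$ and the same diagonal middle factor $D \in \cD_k$.

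To complete the verification, I need to check that $GC$ and $CG$ still lie in $\cC_k$ so that Definition \ref{sc} applies. This follows immediately from the closure property $\cC_2 \, \cC_k \, \cC_2 = \cC_k$ noted right after Definition \ref{ch} (cited from \cite{zeng2008semi}): since $G \in \cC_k$ and $C \in \cC_2$, both $GC$ and $CG$ belong to $\cC_k$.

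There is no genuine obstacle here; the lemma is essentially a tautology given that $\cC_2$ is a group and $\cC_k$ is closed under Clifford multiplication. The only subtlety worth flagging explicitly is that the diagonal factor $D$ is preserved unchanged by this manipulation, so if $D \in \cD_k$ in the original decomposition then the same $D$ serves in the decompositions of $GC$ and $CG$.
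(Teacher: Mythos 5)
Your proof is correct and matches what the paper intends; the paper itself does not spell out a proof, stating only that this lemma and the next are ``a direct consequence of Definition~\ref{sc}.'' Your elaboration — absorbing $C$ into the outer Clifford factor using the group structure of $\cC_2$ and invoking $\cC_2\,\cC_k\,\cC_2 = \cC_k$ to keep $GC, CG \in \cC_k$ — is exactly the intended direct consequence.
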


\begin{lemma}\label{cliffconjsemi}If $G_1 \in \cU$ is Clifford-conjugate to $G_2 \in \cU$, then $G_1$ is semi-Clifford if and only if $G_2$ is.
\end{lemma}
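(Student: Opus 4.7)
The plan is to unpack the definition of semi-Clifford directly. Suppose $G_2 \in \semc_k$, so by Definition \ref{sc} we may write $G_2 = C_1 D C_2$ with $C_1, C_2 \in \cC_2$ and $D \in \cD_k$ diagonal. By hypothesis there is a Clifford $C \in \cC_2$ with $G_1 = C G_2 C^*$. The first step is simply to substitute and reassociate:
\begin{equation}
G_1 = C \, C_1 \, D \, C_2 \, C^* = (C C_1) \, D \, (C_2 C^*).
\end{equation}
Since $\cC_2$ is a group, both $CC_1$ and $C_2C^*$ lie in $\cC_2$, and $D$ is already diagonal.

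The second step is to check the prerequisite that $G_1 \in \cC_k$, which is part of Definition \ref{sc}. This follows from the fact \cite{zeng2008semi} cited right after Definition \ref{ch} that $\cC_2 \, \cC_k \, \cC_2 = \cC_k$, so Clifford conjugation preserves every level of the hierarchy. In particular $G_1 = C G_2 C^* \in \cC_k$, and the $MDC$-style decomposition above certifies $G_1 \in \semc_k$. The converse direction is immediate by symmetry: conjugation by $C^*$ is itself conjugation by a Clifford gate, so exchanging the roles of $G_1$ and $G_2$ yields the reverse implication.

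There is essentially no obstacle here; the statement is a near-tautological consequence of (i) the definition of semi-Clifford as a product $C_1 D C_2$, (ii) closure of $\cC_2$ under multiplication, and (iii) closure of $\cC_k$ under Clifford conjugation. The only minor point of care is to observe that the diagonal middle factor $D$ is unchanged by the regrouping, so no argument about preservation of diagonality under conjugation is needed.
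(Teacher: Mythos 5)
Your proof is correct and is precisely the "direct consequence of Definition \ref{sc}" argument that the paper alludes to without writing out: substitute $G_1 = C G_2 C^*$, regroup the Cliffords around the fixed diagonal $D$, and invoke $\cC_2 \, \cC_k \, \cC_2 = \cC_k$ to preserve the hierarchy level. No comparison is needed since the approach is the same.
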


\section{Technical preliminaries}

In this section, we develop some tools and techniques necessary for our proof that are either novel or build substantially on prior work.  These have wider applicability beyond their use below.

\subsection{A normal form for Clifford gates}\label{Section: Normal From for Cliffords}
In this section, we develop a \textit{normal form} for $\cC_2$, which is a way to write each gate in $\cC_2$ uniquely as the product of three elementary Clifford gates.  For simplicity, we focus on the case of one qubit or qudit.  Our normal form extends to the multi-qubit or -qudit case; this is the subject of forthcoming work.

\begin{defn}
    A \emph{permutation gate} $P \in \cU$ is one defined by $P\ket{z} = \ket{\sigma(z)}$ for some permutation $\sigma: \Z_d \to \Z_d$.
\end{defn}

\begin{defn}
    The group of \emph{Clifford permutation gates}, denoted $\cN \subset \cU$, is the intersection of the group of Clifford gates with the group of permutation gates.
\end{defn}

Two examples include those arising from $\sigma(z) = az$ for any $a \in \Z_d^*$ (see Theorem \ref{Thm: Neuhauser's representation}) and $\sigma(z) = z + b$ for any $b \in \Z_d$ (which correspond to elements of $\gX$).  These, in fact, generate all examples.

\begin{lemma}
    Every Clifford permutation gate arises from $\sigma(z) = az + b$ for some $a \in \Z_d^*, b \in \Z_d$.
\end{lemma}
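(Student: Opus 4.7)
The plan is to extract information about $\sigma$ from the Clifford condition on $P$, by observing how $P$ conjugates the basic Pauli $X$. Since $P$ is both a permutation gate and a Clifford gate, $PXP^*$ must lie in the intersection of the Pauli group $\cC_1$ with the group of permutation gates. My first step is therefore to determine this intersection explicitly: a Pauli gate $\omega^c Z^p X^q$ acts as $\ket{z} \mapsto \omega^{c+pz}\ket{z+q}$, and for this to be a permutation gate, i.e. to have all matrix entries in $\{0,1\}$, we need $c + pz \equiv 0 \pmod{d}$ for every $z \in \Z_d$. Varying $z$ forces $p = 0$ and $c = 0$, so the only Paulis that are permutation gates are precisely the powers $X^q$ for $q \in \Z_d$.

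Next, I would compute $PXP^*$ directly in terms of $\sigma$. A short calculation gives
\begin{equation}
    PXP^*\ket{z} = P X \ket{\sigma^{-1}(z)} = P \ket{\sigma^{-1}(z)+1} = \ket{\sigma(\sigma^{-1}(z)+1)},
\end{equation}
so $PXP^*$ is the permutation gate associated to the permutation $\tau(z) = \sigma(\sigma^{-1}(z)+1)$. Combining this with the previous step, there exists some $a \in \Z_d$ with $PXP^* = X^a$, hence $\tau(z) = z + a$ for all $z$. Substituting $z = \sigma(w)$ rewrites this as the recurrence $\sigma(w+1) = \sigma(w) + a$, valid for every $w \in \Z_d$.

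Iterating the recurrence and setting $b = \sigma(0)$ yields $\sigma(w) = aw + b$ for all $w \in \Z_d$. Finally, since $\sigma$ is a permutation of $\Z_d$, the map $w \mapsto aw + b$ must be bijective, forcing $a \in \Z_d^*$. This gives the desired form $\sigma(z) = az + b$ with $a \in \Z_d^*$ and $b \in \Z_d$.

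The main obstacle is really just the first step of pinning down which Paulis are permutation gates, but this is a transparent entry-by-entry argument rather than a genuine difficulty. The qubit case ($d = 2$) proceeds identically once one notes that the analogous Pauli phase conventions still force the non-permutation phases to vanish, so no separate treatment is required.
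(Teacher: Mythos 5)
Your proof is correct and takes essentially the same approach as the paper: conjugate a basic Pauli by $P$ and use the Clifford constraint to force $\sigma$ to be affine. The paper conjugates $Z$ (getting that $PZP^*$ is a diagonal Pauli, hence $\omega^c Z^p$, hence $\sigma^{-1}$ is affine), whereas you conjugate $X$ (getting that $PXP^*$ is a permutation Pauli, hence $X^a$, and solving the resulting recurrence for $\sigma$); these are dual versions of the same idea. One small bookkeeping slip: $\omega^c Z^p X^q\ket{z} = \omega^{c+p(z+q)}\ket{z+q}$, not $\omega^{c+pz}\ket{z+q}$, but the constant $pq$ absorbs into $c$ so your conclusion that $p=0$ is unaffected.
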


\begin{proof}
    Given $P \in \cU$ arising from $\sigma: \Z_d \to \Z_d$, $P Z P^* \ket{z} = \omega^{\sigma^{-1}(z)} \ket{z}$.  Thus $PZP^*$ is diagonal; it is a Pauli gate only if it is a power of $Z$ multiplied by a power of $\omega$.  Thus $\sigma$ is an affine linear map.
\end{proof}

\begin{lemma}
    $\gX$ is a normal subgroup of $\cN$: $\gX \trianglelefteq \cN$.
\end{lemma}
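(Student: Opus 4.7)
The plan is to exploit the explicit classification of Clifford permutation gates just established: every $P \in \cN$ is determined by a pair $(a,b) \in \Z_d^* \times \Z_d$ via the affine map $\sigma_{a,b}(z) = az + b$, and the subgroup $\gX$ corresponds exactly to the pairs with $a=1$, i.e.\ the pure translations. Normality will then reduce to a one-line computation showing that conjugation of a translation by an affine map is again a translation.

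Concretely, I would take arbitrary $P \in \cN$ with $P\ket{z} = \ket{\sigma_{a,b}(z)}$ and an arbitrary generator $X^c \in \gX$, and evaluate $P X^c P^*\ket{z}$ by chasing the basis vector through the three maps. Using $\sigma_{a,b}^{-1}(z) = a^{-1}(z-b)$, one gets
\begin{equation}
P X^c P^*\ket{z} = P X^c \ket{a^{-1}(z-b)} = P \ket{a^{-1}(z-b) + c} = \ket{a(a^{-1}(z-b) + c) + b} = \ket{z + ac}.
\end{equation}
Hence $P X^c P^* = X^{ac} \in \gX$. Since the $X^c$ generate $\gX$, this shows $P \gX P^* \subseteq \gX$ for every $P \in \cN$, which is precisely normality.

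There is no real obstacle here; the content of the lemma is entirely absorbed in the affine structure of $\sigma_{a,b}$, and the calculation mirrors the familiar fact that translations form a normal subgroup of the affine group. The only thing to verify carefully is the direction of $\sigma^{-1}$ so that the $b$ terms cancel cleanly, leaving the clean formula $P X^c P^* = X^{ac}$.
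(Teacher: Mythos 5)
Your proof is correct, and it is the natural argument: the paper states this lemma without giving a proof, evidently treating it as an immediate consequence of the preceding classification of Clifford permutation gates as affine maps $\sigma_{a,b}(z)=az+b$. Your computation $P X^c P^* = X^{ac}$ makes that implicit reasoning explicit and is exactly the calculation one would supply; the inverse $\sigma_{a,b}^{-1}(z)=a^{-1}(z-b)$ is handled correctly, and since conjugation by $P$ is a group homomorphism it suffices to check it on the generator $X$, which you do.
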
 

We construct from $\cDb_k \leq \cU$ two other groups at each level of the hierarchy.

\begin{lemma}\label{lemma: DX and DN are groups}
    Let $k\in \N$, \begin{equation}\cDb_k\gX=\{DX^c \ | \ D\in \cDb_k, c\in \Z_d\}\end{equation}
    and, when $k \geq 2$,
    \begin{equation}\cDb_k\cN=\{DP \ | \ D\in \cDb_k, P\in \cN\}.\end{equation}
    Both of these subsets of $\cC_k$ are groups. Furthermore, for all applicable $k$, we have:
    \begin{align}
        \cDb_k\gX \trianglelefteq \cDb_{k+1}\gX, \qquad \cDb_k\gX\trianglelefteq \cDb_{k}\cN, \qquad \cDb_k\gX\trianglelefteq \cDb_{k+1}\cN.
    \end{align}
\end{lemma}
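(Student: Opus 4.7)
The plan is to reduce everything to two ``conjugation facts'' plus one commutator identity, after which the rest is bookkeeping. First I would check that if $D$ is diagonal and $U$ is either a power of $X$ or a Clifford permutation $P$, then $UDU^{-1}$ is still diagonal, since it merely permutes the entries of $D$. Moreover, if $D \in \cDb_k$ then $UDU^{-1} \in \cDb_k$: for $k \geq 2$ this is because $U \in \cC_2$ and $\cC_2 \cC_k \cC_2 = \cC_k$, while for $k = 1$ it is immediate from the Weyl relations, as $\cDb_1$ consists of phases times powers of $Z$. By the immediately preceding lemma, $\gX \trianglelefteq \cN$, so conjugation by any $P \in \cN$ also maps $\gX$ into itself; concretely, writing $P$ via $\sigma(z) = az + b$ gives $P X^c P^{-1} = X^{ac}$. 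Using these facts, the identities $(D_1 U_1)(D_2 U_2) = D_1 (U_1 D_2 U_1^{-1})(U_1 U_2)$ and $(DU)^{-1} = (U^{-1} D^{-1} U) U^{-1}$ immediately give closure of both $\cDb_k \gX$ and $\cDb_k \cN$ under products and inverses, so both are subgroups of $\cC_k$.

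The subgroup inclusions $\cDb_k \gX \subseteq \cDb_{k+1} \gX$, $\cDb_k \gX \subseteq \cDb_k \cN$, and $\cDb_k \gX \subseteq \cDb_{k+1} \cN$ are immediate from $\cDb_k \subseteq \cDb_{k+1}$ and $\gX \subseteq \cN$. For normality of $\cDb_k \gX$ inside $\cDb_k \cN$, I would conjugate $h = D X^c$ by $g = D' P$: the two conjugation facts replace $P D P^{-1}$ by some $\tilde D \in \cDb_k$ and $P X^c P^{-1}$ by $X^{ac}$, after which moving $X^{ac}$ past $D'^{-1}$ produces another diagonal $D'^\sharp \in \cDb_k$ (here it is crucial that $D' \in \cDb_k$, and that conjugation by $X$ preserves $\cDb_k$). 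Since diagonals commute, the prefix collapses to an element of $\cDb_k$, giving $g h g^{-1} \in \cDb_k \gX$.

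The only delicate case, and the main obstacle, is the normality $\cDb_k \gX \trianglelefteq \cDb_{k+1} \gX$. The same unwinding yields $g h g^{-1} = D' \tilde D X^c D'^{-1} = [D', X^c] \tilde D X^c$, where now $D' \in \cDb_{k+1}$ rather than $\cDb_k$, so the diagonal prefix must be shown to lie in $\cDb_k$ rather than merely in $\cDb_{k+1}$. This is exactly the defining property of the hierarchy: $D' \in \cC_{k+1}$ means $D' \cC_1 D'^* \subseteq \cC_k$, so $D' X^c D'^{-1} \in \cC_k$ and hence the commutator $[D', X^c]$ does too; as it is visibly diagonal, it lies in $\cDb_k$. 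Combining this with the previous paragraph's argument then yields the third normality $\cDb_k \gX \trianglelefteq \cDb_{k+1} \cN$. Aside from this single commutator-lowering step, which is the one place where the hierarchy's inductive structure enters essentially, the proof is routine shuffling of diagonals past powers of $X$ and permutations.
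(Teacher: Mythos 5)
Your proof is correct and essentially matches the paper's: the conjugation facts (Clifford permutations and powers of $X$ preserve $\cDb_k$, $\gX \trianglelefteq \cN$) and the hierarchy-lowering step $D'X^cD'^{-1}X^{-c}\in\cDb_k$ for $D'\in\cDb_{k+1}$ are the same core ingredients. The only cosmetic difference is that the paper observes $\cDb_{k+1}\gX$ and $\cDb_k\cN$ are both subgroups of $\cDb_{k+1}\cN$ and so performs a single normality computation in the largest group, whereas you verify the three containments case by case.
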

\begin{proof}
    That $\cDb_k\gX$, $\cDb_k\cN$ are subsets of $\cC_k$ follows from the closure of $\cC_k$ under multiplication by Clifford gates \cite{zeng2008semi}.  That they are groups, i.e. closed under inverses and products, follows straightforwardly by the fact that, for any $P \in \cN$, $PDP^* \in \cDb_k$ if and only if $D \in \cDb_k$.  

    Since the Clifford hierarchy is nested, $\cDb_k\leq \cDb_{k+1}$ for all $k$. Then as $\gX\trianglelefteq \cN$, both $\cDb_{k+1}\gX$ and $\cDb_{k}\cN$ are subgroups of $\cDb_{k+1}\cN$. Thus to show that $\cDb_k\gX$ is a normal subgroup of each of $\cDb_{k+1}\gX$, $\cDb_{k}\cN$, and $\cDb_{k+1}\cN$, it suffices to only show that $\cDb_k\gX$ is a normal subgroup of $\cDb_{k+1}\cN$. 
    
    Let $D_1X^b\in \cDb_k\gX$ and $D_2P\in \cDb_{k+1}\cN$. Then 
    \begin{align}
        (D_2P)(D_1X^b)(D_2P)^*&=D_2(PD_1P^*)(PX^bP^*)D_2^* \\
        &=D_2(PD_1P^*)X^cD_2^*\\
        &=(PD_1P^*)(D_2X^cD_2^*)\\
        &=(PD_1P^*)D'X^c\in \cDb_k\gX.
    \end{align}
    The second equality follows from $\gX \trianglelefteq \cN$.  The third equality follows from the fact that the diagonal gates $D_2$ and $PD_1P^*$ commute.  The final equality follows by noting that $D_2X^cD_2^*=D'X^c$ for some diagonal $D' \in \cDb_k$ and that $P \in \cC_2$ implies that $PD_1P^* \in \cDb_{k}$.
\end{proof}

By choosing convenient coset representatives for the subgroup $\cDb_2\cN$ of $\cC_2$, we construct a normal form for Clifford gates.

\begin{defn}\label{defn: E and M}In the qubit case, let $E \in \cU$ be the phase gate $S$; in the qudit case, let $E=D[h]$, where $h : \Z_d\to \C$ is the map $h(z) = \omega^{(z^2)}$.

We define $\cM \subset \cU$ to be the set of $d+1$ gates $\cM = \{\I\}\cup\{E^cH \ | \ c\in \Z_d\}$.
\end{defn}
 Note that by Theorem \ref{CGK}, $E$ is a generator for the group of diagonal symplectic Clifford gates. 
\begin{lemma}[Normal form for Clifford gates]
    The following is a normal form for Clifford gates:
    \begin{align}
        \cC_2=\cM \cDb_2\cN.
    \end{align}
    That is, every Clifford gate may be written uniquely, as $MDP$, where $M\in \cM$, $D\in \cDb_2$, and $P\in \cN$.
    \label{lemma: Normal Form for Clifford Gates}
\end{lemma}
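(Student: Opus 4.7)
The plan is to show that the $d+1$ elements of $\cM$ form a complete set of left coset representatives for the subgroup $\cDb_2\cN \leq \cC_2$ (which is indeed a group by Lemma \ref{lemma: DX and DN are groups}), and to combine this with $\cDb_2 \cap \cN = \{\I\}$ to obtain the unique factorisation within $\cDb_2\cN$ itself.

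I would carry this out in the symplectic picture via Neuhauser's representation (Theorem \ref{Thm: Neuhauser's representation}). First, I identify $[\cDb_2\cN]$ inside $\Sp(1,\Z_d) \ltimes \Z_d^2$ by examining generators: $[\cDb_2]$ is generated by translations $(\I,(p,0))$ (from $Z^p$) together with the diagonal symplectic gates $(\begin{psmallmatrix}1 & b \\ 0 & 1\end{psmallmatrix}, 0)$, while $[\cN]$ is generated by $(\begin{psmallmatrix}a^{-1} & 0 \\ 0 & a\end{psmallmatrix}, 0)$ and the translations $(\I, (0, b'))$ (from $X^{b'}$). Applying the composition law $(A,v)\circ(B,w)=(AB, Aw+v)$, every product lies in $B \ltimes \Z_d^2$, where $B \leq \Sp(1,\Z_d)$ denotes the Borel subgroup of upper triangular symplectic matrices. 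Conversely, every such element is achieved since each upper triangular $A$ decomposes uniquely as $\begin{psmallmatrix}1 & b \\ 0 & 1\end{psmallmatrix}\begin{psmallmatrix}a^{-1} & 0 \\ 0 & a\end{psmallmatrix}$ and any translation is obtainable. Thus $[\cDb_2\cN]$ corresponds exactly to $B \ltimes \Z_d^2$.

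Second, a quick index count: $|B| = d(d-1)$ and $|\Sp(1,\Z_d)| = |\textnormal{SL}(2,\Z_d)| = d(d^2-1)$, so $[[\cC_2]:[\cDb_2\cN]] = d+1 = |\cM|$. Third, I verify the $d+1$ elements of $\cM$ occupy distinct cosets. A short computation gives $E^c H = \mu\bigl(\begin{psmallmatrix}-2c & 1 \\ -1 & 0\end{psmallmatrix}\bigr)$ in the qudit case (the qubit analogue has $S$ in place of $E$). For any $M_1 \neq M_2 \in \cM$, the $(2,1)$-entry of the symplectic part of $M_1^{-1} M_2$ is nonzero: it is $\pm 1$ when exactly one of them equals $\I$, and a nonzero multiple of $c_1 - c_2$ when both are $E^{c_i}H$. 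Hence $M_1^{-1}M_2 \notin B \ltimes \Z_d^2$, so the cosets $M\cdot(\cDb_2\cN)$ are pairwise disjoint, and by the count they exhaust $[\cC_2]$.

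It follows that every Clifford $C$ has the form $M(DP)$ for a unique $M \in \cM$ and unique $DP \in \cDb_2\cN$. Uniqueness of the further $(D,P)$ factorisation is immediate from $\cDb_2 \cap \cN = \{\I\}$: a permutation gate has $0/1$-entries, so being simultaneously diagonal forces it to be the identity. The main technical obstacle is Step 1, recognising $\cDb_2\cN$ as the preimage of the Borel-by-translations subgroup under Neuhauser's projective isomorphism; once that identification is made, the rest is a $2\times 2$ matrix computation together with an index count. In the qubit case, Neuhauser's formula for the diagonal generator uses $i^{bz}$ in place of $\omega^{2^{-1}bz^2}$ and the role of $E$ is played by $S$, but the structure of the argument is otherwise unchanged.
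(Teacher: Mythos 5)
Your proof is correct and arrives at the same decomposition as the paper, but via a genuinely different route. Both arguments identify $[\cDb_2\cN]$ with the subgroup $B\ltimes\Z_d^2$ where $B$ is the Borel subgroup of upper-triangular matrices in $\Sp(1,\Z_d)$; the divergence is in how the factorisation $\cC_2 = \cM\,\cDb_2\cN$ is then established. The paper proceeds constructively: it exhibits, by an explicit $2\times 2$ matrix computation distinguishing the cases $s_{21}=0$ and $s_{21}\neq 0$, how any symplectic matrix decomposes uniquely either as an element of $BA$ or as $(\text{Borel})\cdot w \cdot (\text{Borel})\cdot(\text{torus})$ with $w$ the Weyl element, and pushes this Bruhat-type normal form through Neuhauser's representation. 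You instead give a nonconstructive index-and-coset argument: compute $[[\cC_2]:[\cDb_2\cN]]=d+1=|\cM|$, then verify the $d+1$ elements of $\cM$ land in pairwise-distinct cosets by inspecting the $(2,1)$-entry of the symplectic part of $M_1^{-1}M_2$, concluding by counting. Your route is shorter and cleaner conceptually, and the final coset-distinctness check overlaps exactly with the computation the paper reuses later in Lemma \ref{lemma: M's correspond to slopes}; the paper's route has the advantage of yielding explicit formulas for $M$, $D$, $P$ in terms of the entries of the underlying symplectic matrix, which is convenient if one actually wants to compute the normal form of a given Clifford gate. Your handling of uniqueness within $\cDb_2\cN$ via $\cDb_2\cap\cN=\{\I\}$ and your observation about the qubit modification ($S$ for $E$, $i^{bz}$ in place of $\omega^{2^{-1}bz^2}$) are both sound.
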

\begin{proof}

    We begin by giving a normal form for $\Sp(1, \Z_d)$. By Theorem 3.1 of \cite{neuhauser2002explicit}, $\begin{psmallmatrix}
        s_{11} & s_{12}\\ s_{21} & s_{22}
    \end{psmallmatrix}\in\Sp(1, \Z_d)$ if and only if $s_{11}s_{22}-s_{12}s_{21}=1$.  Let $\begin{psmallmatrix}
        s_{11} & s_{12}\\ s_{21} & s_{22}
    \end{psmallmatrix}\in\Sp(1, \Z_d)$. This matrix may be written as 
    \begin{align}
        \begin{pmatrix}
        s_{11} & s_{12}\\ s_{21} & s_{22}
    \end{pmatrix}=\begin{pmatrix}
            1 & e \\0  &1
        \end{pmatrix}\begin{pmatrix}
            f& 0 \\0 &f^{-1}
        \end{pmatrix}
    \end{align}
    if and only if  $s_{21}=0$, $e=s_{11}s_{12}$, and $f=s_{11}$. On the other hand, this matrix may be written as 
    \begin{align}
        \begin{pmatrix}
        s_{11} & s_{12}\\ s_{21} & s_{22}
    \end{pmatrix}=        
            \begin{pmatrix}
                1 & g \\ 0 & 1
            \end{pmatrix}\begin{pmatrix}
                0 & 1\\ -1 & 0
            \end{pmatrix}\begin{pmatrix}
            1 & e \\ 0 &1
        \end{pmatrix}\begin{pmatrix}
            f& 0 \\0 &f^{-1}
        \end{pmatrix}
    \end{align}
    if and only if $s_{21}\neq 0$, $e=s_{21}s_{22}$, $f=-s_{21}$, and $g=s_{11}s_{21}^{-1}$. Letting  $A\subseteq\Sp(1, \Z_d)$ denote the set of matrices of the form $\begin{psmallmatrix}
        f & 0 \\ 0 & f^{-1}
    \end{psmallmatrix}$ and $B\subseteq\Sp(1, \Z_d)$ denote the set of matrices of the form $\begin{psmallmatrix}
        1 & e\\ 0 & 1
    \end{psmallmatrix}$, it follows that
    \begin{align}
        \left(\{\I\}\cup B\begin{psmallmatrix}
            0 &1 \\ -1 & 0
        \end{psmallmatrix}\right)BA
    \end{align}
    is a normal form for $\Sp(1, \Z_d)$. By Theorem \ref{Thm: Neuhauser's representation}, under the Neuhauser representation, elements of $A$ correspond to symplectic Clifford permutation gates, elements of $B$ correspond to diagonal symplectic Clifford gates, and $\begin{psmallmatrix}
        0 & 1 \\ -1&0
    \end{psmallmatrix}$ corresponds to $H$. 
    
    Thus, any symplectic Clifford gate may be written uniquely in the form $MD_1P$, where $D_1$ is a diagonal symplectic Clifford gate, $P$ is a symplectic Clifford permutation gate, and $M=\I$ or $M=D_2H$ for a diagonal symplectic Clifford gate $D_2$. Since any Clifford gate may be written uniquely as a Pauli gate multiplied by a symplectic Clifford gate, it follows that any Clifford gate may be written uniquely up to phase as
    \begin{align}
        MD_1PZ^pX^q
    \end{align}
    for some $p,q\in \Z_d$. Note that $D_1PZ^pX^q\in \cDb_2\cN$, so it may be written uniquely as $D'P'$ for some $D\in \cDb_2$ and $P\in \cN$.
    Hence, any Clifford gate may be written uniquely up to phase as $MD'P'$, where $D'\in \cD_2$, $P'\in \cN$, and $M=\I$ or $M=D_2H$ for a symplectic diagonal Clifford gate $D_2$. As $E$ is a generator for the group of symplectic diagonal Clifford gates, the claimed normal form follows.
\end{proof}
An identical argument shows that $\cC_2=\cDb_2\cN \cM{^{-1}}$ is also a normal form for Clifford gates, and we will use both of these normal forms in the remainder of this work.

\subsection{Almost diagonal, simplified, and simplifiable gates}

Here, we generalise two definitions made in the two-qudit third-level case in \cite{meandimin} to higher levels; while these definitions are easily extended to more qudits, we restrict them to one qudit for simplicity.

\begin{defn}
    A hierarchy gate $G \in \cC_k$ is \emph{almost diagonal} if $G \in \cDb_k \gX$. 
\end{defn}

\begin{defn}
A hierarchy gate $G \in \cC_k$ is \emph{simplified} if its conjugate pair contains only almost diagonal gates.
\end{defn}

We will show below in Lemma \ref{lemma: DC iff simplified} that a gate is simplified if and only if it is of the form $DC$ with $D$ a diagonal hierarchy gate and $C$ a Clifford gate.

\begin{defn}
A hierarchy gate $G \in \cC_k$ is \emph{simplifiable} if it is Clifford-conjugate to a simplified gate.
\end{defn}

It will immediately follow from Lemma \ref{lemma: DC iff simplified} that a gate is semi-Clifford if and only if it is simplifiable.  This reduces our primary goal to showing that every hierarchy gate is simplifiable.

That every third-level $n$-qudit gate is simplifiable \cite[Lemma 5.7]{oldpaper} was previously shown using a classification of the maximal abelian subgroups of the symplectic group over finite fields due to Barry \cite{barry1979large}.  In Section \ref{Section: Conjugate Pairs Corresponding to Semi-Cliffords}, we will show this to be true of a one-qudit gate from any level.

\subsection{Checking Clifford-conjugate equivalence}

In the qudit case, the Pauli basis decomposition and projective representation of the Clifford group give a convenient characterisation of when two matrices are Clifford-conjugate.  This characterisation does not seem to appear explicitly in the literature, so we state it here.  It extends naturally to higher numbers of qudits.  We omit the proofs as they are relatively straightforward.

    \begin{lemma}
    Let $M, M'\in M_d(\C)$. Then $M$ is Clifford-conjugate to $M'$ if and only if there exists $(S, (p', q'))\in \Sp(1, \Z_d)\ltimes \Z_d^2$ such that, for all $p,q\in \Z_d$,
    \begin{align}
        f_{M}(p,q) = w^{[(p, q),(p',q')]} f_{M'}(S(p,q)).
    \end{align}
    \label{lemma: Clifford-conjugate correspondence in f_M function}
\end{lemma}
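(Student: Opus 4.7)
The plan is to translate the condition $M = CM'C^*$ into the language of Pauli-basis coefficients and then invoke that $\{W(p,q)\}_{(p,q)\in\Z_d^2}$ is a basis of $M_d(\C)$, so that two matrices are equal if and only if all their Pauli-basis coefficients agree.

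For the forward direction, assume $M = CM'C^*$ for some Clifford $C$. Using $\rho$, write $C = W(p',q')\mu(S)$ up to an irrelevant phase, for a unique $(S,(p',q')) \in \Sp(1,\Z_d)\ltimes\Z_d^2$. Combine the symplectic transformation of the Pauli basis under $\mu(S)$ from Equation~\ref{eqn: conjugating a Pauli by a Clifford} with the Weyl commutation relation from Equation~\ref{commpaulisymp}, which says that conjugating by $W(p',q')$ multiplies $W(S(p,q))$ by a root of unity with exponent given by the symplectic pairing. This yields a closed-form expression for $C^*\,W(-p,-q)\,C$ as a phase times a single element of the Pauli basis. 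Substituting this into
\begin{equation}
f_M(p,q) \;=\; d^{-1}\Tr\!\left(W(-p,-q)\,M\right) \;=\; d^{-1}\Tr\!\left(C^*W(-p,-q)C\cdot M'\right),
\end{equation}
by cyclicity of trace, and then recognising the resulting trace as a value of $f_{M'}$, produces the claimed identity $f_M(p,q) = \omega^{[(p,q),(p',q')]}\,f_{M'}(S(p,q))$, after checking that sign conventions align.

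For the backward direction, suppose the displayed formula holds for some $(S,(p',q'))$. Let $C$ be any representative of $\rho((S,(p',q')))$. Applying the forward computation to the matrix $CM'C^*$ shows $f_{CM'C^*}(p,q) = \omega^{[(p,q),(p',q')]}f_{M'}(S(p,q)) = f_M(p,q)$ for all $(p,q)\in\Z_d^2$. Since the Pauli basis is orthogonal in the Hilbert--Schmidt inner product, hence linearly independent, and has cardinality $d^2 = \dim M_d(\C)$, equality of all Pauli coefficients forces $M = CM'C^*$, so $M$ is Clifford-conjugate to $M'$.

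The main obstacle is purely bookkeeping: tracking the sign of the phase exponent. The symplectic pairing $[\cdot,\cdot]$ is antisymmetric, the convention $W(p,q)^* = W(-p,-q)$ flips signs on both the index in $f_M$ and in the commutator, and $\mu(S^{-1})$ versus $\mu(S)$ can interchange the roles of $p,q$ and $S(p,q)$. One must verify that these three sign sources compose so that the exponent in the final identity is exactly $[(p,q),(p',q')]$ with $S(p,q)$ (rather than $-[(p,q),(p',q')]$, or an expression naturally involving $S^{-1}$); if needed, one can absorb a residual sign by replacing $(p',q')$ with its negation, which is harmless since $(p',q')$ ranges over all of $\Z_d^2$. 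Once the forward direction is aligned, the backward direction comes for free.
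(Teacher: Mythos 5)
The paper omits the proof of this lemma as routine, so there is nothing to compare against; your plan is correct and is the natural argument. One small incompleteness in your sign discussion: writing $C = W(p',q')\mu(S)$ and computing $f_M(p,q) = d^{-1}\Tr(C^*W(-p,-q)CM')$ produces $f_M(p,q) = \omega^{-[(p,q),(p',q')]}f_{M'}(S^{-1}(p,q))$, so \emph{both} a negated exponent \emph{and} $S^{-1}$ appear. Negating $(p',q')$ handles the exponent sign, as you propose, but to turn $S^{-1}$ into $S$ you must additionally observe that $S$ ranges over all of $\Sp(1,\Z_d)$, so the substitution $S\mapsto S^{-1}$ is also an allowed reparametrisation of the existential quantifier. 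With both replacements the formula matches the statement exactly, and the reverse direction then follows as you describe from linear independence of the Pauli basis.
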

\begin{lemma}If $M, M' \in M_d(\C)$ are Clifford-conjugate, then there exists $S\in \Sp(1, \Z_d)$ that restricts to a bijection between the Pauli support of $M$ and that of $M'$.
\end{lemma}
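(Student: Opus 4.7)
The plan is to deduce this lemma as an immediate corollary of the preceding Lemma \ref{lemma: Clifford-conjugate correspondence in f_M function}. That lemma provides, from the hypothesis that $M$ is Clifford-conjugate to $M'$, a pair $(S, (p', q')) \in \Sp(1, \Z_d) \ltimes \Z_d^2$ such that
\begin{align}
    f_M(p, q) = \omega^{[(p, q), (p', q')]} f_{M'}(S(p, q))
\end{align}
holds for all $(p, q) \in \Z_d^2$. The $S$ in this pair is exactly the symplectic transformation I will use.

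The first step is to observe that the phase factor $\omega^{[(p, q), (p', q')]}$ is always a nonzero root of unity. Consequently, the displayed equation above implies the biconditional $f_M(p, q) \neq 0 \iff f_{M'}(S(p, q)) \neq 0$, or equivalently, $(p, q) \in \supp(f_M) \iff S(p, q) \in \supp(f_{M'})$.

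The second step is to promote this pointwise biconditional to a bijection of sets. Since $S \in \Sp(1, \Z_d) = \textnormal{SL}(2, \Z_d)$ is an invertible linear map on $\Z_d^2$, it restricts to an injection from $\supp(f_M)$ into $\supp(f_{M'})$ by the forward direction of the biconditional. For surjectivity, given any $(r, s) \in \supp(f_{M'})$, set $(p, q) = S^{-1}(r, s)$; applying the reverse direction of the biconditional yields $(p, q) \in \supp(f_M)$, and by construction $S(p, q) = (r, s)$.

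I do not anticipate any substantive obstacle here: the entire argument reduces to the triviality that multiplication by a nonzero scalar preserves the property of being nonzero, combined with the invertibility of symplectic matrices. The only minor subtlety worth flagging is that the affine translation part $(p', q')$ of the Clifford-conjugating transformation appears only in the phase factor and not in the argument of $f_{M'}$, so it is precisely $S$ (the linear part), not the full affine map, that realises the bijection of supports.
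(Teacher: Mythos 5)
Your proposal is correct. The paper omits the proof of this lemma as ``relatively straightforward,'' evidently intending exactly your argument: the affine translation only enters the phase factor $\omega^{[(p,q),(p',q')]}$, which is never zero, so $f_M(p,q)\neq 0$ if and only if $f_{M'}(S(p,q))\neq 0$, and the global invertibility of $S\in\Sp(1,\Z_d)$ then gives the bijection of supports.
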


\section{Main Results}\label{mainresult}

Our main technical goal is to prove, in the one-qudit case, that all hierarchy gates are semi-Clifford.  This was already shown in the one-qubit case by Zeng-Chen-Chuang \cite{zeng2008semi}.  We combine these results with the normal form for Clifford gates established above in both qubit and qudit cases, to find normal forms for the Clifford hierarchy.

\subsection{Overview}

\begin{enumerate}
    \item In Section \ref{Section: Conjugate Pairs Corresponding to Semi-Cliffords}, we establish an equivalence between a gate being semi-Clifford and simplifiable.

    \item In Section \ref{Section: semi-cliff via pauli}, we find that a one-qudit gate is semi-Clifford if and only if  the Pauli supports of its conjugate pair lie within parallel lines in the phase plane. 

    \item In Section \ref{Section: Proof of the Conjecture}, we inductively prove that every one-qudit hierarchy gate is semi-Clifford via Pauli supports.

    \item In Section \ref{Section: Structure of the hierarchy} we use  that all hierarchy gates are semi-Clifford, and our normal form for Clifford gates, to explicitly describe the Clifford hierarchy of one qubit or qudit and find a formula for the size of every level.
\end{enumerate}

\subsection{Semi-Clifford gates and simplifiability}\label{Section: Conjugate Pairs Corresponding to Semi-Cliffords}

\begin{lemma}\label{lemma: SC implies simplifiable}
    Every semi-Clifford gate is simplifiable.
\end{lemma}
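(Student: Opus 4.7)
The plan is to unpack the definition of semi-Clifford and rearrange the factorisation via Clifford-conjugation so that the resulting gate falls into the form characterised by the (forthcoming) Lemma~\ref{lemma: DC iff simplified}, namely $DC$ with $D$ a diagonal hierarchy gate and $C$ a Clifford gate. Since simplifiability is defined as being Clifford-conjugate to a simplified gate, producing any such Clifford-conjugate in $\cDb_k \cC_2$ form suffices.

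Concretely, suppose $G \in \semc_k$, so by Definition~\ref{sc} we may write $G = C_1 D C_2$ for some $C_1, C_2 \in \cC_2$ and $D \in \cDb_k$. The key observation is to conjugate $G$ by $C_1$ in order to absorb $C_1$ into the right-hand Clifford factor: setting $G' = C_1^{*} G C_1$, a one-line computation yields $G' = D(C_2 C_1)$. Since $C_2 C_1 \in \cC_2$ and $D \in \cDb_k$, the gate $G'$ has exactly the $DC$ form described by Lemma~\ref{lemma: DC iff simplified}, and is therefore simplified. By construction $G$ is Clifford-conjugate to $G'$, hence $G$ is simplifiable.

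There is essentially no obstacle to the argument beyond invoking Lemma~\ref{lemma: DC iff simplified}: that lemma supplies the substantive equivalence between the almost-diagonal condition on the conjugate pair and the $DC$ factorisation, while the present lemma reduces to the observation that Clifford-conjugation by $C_1$ collapses the two Clifford factors of a semi-Clifford factorisation onto a single side of the diagonal piece. The converse direction (every simplifiable gate is semi-Clifford) is the more substantive companion statement, but is not required here.
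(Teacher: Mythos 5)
Your algebraic reduction is exactly the paper's: conjugating $G = C_1 D C_2$ by $C_1^{*}$ gives $C_1^{*} G C_1 = D(C_2 C_1)$, so $G$ is Clifford-conjugate to a gate of the form $DC$. That part is fine.

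The gap is in how you close the argument. You invoke Lemma~\ref{lemma: DC iff simplified} to conclude that $DC$ is simplified, treating that as the ``substantive'' step and the present lemma as a mere rearrangement. But in the paper the logical dependency runs the other way: the direction of Lemma~\ref{lemma: DC iff simplified} you need (``$G = DC$ implies $G$ is simplified'') is \emph{established in the proof of the present lemma}, and the proof of Lemma~\ref{lemma: DC iff simplified} explicitly refers back to it for that direction. So your proof is circular as written --- you would be citing a lemma whose relevant half is proved by the very lemma you are trying to prove. What is genuinely missing is the argument that $DC$ is simplified, i.e.\ that both members of its conjugate pair $(DCZC^{*}D^{*}, DCXC^{*}D^{*})$ are almost diagonal. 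The paper supplies this by observing that $CZC^{*}$ and $CXC^{*}$ are Pauli and then using Lemma~\ref{lemma: DX and DN are groups} (the group structure of $\cDb_k\gX$ and $\cDb_k\cN$ and the fact that conjugation by a diagonal $D \in \cD_{k+1}$ lands in $\cDb_{k}\gX$) to place both conjugates in the required set. You need to include that step explicitly rather than delegate it to a later lemma.
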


\begin{proof}
    
Every Pauli gate is simplified.  Thus, let $k \geq 1$ and $G = C_1DC_2\in \cC_{k+1}$ be semi-Clifford, where $C_1, C_2\in \cC_2$ and $D\in \cD_{k+1}$.  Then $C_1^* G C_1 = D C$ with $C = C_2 C_1$, and Lemma \ref{lemma: DX and DN are groups} ensures that both elements of the pair corresponding to $DC$ are in $\cDb_{k-1}\gX$, so $DC$ is simplified.
\end{proof}

In this section, we show that the converse of this also holds: that every simplifiable gate is semi-Clifford.  The following sequence of lemmas allows us to precisely characterise simplified gates as those of the form $DC$; a gate is semi-Clifford if and only if is Clifford-conjugate to a gate of the form $DC$.

\begin{lemma}
    Suppose $(U, V_1)$ and $(U, V_2)$ are both conjugate pairs. Then $U$ commutes with $V_1V_2^*$.
\end{lemma}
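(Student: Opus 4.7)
The plan is to derive the commutation of $U$ with $V_1 V_2^*$ directly from the Weyl commutation relations satisfied by conjugate pairs, without invoking any deeper structural results. By definition, both relations $UV_1 = \omega V_1 U$ and $UV_2 = \omega V_2 U$ hold. The key observation is that the second relation immediately yields a dual relation for $V_2^*$ of the form $U V_2^* = \omega^{-1} V_2^* U$, since $V_2$ is unitary of order $d$ and hence $V_2^* = V_2^{-1}$; I would obtain this by multiplying $UV_2 = \omega V_2 U$ on the left and right by $V_2^*$ (or equivalently by taking adjoints of $V_2^* U^* = \omega^{-1} U^* V_2^*$ after using $U^* = U^{d-1}$ and tracking the power of $\omega$).

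With that dual relation in hand, the proof is then a one-line calculation:
\begin{equation}
U (V_1 V_2^*) = (U V_1) V_2^* = \omega V_1 (U V_2^*) = \omega V_1 (\omega^{-1} V_2^* U) = V_1 V_2^* U.
\end{equation}
The two factors of $\omega$ cancel because $V_1$ and $V_2$ each contribute a commutation factor of $\omega$ against $U$, but with opposite sign for the adjoint.

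There is no substantive obstacle here: the entire content is the observation that the Weyl relation $UV = \omega VU$ becomes $UV^* = \omega^{-1} V^* U$ upon adjoining the second argument, so the two opposite phases annihilate in the product $V_1 V_2^*$. The lemma is essentially an additivity statement for the ``phase-character'' assigned by conjugation by $U$ to elements of the group generated by $V_1, V_2$ that commute up to powers of $\omega$ with $U$. No case distinction between qubit and qudit is needed, and the argument does not require the Stone--von Neumann theorem.
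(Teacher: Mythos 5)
Your proof is correct and is the natural one-line argument the paper has in mind when it presents this lemma as an unproved ``observation'': the relation $UV_2 = \omega V_2 U$ conjugated by $V_2^*$ gives $UV_2^* = \omega^{-1} V_2^* U$, and the two phases cancel in $U(V_1 V_2^*)$. (A minor stylistic note: $V_2^* = V_2^{-1}$ follows from unitarity alone; neither $V_2^d = \I$ nor $U^* = U^{d-1}$ is actually needed.)
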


This observation allows us to show that in a conjugate pair of the form $(D_1X^{q_1}, D_2X^{q_2})$ with $D_1$ and $D_2$ diagonal, $D_2$ is almost completely determined by $D_1$.

\begin{lemma}
        Suppose that $(D_1X^{q_1}, D_2X^{q_2})$ and $(D_1X^{q_1}, D_3X^{q_2})$ are conjugate pairs, where $D_1, D_2$, and  $D_3$ are diagonal gates and ${q_1}\neq 0$. Then $D_3=\omega^c D_2$ for some $c\in \Z_d$.
    \label{cor: Diagonals uniquely determined by commutation relations}
\end{lemma}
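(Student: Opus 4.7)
The plan is to apply the preceding lemma with $U = D_1 X^{q_1}$, $V_1 = D_2 X^{q_2}$, and $V_2 = D_3 X^{q_2}$. Since $X^{q_2}(X^{q_2})^* = \I$, the product $V_1 V_2^*$ collapses to $D_2 D_3^*$, which is a diagonal matrix, call it $\Delta$. The preceding lemma then yields that $D_1 X^{q_1}$ commutes with $\Delta$; the rest of the proof is just extracting the consequences of this single commutation relation, together with the order-$d$ condition built into the definition of a conjugate pair.

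Next I would unpack the commutation relation on the computational basis. Writing $D_1 \ket{z} = d_1(z) \ket{z}$ and $\Delta \ket{z} = \delta(z) \ket{z}$, both $(D_1 X^{q_1}) \Delta \ket{z}$ and $\Delta (D_1 X^{q_1}) \ket{z}$ equal a scalar multiple of $\ket{z + q_1}$; comparing the coefficients yields $\delta(z) = \delta(z + q_1)$ for every $z \in \Z_d$. Because $d$ is prime and $q_1 \neq 0$, $q_1$ generates $\Z_d$ additively, so $\delta$ is constant. Hence $\Delta = c' \I$ for some $c' \in \T$, i.e.\ $D_2 = c' D_3$, and so $V_1 = c' V_2$.

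Finally I would pin down the allowed values of $c'$ using the first axiom of a conjugate pair. Raising $V_1 = c' V_2$ to the $d^{\text{th}}$ power and using $V_1^d = V_2^d = \I$ gives $(c')^d = 1$, so $c' = \omega^{-c}$ for some $c \in \Z_d$. Substituting back yields $D_3 = (c')^{-1} D_2 = \omega^c D_2$, completing the argument.

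I do not expect a genuine obstacle here; the argument is essentially three observations chained together. The only step that might be overlooked is the last one: without invoking $V_i^d = \I$, the scalar $c'$ produced by the commutation analysis is only a unit modulus complex number, and it is precisely the order-$d$ condition on the two conjugate pairs that upgrades $c'$ to a $d^{\text{th}}$ root of unity, as the lemma statement demands.
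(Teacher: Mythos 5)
Your proposal is correct and follows essentially the same route as the paper: invoke the preceding lemma to get that $D_1X^{q_1}$ commutes with the diagonal $D_2D_3^*$, deduce from $q_1\neq 0$ and $d$ prime that $D_2D_3^*$ is a scalar, and then use the order-$d$ condition to upgrade the scalar to a power of $\omega$. The only cosmetic difference is that you verify the commutation constraint directly on basis kets, whereas the paper first strips off $D_1$ and argues that $X^{q_1}$ alone must commute with $D_2D_3^*$; the content is identical.
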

\begin{proof}
    By the previous lemma, $D_1X^{q_1}$ must commute with $(D_2X^{q_2})(D_3X^{q_2})^*=D_2D_3^*$.   Then $X^{q_1}$ must commute with $D_2D_3^*$. As ${q_1}\neq 0$ and $d$ is prime, this implies that $D_2D_3^*$ is a constant multiple of $\I$. Since both $D_2X^{q_2}$ and $D_3X^{q_2}$ have order $d$, this constant multiple must be a power of $\omega$.
\end{proof}

The next lemma is used to construct the semi-Clifford gate corresponding to a conjugate pair of gates in $\cDb_k\gX$. 

\begin{lemma}
    Let $D'\in \cDb_k$ be a diagonal gate with $\det(D')=1$. Then for any fixed $c\in \Z_d^*$, there exists a diagonal gate $D\in \cD_{k+1}$ such that $D'=DX^{c}D^*X^{-c}$.
    \label{lemma: Any D_1 is DX^cD^*X^-c}
\end{lemma}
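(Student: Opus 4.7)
The identity $D' = DX^c D^* X^{-c}$ is, at the level of diagonal entries, a multiplicative first-order difference equation with step $c$, and the determinant hypothesis will turn out to be exactly the cocycle obstruction to solving it globally around the cyclic group. Writing $D = D[f]$ and $D' = D[g]$ with $f, g: \Z_d \to \T$, a direct computation gives $DX^c D^* X^{-c} \ket{z} = f(z)\overline{f(z-c)}\ket{z}$, so the equation to solve is $g(z) = f(z)/f(z-c)$ for all $z \in \Z_d$.

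Since $d$ is prime and $c \in \Z_d^*$, the element $c$ generates the additive group $\Z_d$. I would therefore construct $f$ by fixing $f(0) = 1$ (thereby ensuring the phaseless condition $D\ket{0} = \ket{0}$) and iterating $f(jc) := \prod_{i=1}^{j} g(ic)$ for $j = 1, \ldots, d-1$. The only nontrivial consistency check is the cycle closure at $j = d$, which demands $\prod_{z \in \Z_d} g(z) = 1$; this is precisely $\det(D') = 1$. Hence $D := D[f]$ is a well-defined diagonal unitary satisfying $D\ket{0} = \ket{0}$ and $D' = DX^c D^* X^{-c}$.

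The main step is showing $D \in \cC_{k+1}$. By Definition \ref{ch}, it suffices to check $DPD^* \in \cC_k$ for every Pauli $P$; since $D$ is diagonal and commutes with $Z$, this reduces to showing $DX^q D^* \in \cC_k$ for every $q \in \Z_d$. A direct calculation gives $DX^q D^* = D[g_q]\, X^q$, where $g_q(z) := f(z)/f(z-q)$; since $X^q$ is Clifford and $\cC_k$ is closed under Clifford multiplication, this further reduces to showing $D[g_q] \in \cDb_k$. Here the primality of $d$ is used a second time: writing $q = mc$ for some $m$, the telescoping identity
\begin{equation}
g_q(z) = \prod_{i=0}^{m-1} \frac{f(z-ic)}{f(z-(i+1)c)} = \prod_{i=0}^{m-1} g(z-ic)
\end{equation}
gives $D[g_q] = \prod_{i=0}^{m-1} X^{ic} D' X^{-ic}$. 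Each factor is a Clifford conjugate of $D' \in \cDb_k$, hence lies in $\cDb_k$ by the argument already used in Lemma \ref{lemma: DX and DN are groups}, and since $\cDb_k$ is a group the product lies there as well.

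The main obstacle is noticing the telescoping identity above, which is what converts the abstract hierarchy condition on $D$ into a repeated application of the known hierarchy condition on $D'$. Once that is in hand, the rest is bookkeeping: the determinant hypothesis supplies the global solvability of the difference equation, while $c \in \Z_d^*$ (with $d$ prime) ensures both that $f$ extends consistently to all of $\Z_d$ and that every $q$ arises as a multiple of $c$, so that the telescoping reaches every hierarchy check.
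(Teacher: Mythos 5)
Your proof is correct and constructs $D$ in essentially the same way as the paper: iterate the multiplicative difference equation around the cycle generated by $c$, using $\det(D')=1$ to close the loop. The one noteworthy difference is the verification that $D\in\cD_{k+1}$: the paper observes that $DX^cD^*=D'X^c\in\cDb_k\gX$ and that $D$ commutes with $Z$, then invokes the fact that $Z$ and $X^c$ generate the Pauli group together with the group property of $\cDb_k\gX$ (from Lemma \ref{lemma: DX and DN are groups}) to conclude $D\cC_1D^*\subseteq\cC_k$ in one stroke. You instead unwind this abstract argument explicitly, showing for each $q=mc$ that $DX^qD^*=D[g_q]X^q$ where $D[g_q]=\prod_{i=0}^{m-1}X^{ic}D'X^{-ic}$ by telescoping, then using that $\cDb_k$ is a group closed under permutation-conjugation. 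Both arguments are sound and amount to the same underlying idea (checking the hierarchy condition on Pauli generators and propagating by the group structure); the paper's version is a little shorter because it delegates the telescoping to the already-proved fact that $\cDb_k\gX$ is a group, while yours makes the mechanism visible at the cost of a slightly longer computation.
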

\begin{proof}
    Fix $c\neq 0$ in $\Z_d$. As $D'$ is a unitary diagonal, we have $D'=D[f]$ for some $f:\Z_d\to \mathbb{T}$. For any function $g(z):\Z_d\to \C$, if $D=D[g(z)]$, then $X^cD^*X^{-c}=D[g(z-c)^*]$, and so $DX^cD^*X^{-c}=D[g(z)g(z-c)^*]$. Hence, $D'=DX^{c}D^*X^{-c}$ if and only if 
    \begin{align}
       f(z)=g(z)g(z-c)^* 
       \label{eq: First eq in Any D_1 is DX^cD^*X^-c}
    \end{align}
    for all $z\in \Z_d$. We will show that a function $g$ satisfying \eqref{eq: First eq in Any D_1 is DX^cD^*X^-c} exists. Note that as $c$ is nonzero, we may replace $z$ with $cz$ in  \eqref{eq: First eq in Any D_1 is DX^cD^*X^-c} to find the equivalent equation
    \begin{align}
        f(cz)=g(cz)g(c(z-1))^*.
        \label{eq: Second eq in Any D_1 is DX^cD^*X^-c}
    \end{align}
    We construct such a $g$ by defining 
    \begin{align}
        g(cz)=\prod_{j=0}^{z}f(cj)
    \end{align}
    for all $z \in \Z_d$. This ensures that  \eqref{eq: Second eq in Any D_1 is DX^cD^*X^-c} is satisfied for all $1\leq z\leq d-1$. Furthermore, since $\det(D')=1$ the product of all the diagonal entries of $D'$ is 1. We thus have $\prod_{j=0}^{d-1}f(cz)=1$, and so $g(c(d-1))=1$. Then $g(0)g(c(d-1))^*=f(0)$, so  \eqref{eq: Second eq in Any D_1 is DX^cD^*X^-c} also holds for $z=0$. Since $f(z)\in \mathbb{T}$ for all $z\in \Z_d$, we have $g(z)\in \mathbb{T}$ for all $z$ as well, so $D=D[g]$ is unitary.
    Thus $D$ is a diagonal gate such that $D'=DX^cD^*X^{-c}$.

    To see that this $D \in \cDb_{k+1}$, note that $DX^cD^*=D'X^c\in \cDb_k\gX$. As $D$ and $Z$ commute, $D$ conjugates both $Z$ and $X^c$ into $\cDb_k\gX$. Since $Z$ and $X^c$ generate the Pauli group and $\cDb_k\gX$ forms a group, it follows that $D$ conjugates every Pauli into $\cDb_k\gX$, and so that $D\in\cDb_{k+1}$.  We choose the phase of $D$ so that $D\in\cD_{k+1}$.
\end{proof}
    The requirement that $D'$ have determinant 1 in this lemma may appear to restrict its utility. However, we will only need to apply this lemma to gates $D'$ such that $D'Z^pX^q$ appears in a conjugate pair. As we observe below, in this scenario, requiring that $D'$ have determinant 1 poses no issues.  
\begin{lemma}
    Suppose that $D'Z^pX^q$ is conjugate to $Z$, where $D'$ is a diagonal gate. Then $\det(D')=1$.
\end{lemma}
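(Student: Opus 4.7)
The plan is to prove this by a direct determinant computation, exploiting that conjugate matrices have the same determinant. The phrase ``conjugate to $Z$'' I read as $D'Z^pX^q = GZG^*$ for some unitary $G$, which is the content guaranteed by the discrete Stone-von Neumann theorem (Lemma \ref{uniqueunitary}) once one knows $D'Z^pX^q$ appears in some conjugate pair. Taking determinants of both sides immediately yields $\det(D'Z^pX^q) = \det(Z)$, and since $D'$ is diagonal and the Pauli gates $Z$, $X$ have multiplicative determinants, this reduces the entire lemma to computing $\det(Z)$ and $\det(X)$ separately.

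First I would evaluate $\det(Z)$. Since $Z$ has eigenvalues $1, \omega, \omega^2, \ldots, \omega^{d-1}$, one has
\begin{equation}
\det(Z) = \prod_{k=0}^{d-1} \omega^k = \omega^{d(d-1)/2} = (\omega^d)^{(d-1)/2} = 1,
\end{equation}
where the final equality uses that $d$ is an odd prime so that $(d-1)/2 \in \N$ and $\omega^d = 1$. Next I would evaluate $\det(X)$. The gate $X$ is the permutation matrix corresponding to the cyclic permutation $z \mapsto z+1$ of $\Z_d$, i.e.\ a single $d$-cycle, whose sign is $(-1)^{d-1} = 1$ for odd $d$. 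Hence $\det(X) = 1$ as well.

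Combining the three ingredients gives
\begin{equation}
\det(D') = \det(D') \, \det(Z)^p \, \det(X)^q = \det(D' Z^p X^q) = \det(Z) = 1,
\end{equation}
which is the claim. There is no real obstacle here; the only subtlety worth flagging explicitly is the parity argument showing $\det(Z) = 1$ and $\det(X) = 1$ in odd prime dimension, which is precisely where the hypothesis that $d$ is an odd prime enters (in the qubit case $\det(Z) = \det(X) = -1$, and the analogous statement would need to be restated with a sign correction).
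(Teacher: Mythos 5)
Your proof is correct and takes the same approach as the paper, which simply notes that the claim follows from $\det(Z)=\det(X)=1$; you have filled in the elementary verifications of those two determinant facts and the invariance of determinant under conjugation, plus the helpful observation about the qubit case where the signs change.
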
  
This follows from the fact that $\det(Z)=\det(X)=1$. We can now precisely characterise simplified gates.
\begin{lemma}\label{lemma: DC iff simplified}
    A gate $G \in \cC_{k}$ is simplified if and only if it is of the form $DC$, with $D\in \cD_{k}$ and $C\in \cC_2$.
\end{lemma}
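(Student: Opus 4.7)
The plan is to prove both implications of the biconditional.

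For the $(\Leftarrow)$ direction, suppose $G = DC$ with $D \in \cD_k$ and $C \in \cC_2$. The conjugate pair of $G$ is $(DCZC^*D^*, DCXC^*D^*)$. Since $C$ is Clifford, $CZC^*$ and $CXC^*$ are Paulis of the form $\omega^a Z^p X^q$, and since $D$ is diagonal it commutes with $Z^p$, so $D(\omega^a Z^p X^q)D^* = \omega^a Z^p (DX^qD^*)$. A direct computation with $D = D[f]$ gives $DX^qD^* = D' X^q$ where $D'$ is the diagonal $D[z \mapsto f(z+q)\overline{f(z)}]$. Moreover, $DX^qD^* \in \cC_{k-1}$ by the definition of $\cC_k$; right-multiplication by $X^{-q} \in \cC_2$, using closure of $\cC_{k-1}$ under Clifford multiplication, then shows $D' \in \cDb_{k-1}$. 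Hence $DCZC^*D^* \in \cDb_{k-1}\gX$, and the same argument handles $DCXC^*D^*$.

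For the $(\Rightarrow)$ direction, suppose $G$ is simplified with $GZG^* = D_1 X^{q_1}$ and $GXG^* = D_2 X^{q_2}$ for some $D_1, D_2 \in \cDb_{k-1}$. The Weyl relation forces $(q_1, q_2) \neq (0, 0)$, for otherwise both entries would be diagonal and commute. Because $\Z_d$ is a field, we may choose $s_{11}, s_{12} \in \Z_d$ with $s_{11} q_2 - s_{12} q_1 = 1$; set $S = \begin{psmallmatrix} s_{11} & s_{12} \\ q_1 & q_2 \end{psmallmatrix}$, let $C = \mu(S)$, and put $H = GC^*$. Using \eqref{eqn: conjugating a Pauli by a Clifford} with $S^{-1} = \begin{psmallmatrix} q_2 & -s_{12} \\ -q_1 & s_{11} \end{psmallmatrix}$ we get $C^*ZC = W(q_2, -q_1)$ and $C^*XC = W(-s_{12}, s_{11})$. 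Each power $(D_i X^{q_i})^n$ expands as a product of diagonals times $X^{nq_i}$, so $HZH^* = G(C^*ZC)G^*$ equals, up to phase, $(D_1X^{q_1})^{q_2}(D_2X^{q_2})^{-q_1}$; the $X$-exponents cancel and $HZH^*$ is diagonal. An analogous computation yields $HXH^* = E \cdot X^{s_{11}q_2 - s_{12}q_1} = EX$ for some diagonal $E$.

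The decisive step is extracting the structure of $H$ itself from its conjugate pair. Since $HZH^*$ is diagonal with the full spectrum $\{1, \omega, \dots, \omega^{d-1}\}$ of $Z$, the intertwining $HZ = (HZH^*)H$ forces $H$ to send each computational basis vector to a scalar multiple of a computational basis vector; equivalently, $H = P_\sigma D[\gamma]$ for some permutation $\sigma : \Z_d \to \Z_d$ and phases $\gamma$. The second constraint $HXH^* = EX$ then forces the permutation matrix $P_\sigma X P_\sigma^*$ to coincide with $X$, which is equivalent to $\sigma$ being a translation; hence $P_\sigma = X^b$ for some $b \in \Z_d$. Rewriting via $X^b D[\gamma] = D[\gamma(\cdot - b)] X^b$ gives $G = HC = D[\gamma(\cdot - b)] \cdot (X^b C)$, in which $X^b C \in \cC_2$ is Clifford and $D[\gamma(\cdot - b)] \in \cDb_k$ (as it equals $G(X^b C)^{-1} \in \cC_k$). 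After normalizing the global phase so that the diagonal fixes $\ket{0}$, it lies in $\cD_k$. The main obstacle is this final structural deduction: recognizing that the two almost-diagonal constraints on the conjugate pair of $H$ together pin it down to (Pauli)(diagonal). The symplectic bookkeeping that precedes it, while essential, is routine.
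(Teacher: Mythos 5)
Your proof is correct, and the forward direction takes a genuinely different route from the paper's. The paper's proof fixes $p_1,p_2$ with $p_1q_2-p_2q_1=1$, decomposes $D_i = D_i'Z^{p_i}$, invokes the explicit construction of Lemma~\ref{lemma: Any D_1 is DX^cD^*X^-c} to produce a diagonal $D\in\cD_k$ with $D_1' = DX^{q_1}D^*X^{-q_1}$, and then uses the determination Lemma~\ref{cor: Diagonals uniquely determined by commutation relations} to force the same $D$ to work for $D_2'$, finally reading off $G = DC$ from the resulting Pauli conjugate pair. You instead pick a symplectic Clifford $C = \mu(S)$ that normalizes the $X$-part of the conjugate pair, reducing to the case where $H = GC^*$ satisfies $HZH^* = $ diagonal and $HXH^* = EX$; you then observe that the first constraint (nondegenerate common spectrum of $Z$ and $HZH^*$) forces $H$ to be a monomial matrix $P_\sigma D[\gamma]$, and the second constraint pins the permutation part to $P_\sigma = X^b$. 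The trade-off: the paper's telescoping-product construction of $D$ is more explicit and is a lemma reused elsewhere, while your spectral argument is more elementary, dispensing entirely with Lemma~\ref{lemma: Any D_1 is DX^cD^*X^-c} and Lemma~\ref{cor: Diagonals uniquely determined by commutation relations}. Your backward direction likewise re-derives directly what the paper defers to Lemma~\ref{lemma: DX and DN are groups}; this is a matter of exposition, not content. One minor note: your argument, like the paper's, implicitly requires $k\geq 2$ (for $k=1$ the claim is the trivial Pauli case, which the paper dispatches in one line), and your phase bookkeeping at the end is adequate since any global phase can be absorbed into the Clifford factor.
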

\begin{proof}The case of Pauli gates is immediate; thus let $k \geq 2$.

    The proof of Lemma \ref{lemma: SC implies simplifiable} establishes that $G=DC$ is simplified. Conversely, suppose that $G$ is simplified.  Then
    \begin{equation}(U, V) = (G Z G^*, G X G^*) = (D_1X^{q_1}, D_2X^{q_2})\end{equation}
    is a conjugate pair with $U, V\in \cDb_{k-1}\gX$.
    Since $U, V$ do not commute, $q_1$ and $q_2$ cannot both be 0. Assume without loss of generality that $q_1\neq 0$; the proof of the case with $q_2\neq 0$ is completely analogous. There exist $p_1, p_2\in \Z_d$ such that $p_1q_2-p_2q_1=1$, and there exist $D_1', D_2'\in \cDb_{k-1}$ such that $D_1=D_1'Z^{p_1}$ and $D_2=D_2'Z^{p_2}$. Hence, \begin{equation}(U, V)=(D_1'Z^{p_1}X^{q_1}, D_2'Z^{p_2}X^{q_2}).\end{equation} Combining the preceding two lemmas, we have $D_1'=DX^{q_1}D^*X^{-q_1}$ for some diagonal gate $D\in\cD_{k}$. Thus
    \begin{align}
        U=D_1'Z^{p_1}X^{q_1}=DX^{q_1}D^*X^{-q_1}Z^{p_1}X^{q_1}=DZ^{p_1}X^{q_1}D^*.
    \end{align}
    As $p_1q_2-p_2q_1=1$, $(Z^{p_1}X^{q_1}, Z^{p_2}X^{q_2})$ is a conjugate pair, so \begin{equation}(DZ^{p_1}X^{q_1}D^*, DZ^{p_2}X^{q_2}D^*)=(U, DX^{q_2}D^*X^{-q_2}Z^{p_2}X^{q_2})\end{equation}
    is as well.
    By Lemma \ref{cor: Diagonals uniquely determined by commutation relations}, it follows that $D_2'=\omega^cDX^{q_2}D^*X^{-q_2}$ for some $c \in \Z_d$. Hence
    \begin{align}
        (U, V)=(DZ^{p_1}X^{q_1}D^*, D\omega^cZ^{p_2}X^{q_2}D^*).
    \end{align}
    As $(Z^{p_1}X^{q_1}, \omega^cZ^{p_2}X^{q_2})$ is a conjugate pair of Paulis, there exists a Clifford $C$ such that \begin{equation}(CZC^*, CXC^*)=(Z^{p_1}X^{q_1}, \omega^cZ^{p_2}X^{q_2}).\end{equation} Then $(U, V)=(DCZC^*D^*, DCXC^*D^*)$, so $DC$ is the gate corresponding to the pair $(U, V)$.
\end{proof}

Our desired equivalence follows immediately.
\begin{theorem}[Equivalence of semi-Cliffordness and simplifiability]
     A gate $G \in \cU$ is semi-Clifford if and only if it is simplifiable. 
    \label{cor: Pairs (CD_1X^aC^*, CD_2X^bC^*) correspond to semi-Cliffords}
\end{theorem}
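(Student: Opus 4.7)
The plan is to derive this equivalence almost directly from the two key technical lemmas already established in this subsection, namely Lemma \ref{lemma: SC implies simplifiable} and Lemma \ref{lemma: DC iff simplified}. The forward direction is exactly Lemma \ref{lemma: SC implies simplifiable}: every semi-Clifford gate is simplifiable. So the only thing left to do is the converse.

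For the converse, suppose $G \in \cC_k$ is simplifiable. By definition, there exists a Clifford gate $C_0 \in \cC_2$ such that $G' = C_0 G C_0^*$ is simplified. By Lemma \ref{lemma: DC iff simplified}, a simplified gate in $\cC_k$ can be written as $G' = DC$ with $D \in \cD_k$ and $C \in \cC_2$. Solving for $G$, we obtain
\begin{equation}
G = C_0^* G' C_0 = C_0^* D C C_0 = (C_0^*) D (C C_0),
\end{equation}
which exhibits $G$ as a product of a Clifford gate, a diagonal gate in $\cD_k$, and a Clifford gate. By Definition \ref{sc}, this means $G$ is semi-Clifford.

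There is no real obstacle here: the conceptual work has already been done in proving that simplified gates are precisely those of the form $DC$ (Lemma \ref{lemma: DC iff simplified}), and all that remains is to conjugate by the Clifford relating $G$ to its simplified form and absorb the conjugation into the two Clifford factors. I would write the proof as a short two-direction argument that cites these lemmas, so that the role of Theorem \ref{cor: Pairs (CD_1X^aC^*, CD_2X^bC^*) correspond to semi-Cliffords} as a packaging of these two results is made transparent. This equivalence is what will be leveraged in Section \ref{Section: Proof of the Conjecture}, where the inductive argument only needs to produce a simplified gate in each Clifford orbit rather than explicit Clifford factors $C_1, C_2$ and a diagonal $D$.
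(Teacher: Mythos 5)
Your proof is correct and follows essentially the same route as the paper's: the forward direction is Lemma \ref{lemma: SC implies simplifiable}, and for the converse you conjugate $G$ to its simplified form, invoke Lemma \ref{lemma: DC iff simplified} to write that form as $DC$, and then absorb the conjugating Clifford into the two outer factors of the semi-Clifford decomposition. The only difference is cosmetic notation for which side the conjugation lands on.
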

\begin{proof}
    We established one direction of this equivalence in Lemma \ref{lemma: SC implies simplifiable}. For the converse, suppose that $G = C_1 G' C_1^*$ where $G' \in \cC_k$ is simplified.  Then $G = C_1 D C C_1^*$ with $D \in \cD_k$, $C \in \cC_2$ and is thus semi-Clifford with $C_2 = C C_1^*$.
\end{proof}

\subsection{Characterising semi-Cliffordness via Pauli supports}\label{Section: semi-cliff via pauli}

Here, we study the Pauli supports of the gates in conjugate pairs $(U, V)$ arising from semi-Clifford gates. Recall, for any $M\in M_d(\C)$, the definition of $f_M$ and the Pauli support of $M$ from Definitions \ref{def: Def of f_M} and \ref{def: pauli support}. 

We begin by understanding the Pauli supports of gates in $\cDb_k\gX$. By a \textit{line} in $\Z_d^2$, we refer to a 1-dimensional affine subspace of the phase plane $\Z_d^2$. Two lines are \textit{parallel} if they are affine translates of the same 1-dimensional linear subspace of $\Z_d^2$. A line is \textit{horizontal} if it is of the form $\{(p,q_0) \ | \ p \in \Z_d\}\subseteq \Z_d^2$ for some $q_0\in \Z_d$.  


\begin{lemma}\label{lemma: Pauli supports of DX}
    Let $G \in \cU$.  Then $G = DX^{q_0}$ for some diagonal gate $D\in \cU$ and $q_0 \in \Z_d$ if and only if the Pauli support of $G$ is contained in a horizontal line $\{(p,q_0) \ | \ p\in \Z_d\}\subseteq \Z_d^2$ for some $q_0\in \Z_d$.
\end{lemma}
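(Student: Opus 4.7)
The plan is to prove both directions by directly computing the Pauli-basis expansion of $G$ and exploiting the fact that $W(p,q_0) = \omega^{-2^{-1}p q_0} Z^p X^{q_0}$ is, up to a scalar, a power of $Z$ times $X^{q_0}$.

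For the forward direction, I would assume $G = DX^{q_0}$ with $D$ diagonal and compute $f_G(p,q)$ using Definition \ref{def: Def of f_M}. Since $D$ is diagonal it is a linear combination of $Z^0, Z^1, \dots, Z^{d-1}$ (the Fourier inversion formula for $\Z_d$), so $G$ is a linear combination of gates $Z^p X^{q_0}$. Concretely, expanding $W(-p,-q) = \omega^{-2^{-1}pq} Z^{-p} X^{-q}$ and using that conjugation by a power of $X$ preserves diagonality, we get
\begin{equation}
f_G(p,q) = d^{-1}\omega^{-2^{-1}pq}\,\Tr\!\left((Z^{-p} X^{-q} D X^{q}) X^{q_0 - q}\right),
\end{equation}
where the first factor inside the trace is diagonal. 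The trace of a diagonal matrix times $X^{q_0-q}$ vanishes unless $q_0 \equiv q \pmod d$, so $\supp(f_G) \subseteq \{(p,q_0) : p \in \Z_d\}$ as required.

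For the reverse direction, I would use the Pauli-basis expansion of $G$: since the $W(p,q)$ form a basis of $M_d(\C)$,
\begin{equation}
G = \sum_{(p,q) \in \Z_d^2} f_G(p,q)\, W(p,q).
\end{equation}
The hypothesis on the Pauli support collapses this sum to $q = q_0$, giving
\begin{equation}
G = \sum_{p \in \Z_d} f_G(p,q_0)\, \omega^{-2^{-1}p q_0}\, Z^p X^{q_0} = \left(\sum_{p \in \Z_d} f_G(p,q_0)\, \omega^{-2^{-1}p q_0}\, Z^p\right) X^{q_0}.
\end{equation}
The bracketed sum is a polynomial in $Z$ and hence diagonal; call it $D$. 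Because $G$ is unitary and $X^{q_0}$ is unitary, $D = G X^{-q_0}$ is automatically a unitary diagonal matrix, i.e.\ a diagonal gate, as required.

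There is no substantive obstacle here: both directions reduce to linearity of trace together with the orthonormality (up to normalisation) of the Pauli basis with respect to the Hilbert--Schmidt inner product. The only subtlety worth stating carefully is the bookkeeping of the phase $\omega^{-2^{-1}pq_0}$ that distinguishes $W(p,q_0)$ from $Z^pX^{q_0}$; this factor merely rescales the coefficients of the diagonal reconstruction of $D$ and does not affect the structural statement.
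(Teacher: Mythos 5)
Your proof is correct and follows essentially the same route as the paper's: for the forward direction, observe that $W(-p,-q)G$ is a diagonal matrix times $X^{q_0-q}$, whose trace vanishes unless $q=q_0$; for the converse, expand $G$ in the Pauli basis and factor out $X^{q_0}$. The one small thing you add that the paper leaves implicit is the observation that $D = GX^{-q_0}$ is automatically unitary since $G$ and $X^{q_0}$ are, which is needed to conclude $D$ is a diagonal \emph{gate} rather than merely a diagonal matrix.
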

\begin{proof}
    If $G=DX^{q_0}$ for some diagonal gate $D \in \cDb_k$ and $q_0\in \Z_d$, then for all $p,q\in\Z_d$, 
    \begin{equation}GW(-p, -q)=DX^{q_0}\omega^{-2^{-1}pq}Z^{-p}X^{-q}=D'X^{q_0-q}\end{equation}
    for some diagonal $D'$. If $q\neq q_0$, then the matrix $D'X^{q_0-q}$ has all zeros along its diagonal, and so has trace 0. Hence, $      f_G(p,q)=d^{-1}\Tr(GW(-p, -q))$
    is 0 as long as $q\neq q_0$. The Pauli support of $G$ is then contained in $\{(p,q_0) \ |\ p\in \Z_d\}$.

    The converse follows by noting that if $\supp(f_G) \subset \{(p,q_0) \ |\ p\in \Z_d\}$, then for some constants $c_p\in \C$,
    \begin{align}
        G=\sum_{p\in \Z_d}c_pZ^{p}X^{q_0}=\left(\sum_{p\in \Z_d}c_pZ^{p}\right)X^{q_0}.
    \end{align}\end{proof}
The Pauli supports of gates in $\cDb_k\gX$ can be visualised in Figure \ref{fig: Almost diagonal gates in the Pauli basis}.

\begin{figure}[H]
    \centering
    \includegraphics{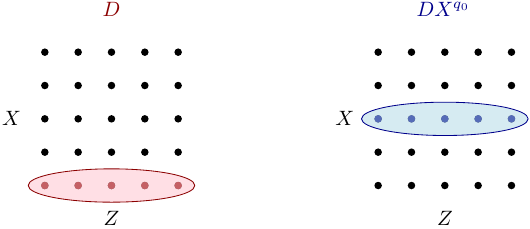}
    \caption[Gates in $\cDb_k\gX$ in the Pauli basis]{The Pauli support of a diagonal gate is contained in the linear subspace $\{(p, 0) \ | \ p\in \Z_d\}$, and in general, the Pauli support of $DX^{q_0}$ is contained in $\{(p,q_0) \ | \ p\in \Z_d\}$ if $D$ is diagonal.}
    \label{fig: Almost diagonal gates in the Pauli basis}
\end{figure}

There is an action of $\Sp(1, \Z_d)$ on the set of lines in $\Z_d^2$, with $S\in \Sp(1, \Z_d)$ sending $L$ to $S(L)$.   This induces an action of $\cC_2$ on these lines: for any $C\in \cC_2$, we may write $[C]$ uniquely as $[W(p,q)C']$ for some symplectic Clifford $C'$.  Letting the symplectic transformation corresponding to $C'$ be $S$, $C$ acts on a line $L\subseteq\Z_d^2$ by sending $L$ to $S(L)$.

Lemma \ref{lemma: Pauli supports of DX} means that horizontal lines will be especially important for us. We thus wish to understand which Clifford gates $C$ send horizontal lines to horizontal lines under the above action. This is equivalent to $C$ mapping the linear subspace $L_Z=\{(p, 0)\ | \ p\in \Z_d\}$ to itself under this action, since all horizontal lines are affine translates of $L_Z$. Finally, as $L_Z$ is spanned by $(1,0)$, this is equivalent to $C$ sending $(1, 0)$ to $(p, 0)$ for some $p\in \Z_d^*$. The Clifford gate $C$ does this if and only if the corresponding symplectic transformation $S$ is of the form
\begin{align}
    S=\begin{pmatrix}
        s_{11} & s_{12}\\
        0 & s_{22}
    \end{pmatrix}.
\end{align}
As shown in the proof of our Clifford normal form (Lemma \ref{lemma: Normal Form for Clifford Gates}), $S$ has the above form if and only if $C\in \cDb_2\cN$. We have thus proven the following lemma.
\begin{lemma}\label{lemma: DN stabilizes horizontal lines}
    Let $C\in \cC_2$. Then $C$ maps horizontal lines to horizontal lines under the above action if and only if $C\in \cDb_2\cN$.
\end{lemma}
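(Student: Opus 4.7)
The plan is straightforward given the discussion preceding the lemma, which already carries out most of the argument. The key reduction is that the stated action of $C \in \cC_2$ on lines depends only on the symplectic part $S$ of $C$: writing $[C] = [W(p', q') C']$ with $C'$ symplectic corresponding to $S \in \Sp(1, \Z_d)$, the Pauli translation $W(p', q')$ is inert on the action $L \mapsto S(L)$. So I need only characterise which $S \in \Sp(1, \Z_d)$ preserve the family of horizontal lines.

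Every horizontal line is an affine translate of the linear subspace $L_Z = \{(p, 0) \mid p \in \Z_d\}$, and since $S$ is linear and invertible it permutes parallelism classes of lines. Hence $S$ preserves the horizontal class if and only if $S(L_Z) = L_Z$. Because $L_Z$ is spanned by $(1, 0)$, this condition amounts to $S(1, 0) \in L_Z$, i.e.\ the first column of $S$ has the form $(s_{11}, 0)^{T}$; invertibility of $S$ forces $s_{11} \in \Z_d^*$, so the sole constraint is $s_{21} = 0$. This establishes the forward direction.

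For the converse, the proof of Lemma \ref{lemma: Normal Form for Clifford Gates} already showed that an upper-triangular $S \in \Sp(1, \Z_d)$ factors uniquely as
\begin{align}
    S = \begin{pmatrix} 1 & e \\ 0 & 1 \end{pmatrix} \begin{pmatrix} f & 0 \\ 0 & f^{-1} \end{pmatrix}
\end{align}
which under $\mu$ corresponds to a symplectic diagonal Clifford times a symplectic Clifford permutation, and so $[\mu(S)] \in \cDb_2 \cN$. The Pauli factor $W(p', q') = \omega^{-2^{-1} p' q'} Z^{p'} X^{q'}$ is itself a diagonal gate times a Clifford permutation, so it lies in $\cDb_2 \cN$ as well. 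Since $\cDb_2 \cN$ is a group (Lemma \ref{lemma: DX and DN are groups}), we conclude that $C \in \cDb_2 \cN$ exactly when $s_{21} = 0$.

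I do not anticipate any real obstacle: the entire content is essentially in the paragraph immediately above the lemma, and the argument only needs to be written down carefully, with the mild bookkeeping step of verifying that the Pauli translation $W(p', q')$ also sits inside $\cDb_2 \cN$ so that multiplying it into the symplectic part does not move us out of the claimed coset structure.
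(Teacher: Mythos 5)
Your approach mirrors the paper's own pre-lemma discussion essentially step for step: reduce to the symplectic part $S$ (since the Pauli factor $W(p',q')$ is inert on the induced action on lines), observe that preserving horizontal lines is equivalent to $S(L_Z)=L_Z$ and hence to $s_{21}=0$, and then connect $s_{21}=0$ to membership in $\cDb_2\cN$ via the factorization established in the proof of Lemma \ref{lemma: Normal Form for Clifford Gates}. The bookkeeping observation that $W(p',q')\in\cDb_2\cN$ is also correct and worth recording. Your terminology is slightly scrambled (the step labeled ``the converse'' is actually what finishes the forward direction of the lemma), but the underlying argument is clear.

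There is, however, a small genuine gap. Having reduced the problem to the symplectic part, you need the full equivalence ``$s_{21}=0$ if and only if $C\in\cDb_2\cN$,'' but you only argue one implication: that $s_{21}=0$ implies $C\in\cDb_2\cN$. For the ``if'' direction of the lemma you also need the converse, that $C\in\cDb_2\cN$ forces $s_{21}=0$; your write-up simply asserts the ``exactly when.'' The fix is immediate and is implicitly what the paper leans on: by the uniqueness in Lemma \ref{lemma: Normal Form for Clifford Gates}, $C\in\cDb_2\cN$ precisely when the $\cM$-factor in its normal form is $\I$, and the proof of that lemma shows the $M=\I$ factorization of $S$ exists exactly when $s_{21}=0$. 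Alternatively, a direct check works: any $C=DP\in\cDb_2\cN$ with $D$ diagonal and $P$ a Clifford permutation conjugates $Z$ to a power of $Z$ up to phase, so $S(1,0)\in L_Z$ and $s_{21}=0$. Supplying either closes the gap and leaves your argument equivalent to the paper's.
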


We also wish to have a standard set of Clifford gates to take $L_Z$ to each 1-dimensional linear subspace of $\Z_d$. This is done in the following lemma, in which we use the set $\cM$ of $d+1$ gates from Definition \ref{defn: E and M}.

\begin{lemma}\label{lemma: M's correspond to slopes}
    Under the above action, each $M\in \cM$ maps $L_Z=\{(p, 0)\ | \ p\in \Z_d\}$ to a 1-dimensional linear subspace of $\Z_d^2$. Furthermore, for any $1$-dimensional subspace $L$, there is a unique  $M\in \cM$ that carries $L_Z$ to $L$.
\end{lemma}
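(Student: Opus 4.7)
The plan is to leverage Neuhauser's explicit representation (Theorem \ref{Thm: Neuhauser's representation}) to reduce the claim to a direct matrix computation, and then to close things out with a counting argument. A key preliminary observation is that $|\cM| = d+1$, which is precisely the number of $1$-dimensional linear subspaces of $\Z_d^2$ (these are in bijection with $\mathbb{P}^1(\Z_d)$). So once I have shown that $M \mapsto M(L_Z)$ lands in this set of size $d+1$, injectivity will immediately give both existence and uniqueness.

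Next I would compute the symplectic matrices for the gates in $\cM$. Each $M\in \cM$ is a product of $E$'s and $H$'s, hence is a symplectic Clifford gate whose associated element of $\Sp(1,\Z_d)\ltimes \Z_d^2$ has trivial translation part; this immediately guarantees that $M(L_Z)$ is a genuine $1$-dimensional linear (not merely affine) subspace. Matching $E$ against Neuhauser's formula for $\mu\bigl(\begin{psmallmatrix} 1 & b \\ 0 & 1 \end{psmallmatrix}\bigr)$ gives $b=2$ in the qudit case, so
\begin{align*}
E^c H \;\longleftrightarrow\; \begin{pmatrix} 1 & 2c \\ 0 & 1 \end{pmatrix}\begin{pmatrix} 0 & 1 \\ -1 & 0 \end{pmatrix} \;=\; \begin{pmatrix} -2c & 1 \\ -1 & 0 \end{pmatrix}.
\end{align*}
Applying this matrix to the spanning vector $(1,0)^T$ of $L_Z$ yields the vector $(-2c,-1)^T$, whereas $\I(L_Z)=L_Z$.

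Finally I would check injectivity. Each image $\mathrm{span}(-2c,-1)^T$ has nonzero second coordinate, so none of them equals $L_Z$; and since $d$ is an odd prime, $2$ is a unit in $\Z_d$, so $c \mapsto -2c$ is a bijection and the $d$ subspaces arising from the $E^c H$ are pairwise distinct. Together with $L_Z$ itself, this accounts for $d+1$ distinct $1$-dimensional linear subspaces, which by the counting remark above exhausts all of them. The qubit case is handled identically using $b=1$ in place of $b=2$, giving the matrix $\begin{psmallmatrix} -c & 1 \\ -1 & 0 \end{psmallmatrix}$ and the images $\mathrm{span}(-c,-1)^T$. I do not expect a serious obstacle here; the only subtle point is being careful with the Neuhauser conventions so that $E$ is matched to the correct matrix in each characteristic, and noting that invertibility of $2$ (needed in the qudit case) follows from the standing hypothesis that $d$ is an odd prime.
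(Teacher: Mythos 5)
Your proposal is correct and follows essentially the same route as the paper: compute via Neuhauser's representation that $E^cH$ corresponds to $\begin{psmallmatrix}-2c & 1\\ -1 & 0\end{psmallmatrix}$, apply it to $(1,0)$ to get $(-2c,-1)$, and observe these together with $L_Z$ account for all $d+1$ one-dimensional subspaces. The only additions beyond the paper's proof are your explicit appeal to the invertibility of $2$ and the parallel qubit-case computation, both of which the paper leaves implicit.
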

\begin{proof}
    It suffices to show that each such $M$ maps $(1,0)$ into a distinct 1-dimensional subspace. The identity $\I$ maps $(1, 0)$ into $L_Z$, and by Theorem \ref{Thm: Neuhauser's representation}, the Clifford gate $E^cH$ corresponds to the symplectic matrix
    \begin{align}
        S=\begin{pmatrix}
            1 & 2c \\
            0 & 1
        \end{pmatrix}
        \begin{pmatrix}
            0 & 1 \\
            -1 & 0
        \end{pmatrix}
        =\begin{pmatrix}
            -2c & 1 \\
            -1 & 0
        \end{pmatrix}.
    \end{align}
    $S$ maps $(1, 0)$ to $(-2c, -1)$, which never lies in $L_Z$ and lies in a distinct 1-dimensional subspace for distinct values of $c$. Furthermore, any 1-dimensional subspace of $\Z_d^2$ is spanned by either $(1,0)$ or $(-2c, -1)$ for some $c\in \Z_d$.
\end{proof}
This also shows that $\Sp(1, \Z_d)$ acts transitively on the 1-dimensional subspaces of $\Z_d^2$. We can now understand exactly when a conjugate pair $(U, V)$ arises from a semi-Clifford gate in terms of the Pauli supports of $U$ and $V$.

\begin{theorem}\label{cor: Pauli support of conj. pairs}
    Let $G\in \cC_{k}$ and $(U, V)$ be the pair corresponding to $G$. Then $G$ is semi-Clifford if and only if the Pauli supports of $U$ and $V$ are contained in two parallel lines $L_U$ and $L_V$ in $\Z_d^2$. 
\end{theorem}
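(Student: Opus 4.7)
The plan is to route through Theorem \ref{cor: Pairs (CD_1X^aC^*, CD_2X^bC^*) correspond to semi-Cliffords}---$G$ is semi-Clifford iff simplifiable---together with Lemma \ref{lemma: Pauli supports of DX}, which identifies almost diagonality with ``Pauli support contained in a horizontal line.''  The central observation is that Clifford conjugation acts on Pauli supports through the symplectic linear part alone: any Pauli/translation part of a Clifford only contributes global phases via \eqref{commpaulisymp} and \eqref{eqn: conjugating a Pauli by a Clifford}.  Consequently a horizontal support is carried to a support parallel to the image of $L_Z$, and the Clifford-invariant version of ``both pair elements almost diagonal'' is exactly ``both Pauli supports in parallel lines.''  The set $\cM$, by Lemma \ref{lemma: M's correspond to slopes}, provides standard representatives rotating $L_Z$ to every possible linear direction in $\Z_d^2$.

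For the forward direction, I would take a decomposition $G = C_1 D C_2$ with $D \in \cD_k$, $C_1, C_2 \in \cC_2$, and expand
\[
GZG^* = C_1\bigl( D (C_2 Z C_2^*) D^*\bigr) C_1^*, \qquad GXG^* = C_1\bigl( D (C_2 X C_2^*) D^*\bigr) C_1^*.
\]
Each of $C_2 Z C_2^*, C_2 X C_2^*$ is a Pauli, hence lies in $\cDb_{k-1}\gX$, and since $D \in \cDb_k$ normalises $\cDb_{k-1}\gX$ (Lemma \ref{lemma: DX and DN are groups}), each inner factor remains almost diagonal.  By Lemma \ref{lemma: Pauli supports of DX} their Pauli supports therefore lie in horizontal lines, and conjugation by $C_1$ transports both supports to lines with common linear direction $S_{C_1}(L_Z)$---two parallel lines.

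For the reverse direction, I would assume $\supp(f_U) \subseteq L_U$ and $\supp(f_V) \subseteq L_V$ share linear direction $L_0$, and by Lemma \ref{lemma: M's correspond to slopes} pick $M \in \cM$ whose symplectic part $S_M$ sends $L_Z$ to $L_0$.  Computing the pair of $G' = M^* G M$ requires expanding $MZM^*$ and $MXM^*$ in the Pauli basis: if $S_M(1,0) = (p_0, q_0) \in L_0$, then $G' Z G'^* = \omega^c M^*(U^{p_0} V^{q_0}) M$ for some phase $c$, and similarly for the second pair element.  The convolution formula (Lemma \ref{lemma: Multiplying f_V functions}) confines each $\supp f_{U^i V^j}$ to a single translate of $L_0$, which $S_M^{-1}$ then sends into a horizontal line.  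Lemma \ref{lemma: Pauli supports of DX} rewrites each pair element of $G'$ as $D X^q$ with $D$ diagonal; since $G' \in \cC_k$ its pair lies in $\cC_{k-1}$, and multiplying by the Clifford $X^{-q}$ forces $D \in \cDb_{k-1}$.  Thus $G'$ is simplified, $G$ is simplifiable, and Theorem \ref{cor: Pairs (CD_1X^aC^*, CD_2X^bC^*) correspond to semi-Cliffords} yields that $G$ is semi-Clifford.

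The main obstacle I anticipate is the bookkeeping around the pair of a Clifford conjugate $CGC^*$, which is \emph{not} simply $(CUC^*, CVC^*)$ but involves power products of $U$ and $V$ arising from expressing $C^*ZC$ and $C^*XC$ in the Pauli basis.  Tracking Pauli supports through these convolutions relies on the group property of $\cDb_{k-1}\gX$ (Lemma \ref{lemma: DX and DN are groups}) to keep power products inside a common linear direction, while showing that the symplectic action preserves parallelism relies on \eqref{eqn: conjugating a Pauli by a Clifford}.  Once these are in place, the theorem is essentially a Clifford-invariant reformulation of the horizontal-support characterisation of almost diagonal gates.
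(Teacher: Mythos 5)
Your proposal is correct and follows essentially the same route as the paper: reduce via Theorem \ref{cor: Pairs (CD_1X^aC^*, CD_2X^bC^*) correspond to semi-Cliffords} to simplifiability, characterise almost-diagonal gates by horizontal Pauli supports (Lemma \ref{lemma: Pauli supports of DX}), and transport supports through the symplectic action of $\Sp(1,\Z_d)$. The one place you create extra work for yourself is the ``main obstacle'' you flag: the paper avoids tracking power products $U^{p_0}V^{q_0}$ entirely by observing that $(CUC^*, CVC^*)$ is exactly the conjugate pair of the one-sided product $CG$ (not of $CGC^*$), so Lemma \ref{lemma: Clifford-conjugate correspondence in f_M function} applies directly to $(U,V)$ and the target pair with no convolution bookkeeping needed; your two-sided $M^*GM$ route is valid but incurs precisely the convolution computations you describe, where the paper's choice of normal form keeps the argument at the level of the abstract lemma.
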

\begin{proof}
    By Theorem \ref{cor: Pairs (CD_1X^aC^*, CD_2X^bC^*) correspond to semi-Cliffords}, $G$ is semi-Clifford if and only if 
    \begin{equation}(U, V)=(CU'C^*, CV'C^*),\end{equation}
     for some $C\in\cC_2$ and $U', V'\in \cDb_k\gX$. By the Lemmas \ref{lemma: Pauli supports of DX} and \ref{lemma: Clifford-conjugate correspondence in f_M function}, this is equivalent to the existence of two horizontal lines $L_{U'}$ and $L_{V'}$ (containing the Pauli supports of $U'$ and $V'$), and the existence of a symplectic transformation $S\in \Sp(1, \Z_d)$ (corresponding to the symplectic part of $C$), such that the Pauli supports of $U$ and $V$ are contained in $S(L_{U'})$ and $S(L_{V'})$. If such $L_{U'}$, $L_{V'}$, and $S$ exist, then $L_U=S(L_{U'})$ and $L_V=S(L_{V'})$ are parallel since $S$ is linear and $L_{U'}$ and $L_{V'}$ are parallel. Conversely, if the Pauli supports of $U$ and $V$ are contained in parallel lines $L_U$ and $L_V$, then there exists $S\in \Sp(1, \Z_d)$ such that $L_{U'}=S^{-1}(L_U)$ and $L_{V'}=S^{-1}(L_V)$ are horizontal, since $\Sp(1, \Z_d)$ acts transitively on the 1-dimensional linear subspaces of $\Z_d^2$ by the above lemma.
\end{proof}
The proof of this corollary can be visualised with Figure \ref{fig:Conjugate pairs in the Pauli basis}.
\begin{figure}[H]
    \centering
    \includegraphics[page=3]{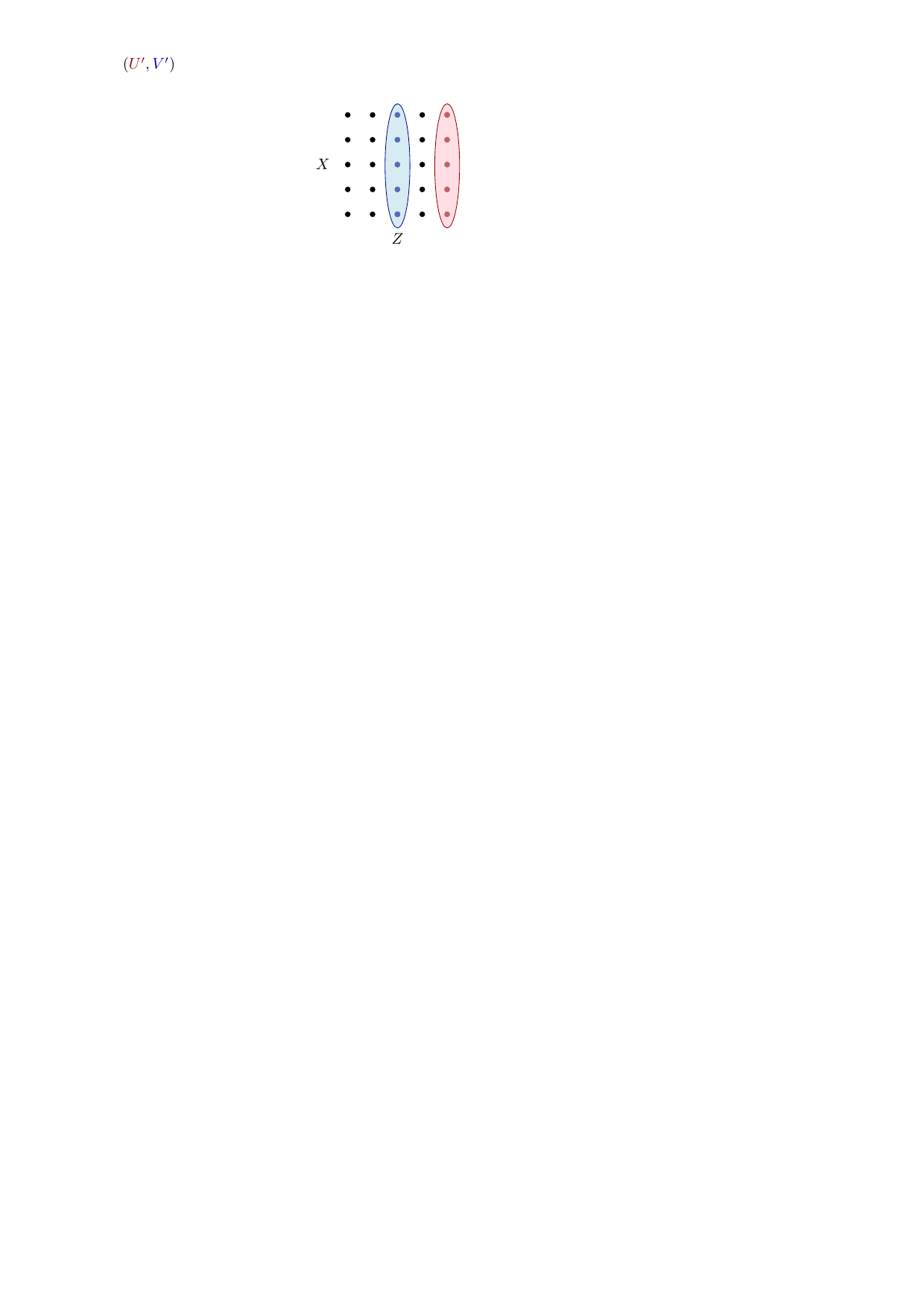}
    \caption[Proof of Theorem \ref{cor: Pauli support of conj. pairs}]{Proof of Theorem \ref{cor: Pauli support of conj. pairs}. The horizontal lines in the left diagram contain the Pauli supports of $U'$ and $V'$, which are mapped to the parallel lines on the right side of the diagram by $C$.}
    \label{fig:Conjugate pairs in the Pauli basis}
\end{figure}

\subsection{Every hierarchy gate is semi-Clifford}\label{Section: Proof of the Conjecture}
Recall from Definition \ref{def: k-closed} that a pair $(U, V)$ is $k$-closed if the group $\langle U, V\rangle$ generated by $U$ and $V$ is contained in $\cC_k$. In this section, we will prove our main lemmas (\ref{lemma: Only gates in k-closed pairs are in D_k X} and \ref{lemma: Pair individually conjugate to DX is jointly conjugate}), which will show that every $k$-closed conjugate pair is jointly Clifford-conjugate (recall Definition \ref{defn: sim cliff conj}) to a pair of gates in $\cDb_k\gX$. Since gates at level $k+1$ correspond to $k$-closed pairs by Theorem \ref{Ckfromtuples}, the bijective correspondence established in Theorem \ref{cor: Pairs (CD_1X^aC^*, CD_2X^bC^*) correspond to semi-Cliffords} will then ensure that every gate at level $k+1$ is semi-Clifford. Using induction, we will then be able to show that for every odd prime $d$, every gate at every level of the one-qudit Clifford hierarchy is semi-Clifford.

We begin by proving that in two special cases, certain order-$d$ gates are Clifford-conjugate to gates in $\cDb_k\gX$.
\begin{lemma}
    Any Clifford gate $C \in \cC_2$ of order $d$ is Clifford-conjugate to a gate in $\cDb_2\gX$.\label{lemma: DX^c is Sylow d-subgroup}
\end{lemma}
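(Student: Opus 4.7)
The plan is to exploit the projective isomorphism $\rho: \Sp(1,\Z_d) \ltimes \Z_d^2 \to [\cC_2]$ together with Sylow theory applied to $\Sp(1,\Z_d) = \SL(2,\Z_d)$. First I would translate the target condition: unwinding the composition law and using Theorem \ref{Thm: Neuhauser's representation}, a Clifford gate in $\cDb_2 \gX$ corresponds, up to phase, to an affine symplectic transformation $(B, v)$ where $B = \begin{psmallmatrix} 1 & b \\ 0 & 1 \end{psmallmatrix}$ is upper unitriangular and $v \in \Z_d^2$ is arbitrary. Write $U \leq \Sp(1,\Z_d)$ for the subgroup of such $B$'s. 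Since $|\SL(2,\Z_d)| = d(d^2-1)$ with $\gcd(d, d^2-1) = 1$, the subgroup $U$ of order $d$ is a Sylow $d$-subgroup of $\Sp(1,\Z_d)$.

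Next I would analyse the order of $C$. Write $[C] = \rho((S, v))$ for some $(S, v) \in \Sp(1,\Z_d) \ltimes \Z_d^2$. Since $C^d = \I$, we get $(S, v)^d = (I, 0)$, which forces $S^d = I$. As $d$ is prime, $S$ has order $1$ or $d$. If $S = I$, then $[C]$ is a Pauli class, so $C = e^{i\theta} Z^p X^q$ for some $\theta \in \R$, $p, q \in \Z_d$; since $e^{i\theta}Z^p \in \cDb_2$, we see $C \in \cDb_2\gX$ already, and the identity Clifford suffices.

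Otherwise $S$ has order $d$, so $\langle S \rangle$ is itself a Sylow $d$-subgroup of $\Sp(1,\Z_d)$. By Sylow's theorem, all Sylow $d$-subgroups are conjugate, so there exists $T \in \Sp(1,\Z_d)$ with $T S T^{-1} \in U$. Let $C_0 \in \cC_2$ be any representative of the phase class $\mu(T)$. A direct computation of conjugation in the affine symplectic group yields
\begin{align}
(T, 0)(S, v)(T, 0)^{-1} = (T S T^{-1},\; T v),
\end{align}
which lies in the subset $U \ltimes \Z_d^2$ of affine transformations corresponding to $\cDb_2\gX$. Applying $\rho$, we obtain $[C_0 C C_0^*] \in [\cDb_2 \gX]$, so $C_0 C C_0^* = e^{i\varphi} D X^c$ for some phase $e^{i\varphi}$, diagonal $D \in \cD_2$, and $c \in \Z_d$. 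Since $\cDb_2 = \T \cdot \cD_2$ absorbs any global phase, $e^{i\varphi} D \in \cDb_2$, and hence $C_0 C C_0^* \in \cDb_2 \gX$ as required.

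The main obstacle is purely bookkeeping: $\rho$ is only a projective representation, so equalities at the group-theoretic level in $\Sp(1,\Z_d) \ltimes \Z_d^2$ translate only to equalities up to phase in $\cC_2$. The saving grace is that by definition $\cDb_2 \supseteq \T$, so membership in $\cDb_2 \gX$ is itself phase-invariant, and no genuine ambiguity arises. Aside from this, the argument is essentially a one-line application of Sylow's theorem once the correspondence between $\cDb_2 \gX$ and the upper unitriangular Sylow $d$-subgroup of $\SL(2, \Z_d)$ is identified.
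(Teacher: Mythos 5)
Your proof is correct and rests on the same key idea as the paper's — Sylow's theorem applied to recognise that the target subgroup is a Sylow $d$-subgroup — but the two arguments apply it at slightly different levels. The paper works directly inside $[\cC_2]$, of order $d^3(d^2-1)$, and identifies $[\cDb_2\gX]$ (order $d^3$) as a Sylow $d$-subgroup there. You instead descend to the symplectic quotient $\Sp(1,\Z_d)$, of order $d(d^2-1)$, apply Sylow to conjugate the linear part $S$ into the unitriangular subgroup $U$, and then check by an explicit semidirect-product computation that the translation part comes along for free, landing in $U\ltimes\Z_d^2 = \rho^{-1}([\cDb_2\gX])$. The two arguments are equivalent (your $U\ltimes\Z_d^2$ is precisely the Sylow $d$-subgroup the paper exhibits, viewed through $\rho$), but your version carries a small bonus: the conjugating Clifford can be taken symplectic, with no Pauli translation factor. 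You also handle the degenerate case $S=I$ (i.e.\ $C$ a phase times a Pauli) explicitly, which the paper's proof silently elides even though it is the case where $[C]$ has order $1$ rather than $d$.
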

\begin{proof}
    If $C$ appears in a conjugate pair, it has order $d$, so $[C]$ has order $d$ as well. Thus $[C]$ is in a Sylow $d$-subgroup of $[\cC_2]$.
    The order of $[\cC_2]$ is that of $\Sp(1, \Z_d)\ltimes \Z_d^2$, which as shown in \cite{neuhauser2002explicit} is $d^3(d^2-1)$.
    Furthermore, the order of $[\cDb_2\gX]$ is $d^3$, since $[D_2]$ has order $d^2$ and $\gX$ has order $d$.  This implies that $[\cDb_2\gX]$ is a Sylow $d$-subgroup of $[\cC_2]$. All Sylow $d$-subgroups are conjugate, so this means $C$ is Clifford-conjugate to an element of $\cDb_2\gX$.
\end{proof}
\begin{lemma}
    If $DP\in \cDb_{k}\cN$ is of order $d$, then $DP\in \cDb_k\gX$.\label{lemma: DP in pair means its DX}
\end{lemma}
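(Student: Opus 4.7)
The plan is to reduce the statement to a computation of $\sigma^d$, where $\sigma$ is the affine permutation associated to $P$, and then invoke Fermat's little theorem. Write $P\ket{z} = \ket{\sigma(z)}$ with $\sigma(z) = az+b$ for some $a \in \Z_d^*$ and $b \in \Z_d$ (guaranteed by the earlier lemma classifying Clifford permutation gates), and write $D = D[f]$ for some $f : \Z_d \to \T$. Our goal is to force $a = 1$, since then $P = X^b \in \gX$ and $DP \in \cDb_k \gX$ as required.

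The key observation is that $DP$ is a monomial matrix whose permutation pattern is exactly $\sigma$. Iterating gives, for every $n \geq 1$,
\begin{align}
    (DP)^n \ket{z} = \alpha_n(z) \, \ket{\sigma^n(z)},
\end{align}
for some phase $\alpha_n(z)$ obtained as a product of values of $f$ along the orbit of $z$. In particular, if $(DP)^d = \I$, then $\sigma^d$ must be the identity permutation on $\Z_d$.

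Now I would compute $\sigma^d$ directly. Iterating the affine map yields $\sigma^n(z) = a^n z + b(1 + a + \cdots + a^{n-1})$. If $a \neq 1$, then $\sigma^n(z) = a^n z + b \, \frac{a^n - 1}{a - 1}$ (valid in $\Z_d$ since $a - 1 \in \Z_d^*$). Applying Fermat's little theorem, $a^d = a$, so
\begin{align}
    \sigma^d(z) = a z + b \, \frac{a - 1}{a - 1} = az + b = \sigma(z).
\end{align}
Thus $\sigma^d = \sigma$, and for this to equal the identity we would need $a = 1$, contradicting the assumption $a \neq 1$. Therefore the only possibility consistent with $DP$ having order $d$ is $a = 1$, i.e.\ $\sigma(z) = z + b$, which means $P = X^b$ and hence $DP \in \cDb_k \gX$.

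There isn't really a hard step here: the only subtlety is noticing that the diagonal factor $D$ is irrelevant to the permutation skeleton of $(DP)^d$, so the obstruction to $(DP)^d = \I$ is purely combinatorial, and Fermat's little theorem finishes it off immediately. The argument works uniformly in $k$ and uses no properties of $D$ beyond its diagonality.
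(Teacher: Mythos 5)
Your proof is correct, and it takes a genuinely different route from the paper. The paper's proof is Sylow-theoretic: it observes that $\cN$ has order $d(d-1)$ while $\gX$ has order $d$, so $\gX$ is a Sylow $d$-subgroup of $\cN$; since $\gX \trianglelefteq \cN$, it is the \emph{unique} Sylow $d$-subgroup, hence any order-$d$ element of $\cN$ lies in it. Then the paper notes $(DP)^d = D'P^d$ with $D'$ diagonal, forcing $P^d = \I$. Your proof replaces the Sylow step with the elementary observation that, for an affine map $\sigma(z) = az+b$ with $a \neq 1$, Fermat's little theorem gives $\sigma^d = \sigma \neq \mathrm{id}$, so $(DP)^d = \I$ forces $a = 1$ directly. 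Both proofs share the key idea that the diagonal factor is irrelevant to the permutation skeleton. The paper's approach is cleaner if one already has the order of $\cN$ and the normality of $\gX$ in hand (both established just before this lemma); yours is self-contained and avoids any group-theoretic machinery beyond Fermat. One small point in your favour: the paper's phrase ``in order for $DP$ to have order $d$, the permutation matrix $P$ must itself have order $d$'' has an implicit gap in the degenerate case $P = \I$ (where $P$ has order $1$, not $d$), though the conclusion still holds trivially there; your argument concludes $a=1$ directly, which covers $P = \I$ and $P = X^b$ uniformly.
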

\begin{proof}
    We first show that if $P\in \cN$ has order $d$, then $P\in \gX$.
    Our discussion of the structure of $\cN$ at the beginning of Section \ref{Section: Normal From for Cliffords} shows that $\cN$ has order $d(d-1)$, while $\gX$ has order $d$. Thus $\gX$ is a Sylow $d$-subgroup of $\cN$. As $\gX$ is also a normal subgroup of $\cN$, it follows that $\gX$ is the only Sylow-$d$ subgroup of $\cN$. Hence, any order-$d$ element of $\cN$ is in $\gX$. 
    
    If $DP\in \cDb_k\cN$ has order $d$, note that we may write $(DP)^d=D'P^d$ for some diagonal $D'$. Thus in order for $DP$ to have order $d$, the permutation matrix $P$ must itself have order $d$. But then $P\in \gX$ and $DP\in \cDb_k\gX$.
\end{proof}

\begin{lemma}[Main Lemma 1]
    Let $k \geq 1$ and assume that $\cC_k = \semc_k$: every $k\textsuperscript{th}$\!-level gate is semi-Clifford.  Suppose $G \in \cC_k$ is of order $d$ and $G^2 \in \cC_k$.  Then $G$ is Clifford-conjugate to a gate in $\cDb_k\gX$.
    \label{lemma: Only gates in k-closed pairs are in D_k X}
\end{lemma}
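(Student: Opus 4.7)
The plan is to use the hypothesis $\cC_k = \semc_k$ to place $G$, up to Clifford-conjugation, into the semi-Clifford form $DC$, then use the order-$d$ and $G^2\in\cC_k$ conditions together with the Clifford normal form to push $G$ all the way into $\cDb_k\cN$. The conclusion will then follow from Lemma~\ref{lemma: DP in pair means its DX}, since an order-$d$ element of $\cDb_k\cN$ is automatically in $\cDb_k\gX$.

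First, since $G\in\cC_k=\semc_k$, the equivalence of semi-Cliffordness and simplifiability (Theorem~\ref{cor: Pairs (CD_1X^aC^*, CD_2X^bC^*) correspond to semi-Cliffords}) together with the characterisation of simplified gates (Lemma~\ref{lemma: DC iff simplified}) tells me that $G$ is Clifford-conjugate to a gate of the form $DC$ with $D\in\cD_k$ and $C\in\cC_2$. Because Clifford-conjugation preserves both the order of a gate and its level in the Clifford hierarchy, I may replace $G$ by this representative and assume $G=DC$ itself has order $d$ and satisfies $G^2\in\cC_k$.

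Next, I expand $C$ using the normal form of Lemma~\ref{lemma: Normal Form for Clifford Gates}, writing $C=MD_2P$ with $M\in\cM$, $D_2\in\cDb_2$, $P\in\cN$. In the easy case $M=\I$, we have $C\in\cDb_2\cN$, so $G=DC=(DD_2)P\in\cDb_k\cN$; since $G$ has order $d$, Lemma~\ref{lemma: DP in pair means its DX} immediately yields $G\in\cDb_k\gX$, completing the argument in this case.

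The main obstacle is the case $M=E^cH$ for some $c\in\Z_d$, where $C\notin\cDb_2\cN$. Here I plan to work at the level of Pauli supports and conjugate pairs. The conjugate pair $(U,V)=(GZG^*,GXG^*)$ of $G=DC$ has the form $(DP_1D^*,DP_2D^*)$ where $P_i=CZ_iC^*$ is a Pauli whose $X$-component is determined by the symplectic part of $C$; by Theorem~\ref{cor: Pauli support of conj. pairs} these supports lie in two parallel lines. The assumption $G^2\in\cC_k$ combined with $\cC_k=\semc_k$ forces $G^2$ to be semi-Clifford, so the conjugate pair $(GUG^*,GVG^*)$ of $G^2$ must again have supports in parallel lines. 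Combining this constraint with the order-$d$ condition $(DC)^d=\omega^m\I$, I expect to deduce that the Pauli support of $U$ actually consists of a single point --- equivalently, that $U$ is a Pauli gate. Once $U=GZG^*$ is known to be a Pauli, the analysis carried out in the proof of Lemma~\ref{lemma: DC iff simplified} (that $GZG^*$ being a scalar multiple of $Z$ forces $G$ to have the form $D'X^{c'}$) together with a preliminary Clifford-conjugation carrying the chosen Pauli to a power of $Z$ shows that a Clifford-conjugate of $G$ lies in $\cDb_k\cN$. Lemma~\ref{lemma: DP in pair means its DX} then places this conjugate in $\cDb_k\gX$, completing the proof.

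The hardest step will be forcing the Pauli support of $U$ to reduce to a single point in the $M\neq\I$ case; this requires carefully exploiting the interplay between the semi-Clifford structures of both $G$ and $G^2$ via the parallel-line support characterisation of Theorem~\ref{cor: Pauli support of conj. pairs} and the order-$d$ constraint, and this is where the extra hypothesis $G^2\in\cC_k$ enters in an essential way.
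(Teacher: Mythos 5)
Your outline matches the paper's broad strategy—reduce to $G=DC$ with $D\in\cD_k$, $C\in\cC_2$; dispatch the case $C\in\cDb_2\cN$ via Lemma~\ref{lemma: DP in pair means its DX}; and for $C\notin\cDb_2\cN$ force a contradiction (or, equivalently, force Pauli-ness of the relevant conjugate) using the parallel-lines characterisation—but it leaves the decisive step as ``I expect to deduce,'' and the mechanism you gesture at is not the one that actually works. The order-$d$ relation $(DC)^d=\omega^m\I$ does not drive the support collapse. What the paper does is strip the trailing Clifford from $G^2=DCDC$ to get $DCD\in\semc_k$ (using closure of $\cC_k$ and of $\semc_k$ under Clifford multiplication), form its conjugate pair $(U,V)=(DCZC^*D^*,\,DCDXD^*C^*D^*)$, and observe that $U\in\cDb_{k-1}\gX$ has support in a horizontal line. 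Assuming $U$ is not Pauli, that horizontal line is the unique line containing its support (size $\ge 2$), so by Theorem~\ref{cor: Pauli support of conj. pairs} the support of $V$ also lies in a horizontal line, hence $V\in\cDb_{k-1}\gX$, hence $CWC^*\in\cDb_{k-1}\gX$ where $W=DXD^*$. But the support of $CWC^*$ is the image under the symplectic part of $C$ of a horizontal line, and since $C\notin\cDb_2\cN$, Lemma~\ref{lemma: DN stabilizes horizontal lines} says this image is \emph{not} horizontal. A horizontal line and a non-horizontal line meet in at most one point, so $\supp(f_{CWC^*})$ is a singleton; as $CWC^*$ has order $d$, Lemma~\ref{obs: Pauli condition on f_V functions} makes it Pauli, a contradiction. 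That purely geometric intersection argument is the engine you would need to supply.

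Your last step is also not sound as written. Knowing $U=GZG^*$ is Pauli does not let you conjugate $G$ into $\cDb_k\cN$ by ``carrying the chosen Pauli to a power of $Z$''---conjugating $G$ by a Clifford $C_1$ transforms $GZG^*$ into $C_1G(C_1^*ZC_1)G^*C_1^*$, not into $C_1(GZG^*)C_1^*$, so the bookkeeping does not close the way you describe, and $GZG^*$ being Pauli need not make it a scalar multiple of $Z$. The paper's route is cleaner: once $W=DXD^*$ is shown Pauli, combine it with $DZD^*=Z$ to conclude $D$ conjugates the Pauli group into itself, i.e.\ $D$ is Clifford; therefore $G=DC$ is a Clifford gate of order $d$, and Lemma~\ref{lemma: DX^c is Sylow d-subgroup} (the Sylow $d$-subgroup argument) places $G$ Clifford-conjugate to an element of $\cDb_2\gX\subseteq\cDb_k\gX$. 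You should replace your ``preliminary conjugation'' paragraph with this two-line observation and the Sylow lemma.
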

\begin{proof}
    As every Pauli gate is in $\cDb_1$, we can assume that $k \geq 2$.
    
    Since $G$ is semi-Clifford, it is Clifford-conjugate to $DC$ for some $D\in \cD_k$ and $C\in \cC_2$.  

    Without loss of generality, we can assume that $C\not\in\cDb_2\cN$.  Otherwise, if $C \in\cDb_2\cN$, by Lemma \ref{lemma: DP in pair means its DX}, $C \in \cDb_2\gX$ and $DC \in \cDb_k\gX$. 
    
    To finish the proof, it suffices to show that $W = D X D^*$ is a Pauli gate.  If this holds, since $D Z D^* = Z$ is a Pauli gate, $D$ is a Clifford gate.  Thus, $DC$ is a Clifford gate of order $d$ and Lemma \ref{lemma: DX^c is Sylow d-subgroup} establishes our claim.  
        
    Since $G^2 \in \cC_k$, we have that $DCDC\in \cC_k = \semc_k$.  As $C$ is a Clifford gate, it follows that $DCD\in \semc_k$ is semi-Clifford. From $DCD$, we obtain its $(k-1)$-closed conjugate pair 
    \begin{equation}(U,V)=(DCDZD^*C^*D^*, DCDXD^*C^*D^*)=(DCZC^*D^*, DCDXD^*C^*D^*).\end{equation}
    Since $C\not\in\cDb_2\cN$, by Lemma \ref{lemma: DN stabilizes horizontal lines},
    we have that $CZC^*=\omega^cZ^pX^q$, for some $p,q,c\in \Z_d$, where $q\neq 0$. Hence,
    \begin{align}
        U=\omega^c DZ^pX^qD^*=\omega^c Z^p(DX^qD^*),
    \end{align}
    which is a Pauli gate only if $W = (DX^qD^*)^{q^{-1}}$ is a Pauli gate.

    Assume for contradiction that $U$ is not a Pauli gate. The computation above shows that $U \in \cDb_{k-1}\gX$, so the support of $f_U$ is contained in a horizontal line $L_U$ in the phase plane $\Z_d^2$. Since $U$ is not a Pauli gate and is in a conjugate pair, Lemma \ref{obs: Pauli condition on f_V functions} ensures that the Pauli support of $f_U$ has size at least 2. As two points in the phase plane define a unique line, it follows that $L_U$ is the unique line in the phase plane that contains the support of $f_U$. 

    Since $(U, V)$ is a conjugate pair corresponding to the semi-Clifford gate $DCD$, Theorem \ref{cor: Pauli support of conj. pairs} ensures that the Pauli supports of $U$ and $V$ are contained in two parallel lines. Hence, as $L_U$ is horizontal and is the \textit{unique} line containing the Pauli support of $U$, the Pauli support of $V$ must be contained in a horizontal line. By Lemma \ref{lemma: Pauli supports of DX}, $V$ must then itself be in $\cDb_{k-1}\gX$. That is,
    \begin{align}
        DCDXD^*C^*D^*\in \cDb_{k-1}\gX.
    \end{align}
    As conjugating an element of $\cDb_{k-1}\gX$ by $D^*$ yields another element of $\cDb_{k-1}\gX$, it follows that $CDXD^*C^*\in \cDb_{k-1}\gX$.
    Since $W=DXD^*$ is in $\cDb_k\gX$, the support of $f_W$ is contained in a horizontal line $L_W$ in the phase plane. The support of $f_{CWC^*}$ is then contained in another line $L_W'$ in the phase plane. Since $C\not\in \cDb_2\cN$, Lemma \ref{lemma: DN stabilizes horizontal lines} implies that the symplectic transformation corresponding to $C$ does not carry horizontal lines to horizontal lines, so $L_W'$ is not horizontal. However, $CWC^*$ is in $\cDb_{k-1}\gX$, so its support must also be contained in a horizontal line. Hence, the support of $CWC^*$ is contained in the intersection of $L_W'$ and a horizontal line, which ensures that the support of $f_{CWC^*}$ consists of only one point. As $CWC^*$ is conjugate to $X$, it has order $d$, so it then follows from Lemma \ref{obs: Pauli condition on f_V functions} that $CWC^*$ is Pauli. As $C$ is a Clifford gate, both $W=DXD^*$ and $U$ are then also Pauli gates, which completes the proof.
\end{proof}

This lemma makes an assumption that is equivalent to asserting that $(k-1)$-closed conjugate pairs are \textit{jointly} Clifford-conjugate to a pair of almost diagonal gates.  However, the conclusion proves something weaker than this about a $k$-closed conjugate pair $(U, V)$: $U$ and $V$ are merely both \textit{individually} Clifford-conjugate to almost diagonal gates.  The following lemma resolves this issue and will allow us to inductively establish that $\cC_k = \mathcal{SC}_k$ for all $k$. Interestingly, it does not actually require the inductive hypothesis that the preceding lemma did.

\begin{lemma}[Main Lemma 2]
    Let $(U, V)$ be the conjugate pair of a gate $G \in \cU$ and suppose that both $U$ and $V$ are individually Clifford-conjugate to elements of $\cDb_k\gX$. Then $G$ is simplifiable.
    \label{lemma: Pair individually conjugate to DX is jointly conjugate}
\end{lemma}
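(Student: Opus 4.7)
The plan is to extract geometric data from the hypothesis, use the Weyl commutation $UV = \omega VU$ to force the Pauli supports of $U$ and $V$ onto parallel lines in $\Z_d^2$, and then produce a single Clifford that conjugates both simultaneously into $\cDb_k\gX$. Since $U$ is Clifford-conjugate to some element of $\cDb_k\gX$, Lemma \ref{lemma: Pauli supports of DX} places the Pauli support of that element on a horizontal line, and Lemma \ref{lemma: Clifford-conjugate correspondence in f_M function} transports this to place $\supp(f_U)$ on a line $L_U \subseteq \Z_d^2$; similarly $\supp(f_V)$ lies on a line $L_V$. The conclusion to be proved---the existence of a single Clifford $C$ with $CUC^*, CVC^* \in \cDb_k\gX$---is invariant under replacing $(U, V)$ by any Clifford conjugate pair.

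The crucial step is to show $L_U$ and $L_V$ may be chosen parallel. By the transitive action of $\Sp(1, \Z_d)$ on one-dimensional subspaces (Lemma \ref{lemma: M's correspond to slopes}), I may Clifford-conjugate to assume $L_U$ is horizontal; if $L_V$ is not also horizontal, a further conjugation by an element of $\cDb_2\cN$, which stabilizes horizontal lines (Lemma \ref{lemma: DN stabilizes horizontal lines}), brings $L_V$ to a vertical line $\{(p_0, q) : q \in \Z_d\}$. Writing $U = \sum_p u_p W(p, q_1)$ and $V = \sum_q v_q W(p_0, q)$ and applying the Pauli product rule $W(p, q_1) W(p_0, q) = \omega^{2^{-1}(pq - p_0 q_1)} W(p + p_0, q_1 + q)$, the identity $UV - \omega VU = 0$ becomes, upon matching Pauli basis coefficients,
\begin{equation*}
u_p v_q \bigl(1 - \omega^{1 - (pq - p_0 q_1)}\bigr) = 0 \quad \text{for all } p, q \in \Z_d.
\end{equation*}
The scalar factor vanishes only on the curve $pq = p_0 q_1 + 1$, so if $\{p : u_p \neq 0\}$ contains two distinct values $p_A, p_B$, then for any $q$ with $v_q \neq 0$ we have $p_A q = p_B q = p_0 q_1 + 1$, forcing $q \equiv 0 \pmod d$; thus $V$ is a single Pauli. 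Symmetrically, having two or more nonzero $v_q$ forces $U$ to be Pauli. Hence in the non-parallel configuration at least one of $U, V$ is Pauli, and its singleton support may be assigned to any line---in particular, one parallel to the other. In every case, $L_U \parallel L_V$.

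With $L_U \parallel L_V$, transitivity of $\Sp(1, \Z_d)$ on one-dimensional subspaces furnishes a Clifford $C$ whose symplectic part carries the common direction to $(1, 0)$. By Lemma \ref{lemma: Clifford-conjugate correspondence in f_M function}, the Pauli supports of $CUC^*$ and $CVC^*$ then lie on horizontal lines, so by Lemma \ref{lemma: Pauli supports of DX} both lie in $\cDb\gX$; since $U, V \in \cC_k$ and $\cC_k$ is preserved by Clifford conjugation and by right multiplication by Paulis, the diagonal factors in the decompositions $CUC^* = DX^{q}$, $CVC^* = D'X^{q'}$ lie in $\cDb_k$, whence $CUC^*, CVC^* \in \cDb_k\gX$. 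As $\cDb_k\gX$ is a group (Lemma \ref{lemma: DX and DN are groups}), every $(CUC^*)^a(CVC^*)^b = CU^aV^bC^*$ lies in $\cDb_k\gX$, and a direct computation shows the conjugate pair of $CGC^*$ consists precisely of such products, with exponents determined by the Paulis $C^*ZC$ and $C^*XC$. Thus $CGC^*$ is simplified and $G$ is simplifiable. The main obstacle is the commutation-relation calculation in the second paragraph, specifically extracting sufficient rigidity from the scalar identity to conclude that at least one of $\supp(u_p), \supp(v_q)$ must collapse to a single point.
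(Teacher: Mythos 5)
Your proof is correct and follows essentially the same route as the paper's: reduce (up to Clifford conjugation) to the configuration where $\supp(f_U)$ lies on a horizontal line and $\supp(f_V)$ on a vertical line, match Pauli-basis coefficients in the identity $UV=\omega VU$, and observe that the resulting constraint $pq=p_0q_1+1$ forces one of the supports to be a singleton unless both are already parallel. The only differences are cosmetic: the paper dispatches the Pauli case at the outset and organises the horizontal-vs-vertical reduction via the $\cC_2=\cDb_2\cN\,\cM^{-1}$ normal form, whereas you handle the Pauli case inside the parallelism argument and invoke transitivity of $\Sp(1,\Z_d)$ and of $\cDb_2\cN$ directly; the final contradiction is likewise extracted by a trivially different pivot (fixing $q$ and varying $p$ rather than the reverse).
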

\begin{proof}
    Note that if $U$ is a Pauli gate, the claim holds, as we may then conjugate $V$ by a Clifford gate to map it into $\cDb_k\gX$, and this Clifford conjugation sends $U$ to another Pauli gate, which is in $\cDb_k\gX$. By the same reasoning, the claim holds if $V$ is a Pauli gate. We now assume that neither $U$ nor $V$ is a Pauli gate.
    
    By conjugating by an appropriate Clifford gate, we may assume without loss of generality that $U\in \cDb_k\gX$, while $V\in C\cDb_k\gX C^*$ for some Clifford gate $C$. Using our normal form for Clifford gates, either $C\in \cDb_2\cN$ or $C=DPH^*E^c$ for some $DP\in \cDb_2\cN$, $c\in\Z_d$, and $E$ as in Definition \ref{defn: E and M}. Since $\mathcal{D}_2\cN$ normalises $\cDb_k\gX$, if $C\in \cDb_2\cN$ then $V\in \cDb_k\gX$ and we are done. We will show that the other case leads to a contradiction. If $C=DPH^*E^c$, then as conjugating by $E^c$ sends $\cDb_k\gX$ to itself, we have 
    \begin{equation}V\in C\cDb_k\gX C^*=DPH^*\cDb_k\gX(DPH^*)^*.\end{equation}
    Furthermore, as $(DP)^*$ normalises $\cDb_k\gX$, we may conjugate both $U$ and $V$ by $(DP)^*$ to create a conjugate pair with the first gate in $\cDb_k\gX$ and the other in $H^* \cDb_k\gX H$. We now redefine $(U, V)$ to be this conjugate pair. 

    Since $U\in \cDb_k\gX$, the support of $f_U$ is contained in the horizontal line $\{(p,q_0) \ | \ p\in \Z_d\}$ for some $q_0\in \Z_d$. Similarly, as $V\in H^*\cDb_k\gX H$ and conjugation by $H^*$ maps horizontal lines in the phase plane to vertical lines, the support of $f_V$ is contained in the vertical line $\{(p_0, q) \ | \ q\in \Z_d\}$ for some $p_0\in \Z_d$. Using Lemma \ref{lemma: Multiplying f_V functions}, for any $p,q\in \Z_d$ we have
    \begin{align}
        f_{UV}(p,q)=\sum_{i, j\in \Z_d}\omega^{2^{-1}(qi-pj)}f_U(i, j)f_V(p-i, q-j).
    \end{align}
    Since $f_U(i, j)$ vanishes except when $j=q_0$ and $f_V(p-i, q-j)$ vanishes except when $p-i=p_0$, this simplifies to 
    \begin{align}
        f_{UV}(p,q)=\omega^{2^{-1}(q(p-p_0)-pq_0)}f_U(p-p_0, q_0)f_V(p_0, q-q_0).
    \end{align}
    By the same reasoning, we compute that  
    \begin{align}
        f_{VU}(p,q)=\omega^{-2^{-1}(q(p-p_0)-pq_0)}f_U(p-p_0, q_0)f_V(p_0, q-q_0).
    \end{align}
    Then as $UV=\omega VU$, we have $f_{UV}=\omega f_{VU}$. The above equations thus ensure that whenever $f_U(p-p_0, q_0)$ and $f_V(p_0, q-q_0)$ are both nonzero, 
    \begin{align}
        \omega^{2^{-1}(q(p-p_0)-pq_0)}=\omega^{-2^{-1}(q(p-p_0)-pq_0)+1}, 
    \end{align}
    or equivalently, 
    \begin{align}
        pq-pq_0-qp_0-1=0.\label{eq: simultaneous Clifford conjugation of DX gates equation}
    \end{align}
    As $U$ is in a conjugate pair and is not Pauli, Lemma \ref{obs: Pauli condition on f_V functions} ensures that the support of $f_U$ has size at least 2.  We may then choose $p_1\neq p_0$ such that $f_U(p_1-p_0, q_0)\neq 0$. By the same reasoning, since $V$ is not a Pauli gate, we may choose distinct $q_1, q_2$ such that $f_V(p_0, q_1-q_0)$ and $f_V(p_0, q_2-q_0)$ are both nonzero. Then both $(p,q)=(p_1, q_1)$ and $(p,q)=(p_1, q_2)$ must be solutions to  \eqref{eq: simultaneous Clifford conjugation of DX gates equation}. However, for any fixed value of $p$ other than $p=p_0$,  \eqref{eq: simultaneous Clifford conjugation of DX gates equation} is a linear polynomial in $q$, which has exactly one solution. This gives the desired contradiction. 
\end{proof}

We may now prove the main theorem.
\begin{theorem}[Main Theorem]
    For all odd primes $d$, every gate at every level of the single-qudit Clifford hierarchy is semi-Clifford.
    \label{thm: Proof of semi-Clifford conjecture}
\end{theorem}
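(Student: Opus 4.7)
The plan is to prove the theorem by strong induction on the level $k$, using the two Main Lemmas as the engine and Theorem \ref{cor: Pairs (CD_1X^aC^*, CD_2X^bC^*) correspond to semi-Cliffords} (the equivalence of semi-Cliffordness and simplifiability) to convert the output. The base cases $k=1$ and $k=2$ are immediate: any Pauli $\omega^c Z^p X^q$ factors as $(\omega^c I)(Z^p)(X^q)$ with $Z^p \in \cD_1$, and any Clifford $C$ factors as $C \cdot \I \cdot \I$ with $\I \in \cD_2$, so $\cC_1,\cC_2 \subseteq \semc_{\,\cdot}$.

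For the inductive step, I would fix $k \geq 2$, assume $\cC_k = \semc_k$, and take an arbitrary $G \in \cC_{k+1}$. Form its conjugate pair $(U,V) = (GZG^*, GXG^*)$. By Theorem \ref{Ckfromtuples}, $(U,V)$ is $k$-closed, so every $U^p V^q$ lies in $\cC_k$; in particular $U, V, U^2, V^2 \in \cC_k$. Moreover, each of $U$ and $V$ has order exactly $d$: conjugation by $G$ preserves the relations $Z^d = X^d = \I$, and since $(U,V)$ satisfies the Weyl commutation relation, neither is trivial, so the prime $d$ forces their orders to be $d$. The hypotheses of Main Lemma \ref{lemma: Only gates in k-closed pairs are in D_k X} are therefore satisfied for both $U$ and $V$, and applying it (twice) shows that $U$ and $V$ are each individually Clifford-conjugate to elements of $\cDb_k\gX$.

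At this point Main Lemma \ref{lemma: Pair individually conjugate to DX is jointly conjugate} applies directly to the pair $(U,V)$, with no further use of the inductive hypothesis, and yields that $G$ is simplifiable. The equivalence of semi-Cliffordness and simplifiability (Theorem \ref{cor: Pairs (CD_1X^aC^*, CD_2X^bC^*) correspond to semi-Cliffords}) then promotes this to the statement that $G$ is semi-Clifford, completing the inductive step and hence the proof.

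The main conceptual obstacle is not located in the induction itself but in the gap between individual and joint Clifford-conjugation: Main Lemma \ref{lemma: Only gates in k-closed pairs are in D_k X} only produces, for each of $U$ and $V$, a (possibly different) Clifford gate mapping it into $\cDb_k\gX$, whereas semi-Cliffordness via simplifiability requires a single Clifford gate doing both simultaneously. This is exactly what Main Lemma \ref{lemma: Pair individually conjugate to DX is jointly conjugate} provides, and crucially it does so without invoking the inductive hypothesis. That asymmetry (Main Lemma 1 needs $\cC_k = \semc_k$; Main Lemma 2 does not) is precisely what allows the induction to advance cleanly one level at a time rather than collapsing into a simultaneous strengthening.
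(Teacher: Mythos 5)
Your proposal is correct and follows essentially the same route as the paper: induction on $k$, Main Lemma \ref{lemma: Only gates in k-closed pairs are in D_k X} (with the inductive hypothesis) for individual Clifford-conjugacy of $U$ and $V$, Main Lemma \ref{lemma: Pair individually conjugate to DX is jointly conjugate} to upgrade to joint conjugacy, and then Theorem \ref{cor: Pairs (CD_1X^aC^*, CD_2X^bC^*) correspond to semi-Cliffords} to convert simplifiability to semi-Cliffordness. The small cosmetic differences (handling $k=2$ as an extra base case rather than via the inductive step, and explicitly verifying the order-$d$ hypothesis on $U,V$) are harmless and do not change the argument.
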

\begin{proof}
    We claim that for all $k\geq 1$, every gate at level $k$ is semi-Clifford. We proceed by induction on $k$. The base case of $k=1$ holds, as all Pauli gates are semi-Clifford. Assume now that $k\geq 2$ and that the claim holds at level $k-1$. Let $G\in \cC_k$ and $(U, V)$ be the $(k-1)$-closed conjugate pair corresponding to $G$. As $(U, V)$ is $(k-1)$-closed, $U^2$ and $V^2$ are also in $\cC_{k-1}$, so the inductive hypothesis combines with Lemma \ref{lemma: Only gates in k-closed pairs are in D_k X} to ensure that $U$ and $V$ are both individually Clifford-conjugate to elements of $\cDb_{k-1}\langle X\rangle$. Lemma \ref{lemma: Pair individually conjugate to DX is jointly conjugate} then ensures that $(U, V)$ is jointly Clifford-conjugate to a pair of gates in $\cDb_{k-1}\langle X\rangle$. Thus $G$ is simplifiable, and so semi-Clifford by Theorem \ref{cor: Pairs (CD_1X^aC^*, CD_2X^bC^*) correspond to semi-Cliffords}. By induction, every gate in $\cC_k$ is semi-Clifford for $k\geq 1$.
\end{proof}

\subsection{A normal form for Clifford hierarchy gates}\label{Section: Structure of the hierarchy}
Our proof of the conjecture also has several nice consequences for the structure of the hierarchy. First, as the inverse of a semi-Clifford gate in $\cC_k$ is also in $\cC_k$, we have the following corollary.
\begin{cor}
    If $U\in \cC_k$, then $U^*\in\cC_k$.
\end{cor}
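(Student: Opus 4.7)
The plan is to derive this immediately from the main theorem (every one-qudit hierarchy gate is semi-Clifford) together with the fact that both $\cC_2$ and $\cD_k$ are groups.

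First, I would apply Theorem \ref{thm: Proof of semi-Clifford conjecture} to write $U = C_1 D C_2$ with $C_1, C_2 \in \cC_2$ and $D \in \cD_k$. Taking the adjoint yields
\begin{equation}
    U^* = C_2^* D^* C_1^*.
\end{equation}

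Next, I would observe that $C_1^*, C_2^* \in \cC_2$ since the Clifford group is closed under inversion, and that $D^* \in \cD_k$ since $\cD_k$ is defined as a group of diagonal hierarchy gates fixing $\ket{0}$, and $D^*$ is the inverse of $D$ (with $D^*\ket{0} = \ket{0}$ still holding). Consequently, $U^*$ is itself of the semi-Clifford form $C_1' D' C_2'$ with $C_1' = C_2^*, D' = D^*, C_2' = C_1^*$, and so $U^* \in \semc_k \subseteq \cC_k$.

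There is essentially no obstacle: the argument is a direct application of the main theorem combined with the group structure of $\cC_2$ (immediate from its definition as a normaliser) and of $\cD_k$ (Definition \ref{diagkdefn}). The only subtlety worth checking is that $D^* \in \cD_k$, which is immediate because $\cD_k$ consists of one representative per phase class fixing $\ket{0}$ and is closed under inversion.
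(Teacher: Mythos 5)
Your approach is the same as the paper's: apply Theorem~\ref{thm: Proof of semi-Clifford conjecture} to decompose $U$ as $C_1 D C_2$, take adjoints, and use the group structure of $\cC_2$ and $\cD_k$. The idea is entirely right, and the subtlety you flag ($D^* \in \cD_k$) is indeed the correct one to check.

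There is, however, a small circularity in your final step. You conclude ``$U^* \in \semc_k \subseteq \cC_k$,'' but by Definition~\ref{sc} and the definition of $\semc_k$, membership in $\semc_k$ already \emph{presupposes} membership in $\cC_k$ --- which is exactly what you are trying to establish. Exhibiting the decomposition $U^* = C_2^* D^* C_1^*$ does not by itself place $U^*$ in $\semc_k$. The missing ingredient is the closure property the paper recalls in its Background section: for $k \geq 2$, $\cC_2\, \cC_k\, \cC_2 = \cC_k$. Since $D^* \in \cD_k \subseteq \cC_k$ and $C_1^*, C_2^* \in \cC_2$, this closure immediately gives $U^* = C_2^* D^* C_1^* \in \cC_2\, \cC_k\, \cC_2 = \cC_k$. (The case $k=1$ is trivial since $\cC_1$ is a group.) With that one substitution --- invoking the Clifford-closure of $\cC_k$ rather than the definition of $\semc_k$ --- your argument is complete and matches the paper's intent exactly.
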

Thus, while $\cC_k$ is not a group for $k\geq 3$, it is at least closed under inversion for all $k$. 

The fact that every hierarchy gate is semi-Clifford tells us a great deal about the structure of the hierarchy, as we now know that each hierarchy gate has the form $C_1DC_2$. However, the expression $C_1DC_2$ is highly nonunique. It would be useful to have a unique way to write each gate in the hierarchy. In the theorem below, we extend our Clifford normal form from Lemma \ref{lemma: Normal Form for Clifford Gates} to a normal form for all semi-Clifford gates, and hence, for all gates in the hierarchy by Theorem \ref{thm: Proof of semi-Clifford conjecture}. For $k\geq 2$, we let $\cD_k/\cD_2$ denote a fixed set of representatives for the quotient group $\cD_k/\cD_2$, and we may assume that $\I$ is one of these representatives.  Explicit representatives are easily specified using the classification of diagonal hierarchy gates of Cui-Gottesman-Krishna \cite{cui2017diagonal}.

\begin{theorem}[Normal Form for Semi-Clifford Gates]\label{thm: Normal form for semi-Cliffords}
    Let $k\geq 2$. Then any gate $G$ in $\cC_k$ may be written uniquely in one of the following forms:
    \begin{align}
        G=\begin{cases}
            C & G\in \cC_2\\
            MDC & G\not\in \cC_2 \ \ .
        \end{cases}
    \end{align}
    Here, $\cM$ is the finite set of gates of Definition \ref{defn: E and M}, $D\in(\cD_k/\cD_2)\setminus\{\I\}$, and $C\in \cC_2$.
\end{theorem}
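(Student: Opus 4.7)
The plan is to combine the Main Theorem (Theorem \ref{thm: Proof of semi-Clifford conjecture}) with the Clifford normal form (Lemma \ref{lemma: Normal Form for Clifford Gates}) and the coset structure of $\cD_k/\cD_2$. For \emph{existence}, the Main Theorem, together with the qubit case result of \cite{zeng2008semi}, writes any $G\in\cC_k$ as a semi-Clifford $G = C_1 D' C_2$ with $C_1,C_2\in\cC_2$ and $D'\in\cD_k$, after absorbing a phase into $C_2$. Applying the Clifford normal form to $C_1 = M D_1 P$ with $M\in\cM$, $D_1\in\cDb_2$, $P\in\cN$, and using that conjugation by the permutation gate $P$ keeps $D'$ diagonal and (since $P$ is Clifford) inside $\cDb_k$, I regroup as
\begin{equation}
G \;=\; M\,\bigl(D_1 \cdot P D' P^*\bigr)\,(PC_2),
\end{equation}
in which the middle factor $D''$ lies in $\cDb_k$. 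Decomposing $D'' = \alpha D\, D_2$ with $\alpha\in\T$, $D\in\cD_k/\cD_2$ a chosen coset representative, and $D_2\in\cD_2$, and absorbing $\alpha D_2$ into the rightmost Clifford, I obtain $G = MDC$ in the required form. If $G\in\cC_2$, write simply $C=G$; if $G\notin\cC_2$, the decomposition necessarily has $D\neq\I$, for otherwise $G=MC$ would be a product of two Clifford gates.

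The main work is the \emph{uniqueness of $M$} when $G\notin\cC_2$. Suppose $G = MDC = M'D'C'$ are two such decompositions. Consider the conjugate pair $(U,V) = (GZG^*,GXG^*)$; each gate has the form $M\,D\,W\,D^*\,M^*$ for a Pauli $W = CZC^*$ or $W = CXC^*$. Since $D\in\cD_k$, we have $DWD^* \in \cDb_{k-1}\gX$, so by Lemma \ref{lemma: Pauli supports of DX} its Pauli support lies in a horizontal line; conjugation by $M$ carries horizontal lines to parallel lines of a slope determined solely by $M$, using Lemmas \ref{lemma: Clifford-conjugate correspondence in f_M function} and \ref{lemma: M's correspond to slopes}. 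Because $G\notin\cC_2$, at least one of $U,V$ fails to be Pauli, for otherwise $G$ would itself be Clifford; Lemma \ref{obs: Pauli condition on f_V functions} then gives that gate a Pauli support of size at least $2$, which lies in a \emph{unique} line whose slope pins down $M$ in $\cM$ via Lemma \ref{lemma: M's correspond to slopes}. Hence $M=M'$.

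With $M$ fixed, $DC = D'C'$ yields $D^{-1}D' = C(C')^{-1} \in \cD_k \cap \cC_2 = \cD_2$, so $D$ and $D'$ represent the same coset in $\cD_k/\cD_2$; since both are chosen coset representatives, $D=D'$, and then $C=C'$. The hard part is the uniqueness of $M$: it requires marrying the semi-Clifford Pauli-support characterization of Section \ref{Section: semi-cliff via pauli} with the bijection between $\cM$ and the $1$-dimensional linear subspaces of $\Z_d^2$, and it rests on the non-degeneracy observation that at least one of $GZG^*,GXG^*$ has multi-point Pauli support whenever $G$ is not Clifford. Everything else reduces to a routine coset-intersection argument.
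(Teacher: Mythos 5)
Your proof is correct and follows essentially the same approach as the paper: both rely on the Main Theorem for semi-Cliffordness, the Pauli-support characterization (parallel lines, unique slope) to pin down $M$ via Lemma~\ref{lemma: M's correspond to slopes}, the non-degeneracy observation that some element of the conjugate pair has multi-point support when $G\not\in\cC_2$, and a coset-intersection argument using $\cD_k\cap\cC_2=\cD_2$ to nail down $D$ and $C$. The one cosmetic difference is existence: the paper extracts existence and uniqueness of $M$ together from the Pauli-support argument (the unique slope forces $M^*G$ to be simplified, hence of the form $DC$ via Lemma~\ref{lemma: DC iff simplified}), whereas you build the decomposition constructively by applying the Clifford normal form $C_1=MD_1P$ to a semi-Clifford factorization $G=C_1D'C_2$ and regrouping; this is a harmless and arguably slightly more explicit route to the same endpoint.
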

\begin{proof}
    Let $k\geq 2$ and $G\in \cC_k$.  By Theorem \ref{thm: Proof of semi-Clifford conjecture}, $G$ is semi-Clifford. Let $(U, V)$ be the conjugate pair corresponding to $G$. 
    If $U$ and $V$ both have Pauli supports of size 1, Lemma \ref{obs: Pauli condition on f_V functions} implies that $U$ and $V$ are both Pauli, and so that $G$ is a Clifford gate. In this case, we are done. Assume now that $G$ is not a Clifford gate. Then at least one of $U$ and $V$ has a Pauli support of size greater than 1, and we assume without loss of generality that $U$ does. 

    By Corollary \ref{cor: Pauli support of conj. pairs}, as $G$ is semi-Clifford, the Pauli supports of $U$ and $V$ are contained in two parallel lines in $\Z_d^2$. Since $U$ has a Pauli support of size greater than 1, there is then a unique line $L_U$ containing $\supp(f_U)$. By Lemma \ref{lemma: M's correspond to slopes}, there is a unique $M\in \cM$ such that the symplectic transformation $S$ corresponding to $M$ carries horizontal lines to lines parallel to $L_U$. Hence, this is the unique $M$ such that the Pauli supports of $M^*UM$ and $M^*VM$ are contained in two horizontal lines. By Lemma \ref{lemma: DN stabilizes horizontal lines}, this is the unique $M$ such that $M^*G=DC$ for some $D\in \cD_k$ and $C\in \cC_2$.

    Since $G\not\in \cC_2$, we have $D\not \in \cD_2$. We may then write $D$ uniquely as $D_1D_2$, where $D_1\in (\cD_k/\cD_2)\setminus\{\I\}$ and $D_2\in \cD_2$. Then $D_2C=C_2\in \cC_2$ and $DC=D_1C_2$. Note that if we also have $DC=D_1'C_2'$ for some $D_1'\in (\cD_k/\cD_2)\setminus\{\I\}$ and $C_2'\in \cC_2$, then 
    \begin{align}
        (D_1')^*D_1=C_2'C_2^*\in \cC_2,
    \end{align}
    so $(D_1')^*D_1\in \cD_2$. As $\cD_k/\cD_2$ contains a unique representative from each coset of $\cD_2$, it follows that $D_1=D_1'$, and so that $C_2=C_2'$. Thus $G=MD_1C_2$ is in the claimed normal form, and each of $M$, $D_1$, and $C_2$ are unique. 
\end{proof}

If we wished, we could use our Clifford normal form from Lemma \ref{lemma: Normal Form for Clifford Gates} to express the Clifford gate $C$ in our semi-Clifford normal form more explicitly. The normal form thus allows us to understand exactly which gates are at each level of the hierarchy. It also enables us to precisely count the number of gates in $\cC_k$, up to phase.
\begin{cor}\label{cor: size of the hierarchy}
        The number of gates in $\cC_k$, up to phase, is exactly 
    \begin{align}
    |[\cC_k]|=\begin{cases}
        d^2 & k=1,\\
        d^3(d^2-1)(d^{k-1}+d^{k-2}-d) & k\geq 2.
    \end{cases}     
    \end{align}
\end{cor}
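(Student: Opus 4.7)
The plan is to compute $|[\cC_k]|$ by using the unique decomposition provided by the normal form of Theorem \ref{thm: Normal form for semi-Cliffords} to partition $[\cC_k]$ and count each piece. For $k=1$, the Pauli group modulo phase is in bijection with $\Z_d^2$ via $(p,q) \mapsto [W(p,q)]$, so $|[\cC_1]| = d^2$ immediately. For $k \geq 2$, the normal form partitions $[\cC_k]$ into $[\cC_2]$ and a non-Clifford part, every phase class of which has a unique representative of the form $MDC$ with $M \in \cM$, $D \in (\cD_k/\cD_2) \setminus \{\I\}$, and $C \in \cC_2$.

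The first ingredient is $|[\cC_2]|$, which I would read off from the isomorphism $\rho : \Sp(1,\Z_d) \ltimes \Z_d^2 \to [\cC_2]$: since $\Sp(1,\Z_d) = \textnormal{SL}(2,\Z_d)$ has order $d(d^2-1)$, we obtain $|[\cC_2]| = d^3(d^2-1)$. The second, and main, ingredient is the size of $\cD_k$, which I would extract from the Cui-Gottesman-Krishna classification (Theorem \ref{CGK}). Writing $k = (m-1)(d-1) + a$ with $a \in \{1,\dots,d-1\}$, a rank-$k$ polynomial $\phi(z) = \sum_{j=1}^{d-1} \phi_j z^j$ has $\phi_j$ free in $\Z_{d^m}$ for $j \leq a$ and $\phi_j$ constrained to the index-$d$ subgroup $d\Z_{d^m}$ for $j > a$, giving
\begin{equation}
|\cD_k| = d^{ma} \cdot d^{(m-1)(d-1-a)} = d^{a + (m-1)(d-1)} = d^k.
\end{equation}
In particular $|\cD_2| = d^2$, and hence $|(\cD_k/\cD_2) \setminus \{\I\}| = d^{k-2} - 1$ for $k \geq 2$.

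The last step is to verify that the assignment $[G] \mapsto (M, D, [C])$ is a bijection from the non-Clifford part of $[\cC_k]$ onto $\cM \times ((\cD_k/\cD_2) \setminus \{\I\}) \times [\cC_2]$. Uniqueness of $MDC$ gives injectivity; well-definedness on phase classes follows because $\lambda(MDC) = MD(\lambda C)$ is itself already in normal form, so rescaling $G$ by a phase only rescales $C$. With $|\cM| = d+1$ in hand, combining the counts yields
\begin{equation}
|[\cC_k]| = |[\cC_2]|\,\bigl(1 + (d+1)(d^{k-2} - 1)\bigr) = d^3(d^2-1)(d^{k-1} + d^{k-2} - d),
\end{equation}
the last equality being elementary algebra. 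The main obstacle, insofar as there is one, is the $|\cD_k| = d^k$ count: it relies on unpacking the CGK classification and carefully separating the free coordinates of $\phi$ from those constrained mod $d$. Everything else is a direct product-rule calculation once one is careful to pass from the gate-level uniqueness of $MDC$ to uniqueness of the triple $(M, D, [C])$ at the level of phase classes.
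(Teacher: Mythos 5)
Your proof is correct and follows the same route as the paper's: partition $[\cC_k]$ via the normal form of Theorem \ref{thm: Normal form for semi-Cliffords}, count $[\cC_2]$, $\cM$, and $(\cD_k/\cD_2)\setminus\{\I\}$, and multiply. The two places where you add detail are genuine improvements in rigor over the paper's exposition: you derive $|\cD_k| = d^k$ directly from the rank-$k$ polynomial count in Theorem \ref{CGK} (splitting the $a$ free coefficients in $\Z_{d^m}$ from the $d-1-a$ coefficients constrained to $d\Z_{d^m}$), whereas the paper simply cites the figure; and you observe that rescaling $G$ by a phase rescales only the Clifford factor $C$ in the normal form, which is what promotes the gate-level uniqueness of $MDC$ in Theorem \ref{thm: Normal form for semi-Cliffords} to well-definedness and injectivity of $[G] \mapsto (M,D,[C])$ on phase classes — a point the paper asserts (``uniquely up to phase'') without verification.
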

\begin{proof}
    Up to phase, each Pauli gate is of the form $W(p,q)$ for a unique $(p,q)\in \Z_d^2$, so $[\cC_1]$ has size $d^2$. For $k\geq 2$, Theorem \ref{thm: Normal form for semi-Cliffords} reduces the problem of determining the size of $[\cC_k]$ to counting the number of options for the gates $M, D$, and $C$ used in the normal form. By definition, there are $d+1$ possible choices for $M$. The characterisation of $\cD_k$ \cite{cui2017diagonal} shows that $[\cD_k]$ contains $d^k$ elements. It follows that $(\cD_k/\cD_2)\setminus\{\I\}$ contains $d^{k-2}-1$ gates, up to phase, and so there are $d^{k-2}-1$ options for $D$. The order of $[\cC_2]$ equals that of $\Sp(1, \Z_d)\ltimes\Z_d^2$, which from \cite{neuhauser2002explicit} is $d^3(d^2-1)$, giving that many choices for $C$. Since every hierarchy gate may be written uniquely up to phase as either $C$ or $MDC$ by Theorem \ref{thm: Normal form for semi-Cliffords}, it follows that 
    \begin{align}
        |[\cC_k]|&=d^3(d^2-1)+(d+1)(d^{k-2}-1)d^3(d^2-1)\\
        &=d^3(d^2-1)(d^{k-1}+d^{k-2}-d).\qedhere
    \end{align}
\end{proof}

\section{Conclusions}

This work significantly advances the program of classifying gates of the Clifford hierarchy and semi-Clifford gates.  We have solved, in the practically-relevant case of one qubit/qudit, the problem of specifying precisely the gates of the Clifford hierarchy.  Contributions to the classification problem have been made by many researchers, including founding figures of the field of quantum computing, over the past twenty-five years.

In future work, we will extend our classification of two-qubit/qudit hierarchy gates.  

\begin{conj}[Classification of two-qubit/qudit hierarchy gates]Every hierarchy gate of two qudits is either a Clifford gate or can be \emph{uniquely} expressed as $M_1 M_2 D C$.  Here, $M_1$ is $H_1$ or $H_1D_1H_1^*$ and $M_2$ is $H_2$ or $H_1H_2D_2H_2^*H_1^*$ with $H_i$ being the Hadamard gate on the $i^\text{th}$ qudit, $D_1 = (E_1)^a$, $D_2=(CZ)^b(E_2)^c$ for $a,b,c \in \Z_d$.  The gates $D, C$ are as in Theorem \ref{thm: Normal form for semi-Cliffords}.  In the qubit case, $D_1 = (S_1)^a$ and $D_2 = (CZ)^b (S_2)^c$ for $a,b,c \in \Z_2$.
\end{conj}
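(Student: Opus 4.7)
My plan is to generalise the three-stage strategy of this paper. \textbf{Stage one} extends Theorem~\ref{thm: Proof of semi-Clifford conjecture} to two qudits, proving that every two-qudit hierarchy gate is semi-Clifford. I would define a two-qudit gate $G$ to be \emph{simplified} if each member of its conjugate quadruple $(U_1, V_1, U_2, V_2) = (GZ_1G^*, GX_1G^*, GZ_2G^*, GX_2G^*)$ lies in the two-qudit analog of $\cDb_k\gX$, meaning diagonal up to right-multiplication by a Pauli of the form $X_1^a X_2^b$. A two-qudit version of Theorem~\ref{cor: Pauli support of conj. pairs} would state that $G$ is semi-Clifford if and only if the Pauli supports of $U_1, V_1, U_2, V_2$ all lie within parallel cosets of a common $2$-dimensional Lagrangian subspace of $\Z_d^4$. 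I would prove two-qudit analogs of Main Lemmas~\ref{lemma: Only gates in k-closed pairs are in D_k X} and~\ref{lemma: Pair individually conjugate to DX is jointly conjugate} by induction on the hierarchy level: the Sylow-$d$-subgroup argument of Lemma~\ref{lemma: DX^c is Sylow d-subgroup} combined with the inductive hypothesis shows that each $U_i, V_i$ is individually Clifford-conjugate to a simplified gate, and a two-qudit Pauli-support intersection argument then assembles these into a joint Clifford-conjugation.

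\textbf{Stage two} derives the appropriate Clifford normal form. The image of $\cDb_2\cN$ in $\Sp(2, \Z_d)$ is the Siegel parabolic $P$ stabilising the Lagrangian $L_Z = \textnormal{span}(e_1, e_2) \subset \Z_d^4$. The Bruhat decomposition $\Sp(2, \Z_d) = \bigsqcup_{w \in W/W_P} P w P$ has four cells of sizes $1, d, d^2, d^3$, summing to $(d+1)(d^2+1)$, exactly the cardinality of the Lagrangian Grassmannian over $\Z_d^4$. These four cells correspond to the four combinatorial types of the pair $(M_1, M_2)$ in the conjecture: $H_1$ realises the Weyl reflection swapping $Z_1$ with $X_1$, $H_1 H_2$ realises the reflection associated to the long root of the Siegel quotient, and the diagonal Cliffords $(E_1)^a$, $(E_2)^c$, $(CZ)^b$ parameterise the unipotent radicals inside the corresponding parabolic subgroups. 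This would yield a two-qudit Clifford normal form $C = M_1 M_2 D P$ with $M_i$ as in the conjecture, $D \in \cDb_2$, and $P \in \cN$.

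\textbf{Stage three} combines the previous two, following the proof of Theorem~\ref{thm: Normal form for semi-Cliffords}. Given a non-Clifford semi-Clifford gate $G \in \cC_k$, the Pauli supports of its conjugate quadruple lie within parallel translates of a unique Lagrangian $L$, and there is a unique $(M_1, M_2)$ whose symplectic action carries $L_Z$ to $L$; then $(M_1 M_2)^* G = DC$ for some $D \in \cD_k$ and $C \in \cC_2$, and $D$ is refined to a representative in $\cD_k/\cD_2$ after absorbing an element of $\cD_2$ into $C$. The main obstacle throughout will be the two-qudit analog of Main Lemma~\ref{lemma: Pair individually conjugate to DX is jointly conjugate}. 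In the one-qudit proof, the decisive geometric fact is that a horizontal and a vertical line in $\Z_d^2$ meet in a single point, which forces a contradiction when the two gates in a conjugate pair are individually almost diagonal with respect to transversal lines. In two qudits, pairs of transversal Lagrangians meet only at the origin, but their parallel translates can meet in entire lines; the two-qudit analog of equation~\eqref{eq: simultaneous Clifford conjugation of DX gates equation} will correspondingly be an affine condition on $\Z_d^4$ that can admit higher-dimensional spurious solution sets. Overcoming this should require a case analysis stratified by the Bruhat cell of the offending Clifford gate, together with a more careful use of all four Weyl commutation relations among $U_1, V_1, U_2, V_2$.
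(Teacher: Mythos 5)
This statement is a \emph{conjecture} in the paper, not a proved result; the authors explicitly defer it to forthcoming work, so there is no ``paper proof'' to compare against. Your proposal is therefore an attack plan on an open problem, and it must be assessed on its own merits.

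Your three-stage strategy is the natural extension of the paper's one-qudit argument, and your identification of the relevant algebraic structure is accurate. The image of $\cDb_2\cN$ in $\Sp(2,\Z_d)$ is indeed the Siegel parabolic stabilising the Lagrangian $L_Z$, and the cell counts $1, d, d^2, d^3$ summing to $(d+1)(d^2+1)$ correctly match both the cardinality of the Lagrangian Grassmannian and the number of options $(d+1)(d^2+1)$ for the pair $(M_1, M_2)$ in the conjecture. Stage three is likewise the right shape. However, as a candidate proof the proposal has real gaps. First, the inductive core of Stage one is not established: Main Lemma \ref{lemma: Only gates in k-closed pairs are in D_k X} in the one-qudit case hinges on the fact that the offending Clifford $C$ either lies in $\cDb_2\cN$ or its symplectic part moves horizontal lines off themselves, and the line-intersection step then collapses the support to a point. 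In two qudits the stratification of $\Sp(2,\Z_d)$ by $\cDb_2\cN$-double cosets is genuinely four-fold, and you would need a separate argument for each Bruhat cell; you correctly flag this for Main Lemma \ref{lemma: Pair individually conjugate to DX is jointly conjugate} but not for its analogue of Main Lemma \ref{lemma: Only gates in k-closed pairs are in D_k X}, where the same difficulty arises. Second, the uniqueness claim in Stage three is more delicate than you indicate: in one qudit, $\vert\supp f_U\vert\geq 2$ forces a unique line, but in two qudits a single $U_i$ with support of size $\geq 2$ constrains only a one-dimensional slice, which lies on many Lagrangians; you would need to argue that the \emph{union} of the four supports pins down $L$, or else deduce uniqueness of $(M_1, M_2)$ more directly from the group-theoretic normal form rather than from geometry. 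Third, the base case of your induction should leverage the already-known fact (from the paper's reference on two-qudit third-level gates) that $\cC_3 = \semc_3$ for two qudits; starting the induction from $k=1$ without using this result would make the $k=3$ step considerably harder. These are not fatal objections to the strategy, but they mean the proposal is a plausible roadmap rather than a proof.
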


This conjectured classification can be extended to semi-Clifford gates of more than two qubits/qudits.  We intend also to give a complete classification for hierarchy gates of any number of qubits/qudits.

Deeper understanding of the Clifford hierarchy and semi-Clifford gates will lead to progress in error correction and fault-tolerance  and more efficient circuit and gate synthesis (e.g \cite{glaudell2022qutrit, jochym2021four, krishna2019towards, webster2023transversal, yeh2023scaling}).  It is also relevant to the theoretical study of quantum advantage.  Hierarchy gates and their magic states are a key resource for powering quantum computation and hierarchy level provides a natural ranking or measure of their power/complexity which can be in turn correlated with advantage (e.g. \cite{de2024quantum, frembs2025no, frembs2023hierarchies}).  For this reason, the Clifford hierarchy is relevant to many seemingly unconnected topics such as classical simulation \cite{bu2019efficient} and learning circuits \cite{anshu2024survey}.  By clarifying the algebraic structure of magic states, we contribute to the understanding of the resources that power quantum computation.  

We further bolster the viability of qudit-based fault-tolerant universal quantum computers by providing many more pathways towards efficient fault-tolerant qudit quantum computing.  This is practically important as qudit magic state distillation has been proposed as a significantly more efficient alternative to the qubit case \cite{quditmsd}.  The advantages of qudit-based computation \cite{tonchev2016quantum} has led to rapidly accelerating development by experimentalists \cite{chi2022programmable,chizzini2022molecular,karacsony2023efficient,low2020practical,ringbauer2022universal,seifert2022time,wang2018proof}.

\section{Acknowledgments}
ND acknowledges support from the Canada Research Chair program, NSERC Discovery Grant RGPIN-2022-03103, the NSERC-European Commission project FoQaCiA, and the Faculty of Science of Simon Fraser University. OL was supported by NSERC Undergraduate Student Research Awards (USRA). 

\bibliographystyle{abbrv}
{\footnotesize
\bibliography{cliff}}

\end{document}